\newtheorem{condition}{Condition}
\def\hat{\widehat}
\def\Relief{Reliever}
\def\A{\mathcal{A}}
\def\Lcal{\mathcal{L}}
\def\L{\mathcal{L}}
\def\I{\mathcal{I}}
\def\R{\mathcal{R}}
\def\M{\mathcal{M}}
\def\hM{\widehat{\mathcal{M}}}
\def\ebf{\mathbf{e}}
\def\gbf{\mathbf{g}}
\def\rbf{\mathbf{r}}
\def\ubf{\mathbf{u}}
\def\vbf{\mathbf{v}}
\def\xbf{\mathbf{x}}
\def\zbf{\mathbf{z}}
\def\Dbf{\mathbf{D}}
\def\Hbf{\mathbf{H}}
\def\Zbb{\mathbb{Z}}
\def\Rbb{\mathbb{R}}
\def\Sbb{\mathbb{S}}
\def\Ebb{\mathbb{E}}
\def\Pbb{\mathbb{P}}
\def\Bcal{\mathcal{B}}
\def\Fcal{\mathcal{F}}
\def\Gcal{\mathcal{G}}
\def\Hcal{\mathcal{H}}
\def\Kcal{\mathcal{K}}
\def\Ncal{\mathcal{N}}
\def\Scal{\mathcal{S}}
\def\Tcal{\mathcal{T}}
\def\Ucal{\mathcal{U}}
\def\balpha{\boldsymbol{\alpha}}
\def\btheta{\boldsymbol{\theta}}
\def\htheta{\widehat{\btheta}}
\newcommand{\set}[1]{\{#1\}}
\newcommand{\Bigset}[1]{\Bigl\{#1\Bigr\}}
\newcommand{\norm}[1]{\lVert #1\rVert}
\newcommand{\bignorm}[1]{\bigl\lVert #1\bigr\rVert}
\newcommand{\Bignorm}[1]{\Bigl\lVert #1\Bigr\rVert}
\newcommand{\biggnorm}[1]{\biggl\lVert#1\biggr\rVert}
\newcommand{\abs}[1]{|#1|}
\newcommand{\bigabs}[1]{\bigl| #1\bigr|}
\newcommand{\Bigabs}[1]{\Bigl| #1\Bigr|}
\newcommand{\size}[1]{\vert#1\vert}
\DeclareMathOperator*{\argmin}{\mathrm{arg\,min}}
\DeclareMathOperator*{\argmax}{\mathrm{arg\,max}}
\newcommand{\kmin}{\underline{\kappa}}
\DeclareMathOperator*{\Var}{\mathrm{Var}}
\DeclareMathOperator*{\supp}{\mathrm{supp}}
\newcommand{\truecps}{\Tcal^\ast}
\newcommand{\estcps}{\widehat{\Tcal}}
\newcommand{\cpsset}{\Tcal}
\newcommand{\zero}[0]{\boldsymbol{0}}
\newcommand{\id}{\boldsymbol{1}}
\newif\ifcompact\compactfalse
\begin{document}

\title{Reliever: Relieving the Burden of Costly Model Fits for Changepoint Detection}

\author{\name Chengde Qian \email qianchd@gmail.com \\
       \addr School of Mathematical Sciences \\
       Shanghai Jiao Tong University \\
       Shanghai 200240, China
       \AND
       \name Guanghui Wang \email ghwang.nk@gmail.com \\
       \addr School of Statistics and Data Science, LPMC, KLMDASR, and LEBPS \\
       Nankai University \\
       Tianjin 300071, China
       \AND
       \name Changliang Zou \email zoucl@nankai.edu.cn \\
       \addr NITFID, School of Statistics and Data Science, LPMC, KLMDASR, and LEBPS \\
       Nankai University \\
       Tianjin 300071, China
       }

\editor{Ji Zhu}

\maketitle

\begin{abstract}%   <- trailing '%' for backward compatibility of .sty file
Changepoint detection typically relies on a grid-search strategy for optimal data segmentation. When model fitting itself is expensive, repeatedly fitting a model on every candidate segment dominates the computation. Existing approaches mitigate this by pruning the grid, thus reducing the number of segments (and model fits). We propose Reliever, which instead cuts the number of model fits directly and nests seamlessly within standard grid-search routines. Reliever fits a small, deterministic collection of proxy models and reuses them wherever they apply, making it compatible with a wide range of existing algorithms. For high-dimensional regression with changepoints, coupling Reliever with an optimal grid-search method yields changepoint and coefficient estimators that are rate-optimal up to a logarithmic factor. Extensive numerical experiments demonstrate that Reliever rapidly and accurately detects changepoints across a wide range of high-dimensional and nonparametric models.
\end{abstract}

\begin{keywords}
  Binary segmentation; Grid search; High-dimensional regression; Multiple changepoint detection; Optimal partitioning.
\end{keywords}

\section{Introduction}\label{sec:intro}

Changepoint detection serves to identify changes in statistical properties such as mean, variance, slope, or distribution within ordered observations. This technique finds applications in diverse domains including time series analysis, signal processing, finance, neuroscience, and environmental monitoring.

To identify the number and locations of changepoints, a common method is to conduct a \textit{grid search} to optimize data segmentation by minimizing (or maximizing) a specific criterion. This criterion often integrates a sum of segment-wise losses (or gains, respectively) with a penalty for excessive segmentation. Grid-search algorithms are broadly categorized into \textit{optimal} and \textit{greedy} strategies. Optimal strategies use dynamic programming \citep{MR978902,jackson2005algorithm,killick_optimal_2012-1} to find the global minimum, while greedy strategies, such as binary segmentation \citep{fryzlewicz_wild_2014,baranowski_narrowest_2019,kovacs_seeded_2022} and moving windows \citep{Hao+SelenaNiu+Zhang-2013,MR3706768}, iteratively approximate this minimum. These algorithms necessitate repeatedly fitting models and evaluating loss functions across numerous data segments. Table \ref{tab:complexity} outlines the computational complexity of the grid-search step \textit{in isolation}---that is, it treats the required model fits and loss values as if they were already available---so that one can compare how each algorithm scales with the sample size $n$. These algorithms include segment neighborhood \citep[SN,][]{MR978902}, optimal partitioning \citep[OP,][]{jackson2005algorithm}, pruned exact linear time \citep[PELT,][]{killick_optimal_2012-1}, wild binary segmentation \citep[WBS,][]{fryzlewicz_wild_2014}, and seeded binary segmentation \citep[SeedBS,][]{kovacs_seeded_2022}. For an extensive review of grid-search algorithms, please refer to \cite{CHO2021}.

\begin{table}[htbp]
\setlength\tabcolsep{0pt}
\begin{threeparttable}
\caption{\small Computational complexity of grid-search algorithms in isolation and total model-fitting operations, comparing the original implementations and the proposed Reliever implementations. The notation $a_n$ denotes the complexity of fitting a single model on an interval of length $n$. The set $\R$ is the pre-specified, deterministic collection of intervals on which Reliever actually fits models, with its cardinality $|\R|=O(n)$; see Definition~\ref{itv_construction}.}
\label{tab:complexity}
\centering
\begin{tabular*}{.97\linewidth}{c@{\extracolsep{\fill}}*{5}{c}}
\toprule
& \multicolumn{3}{c}{Optimal} & \multicolumn{2}{c}{Greedy}\\
\cline{2-4}
\cline{5-6}
Grid-search algorithm & SN & OP & PELT\tnote{$\dagger$} & WBS & SeedBS\\
\midrule
\multicolumn{6}{l}{Complexity of the grid-search step (in isolation)}\\
& $O(Kn^2)$\tnote{$\ddagger$} & $O(n^2)$ & $O(n)$ & $O(Mn)$\tnote{$\mathsection$} & $O(n\log n)$ \vspace{0.7em}\\
\multicolumn{6}{l}{Total model-fitting operations}\\
Original & $O(n^2a_{n})$ & $O(n^2a_{n})$ & $O(na_{n})$ & $O(Mna_{n})$ & $O\big(n(\log n)a_{n}\big)$\\
Reliever & $O(|\R|a_{n})$ & $O(|\R|a_{n})$ & $O(|\R|a_{n})$ & $O(|\R|a_{n})$ & $O(|\R|a_{n})$\\
\bottomrule
\end{tabular*}
\begin{tablenotes}[flushleft]\footnotesize
    \item[$\dagger$] For cases when the pruning condition is met \citep[][Eq. (4)]{killick_optimal_2012-1}; if pruning fails, PELT reduces to OP.
    \item[$\ddagger$] $K$: user-specified upper bound on the number of changepoints.
    \item[$\mathsection$] $M$: number of random intervals in WBS.
\end{tablenotes}
\end{threeparttable}
\end{table}

To evaluate the loss on any candidate interval $I\subset(0,n]$ with integer endpoints, we must first fit a model $\hM_I$ on that segment. Across the set of candidate intervals determined by the grid-search algorithm, the resulting sequence of model fits $\{\hM_I\}$ often dominates the runtime in modern changepoint procedures, far outweighing both the associated loss evaluations $\{\Lcal(I;\hM_I)\}$ and the modest overhead of iterating through the interval grid. For instance, consider high-dimensional linear models with changepoints, estimated using the lasso \citep{MR3453652,leonardi_computationally_2016,kaul_efficient_2019-1,wang_statistically_2021,xu2022change}. Fitting the lasso on an interval of length $n$ by coordinate descent requires $O(np)$ operations per iteration, so the number of variables $p$ directly drives runtime. If the penalty parameter is selected through cross-validation, the cost of a single model fit increases multiplicatively. Additionally, unlike classical mean-change models---where the sample mean can be updated incrementally \citep{MR978902}---high-dimensional fits cannot be adjusted cheaply when observations are added or removed.
Consequently, the sequence of model fits dominates the overall complexity; see Table \ref{tab:complexity}. Similar computational bottlenecks arise in changepoint models that incorporate graphical structures \citep{MR4313475}, vector autoregressive dynamics \citep{MR4399083,doi:10.1080/01621459.2022.2079514}, network topologies \citep{MR4206675}, nonparametric frameworks \citep{MR3210993,MR4515551,MR4567802}, and mechanisms for handling missing data \citep{MR4460584}.

\subsection{Our Idea}\label{sec:main_idea}

Our approach---\textit{Reliever}---operates as follows. For each candidate interval $I\subset(0,n]$, a standard grid-search algorithm $\A$ involves fitting an interval-specific model $\widehat{M}_I$ and evaluating the loss $\mathcal{L}(I; \widehat{M}_I)$. Reliever replaces this costly step with a proxy fit: we pair $I$ with an interval $R_I$, chosen from a pre-specified deterministic collection $\R$ (see Definition \ref{itv_construction}). We fit the model $\widehat{M}_{R_I}$ on $R_I$ and evaluate the loss on the target interval $I$ via $\mathcal{L}(I; \widehat{M}_{R_I})$. This substitution continues until the algorithm $\A$ has visited every candidate interval. Figure \ref{fig:reliever} illustrates the procedure.

\begin{figure}[htb]
\centering
\includegraphics[width=0.96\linewidth]{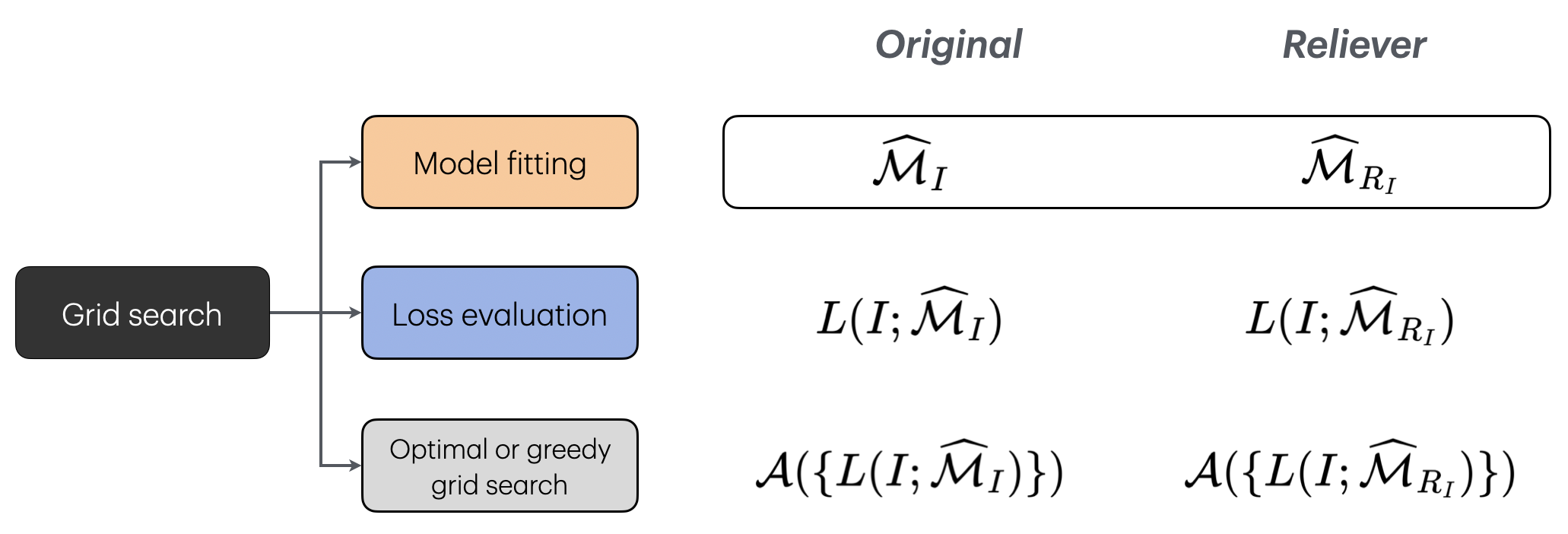}
\caption{Workflow comparison between a standard grid-search algorithm and the same algorithm equipped with Reliever.
}
\label{fig:reliever}
\end{figure}

The collection $\R$ is intentionally small---$|\R|=O(n)$ in our construction---so only $O(|\R|)$ actual model fits are required overall. Consequently, the total fitting cost drops to at most $O(na_n)$, where $a_n$ denotes the operations needed to fit a single model on an interval of length $n$; see Table \ref{tab:complexity}. In practice the cost is often lower because the algorithm $\A$ may visit only a subset of intervals in $\R$. Besides controlling the size of $\R$, each interval $R_I$ is selected such that $R_I\subset I$ and the reminder $I \setminus R_I$ is short. With this design, replacing the original loss sequence $\{\Lcal(I;\hM_I)\}$ by its proxy counterpart $\{\Lcal(I;\hM_{R_I})\}$ in the grid-search algorithm would preserve changepoint-detection accuracy, regardless of how many changepoints lie within any individual interval $I$.

To illustrate the benefits of Reliever, we consider a high-dimensional linear model with multiple changepoints (see Section \ref{subsec:hdlinear}), using $n=600$ observations and $p=100$ variables. Figure \ref{fig:comp_time}(a) compares the average computational time spent on model fits (including loss evaluations) with and without Reliever, alongside the average time spent solely on the grid search for each algorithm. Clearly, the primary computational burden arises from model fits, and employing Reliever substantially reduces this burden. Figure \ref{fig:comp_time}(b) further shows the reduction in the average number of model fits required along the search path. Finally, Figure \ref{fig:comp_time}(c) presents a boxplot of changepoint detection error, measured by the Hausdorff distance (see Section \ref{sec:sim}), confirming that Reliever significantly reduces computational cost without sacrificing detection accuracy.

\begin{figure}[htbp]
\centering
\includegraphics[width=0.99\linewidth]{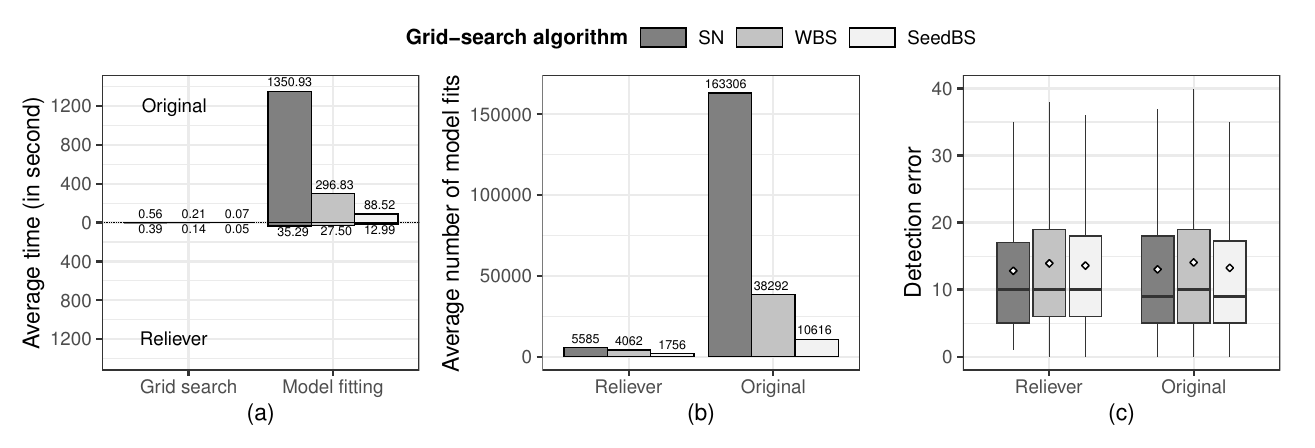}
\caption{Runtime savings and accuracy retention provided by Reliever in a high-dimensional linear model. (a) Average time spent on grid search in isolation and on model fits for the SN, WBS, and SeedBS algorithms, with and without Reliever. (b) Average number of model fits executed along each search path. (c) Changepoint detection error (Hausdorff distance); circles mark mean values.}
\label{fig:comp_time}
\end{figure}

\subsection{Our Contributions}\label{subsec:contribution}

We introduce Reliever, a highly flexible framework that speeds up changepoint detection whenever model fitting is the main cost. While earlier work cuts runtime by reducing the number of intervals that a grid-search algorithm visits, Reliever follows a complementary route: it fits models on only a pre-specified deterministic set of $O(n)$ intervals and reuses those fits wherever possible. Because the grid-search logic itself is left untouched, Reliever can be dropped straight into common optimal and greedy algorithms like SN, OP, PELT, WBS and SeedBS.

This sharp cut in model fits greatly shortens running time in both high-dimensional and nonparametric settings, while maintaining detection accuracy.
In the context of high-dimensional linear models with multiple changepoints---a topic that has garnered significant research interest---we demonstrate that Reliever combined with the OP algorithm \citep[for example,][]{leonardi_computationally_2016}, produces estimators for both changepoints and corresponding regression coefficients that are rate-optimal, up to a logarithmic factor.

\subsection{Related Works}\label{subsec:literature}

\noindent\textbf{Comparison with grid-pruning methods}. The collection of pre-specified and deterministic intervals used by Reliever resembles the \textit{seeded intervals} in SeedBS \citep{kovacs_seeded_2022}. These two acceleration ideas, however, serve different aims. \cite{kovacs_seeded_2022} refined WBS by replacing its random intervals with seeded intervals, targeting near-linear scaling of the grid search relative to the sample size (as outlined in Table \ref{tab:complexity}). A similar approach, involving the use of deterministic intervals for constructing scan statistics, is explored in \citet{MR3076173}. In contrast, our approach retains existing grid-search schemes but re-uses a proxy model fitted on one deterministic interval whenever that model is relevant to another data segment. This strategy allows Reliever to adapt to any grid-search algorithm, including SeedBS and WBS. Because the two kinds of deterministic intervals target different goals, their construction principles also differ; see Remark \ref{rmk:seedbs}.

\noindent\textbf{Comparison with two-step methods}. Our method's strategy to reduce intensive model fitting relates to two-step procedures that use a preliminary set of changepoint candidates. In the context of high-dimensional linear models with a single changepoint, \cite{kaul_efficient_2019-1} proposed an approach involving initial fitting of two regression models, one for data before and another after an initial changepoint estimator, followed by searching for the best split to minimize training error. To achieve nearly-optimal convergence rates for the resulting estimator, the initial estimator must be consistent.
For multiple changepoint scenarios, \cite{kaul_detection_2019-1} extended this approach by incorporating multiple initial candidates and employing a simulated annealing algorithm to allocate available model fits. This method presupposes proximities of all true changepoints to some of the initial candidates.
\cite{Cho+Owens-2023a} uses moving window on a coarse grid to scan for initial changepoint candidates and then refine their locations, whereas \cite{li2023divide} employs dynamic programming on a coarse grid to obtain initial candidates before refinement.
In the context of univariate mean change models, \cite{lu_intelligent_2020} introduced a method that leverages a sparse subsample to derive pilot changepoint estimators; for these pilot estimators to yield optimal changepoint estimators, they must accurately reflect both the number and locations of the changepoints.
In contrast, our Reliever framework does not rely on consistent initial estimators. It offers broad applicability and can be integrated as a foundational component in a variety of existing changepoint detection algorithms.

\subsection{Notation}\label{subsec:notation}

The $L_q$ norm of a vector $\zbf \in \Rbb^p$ is defined by $\norm{\zbf}_{q} = (\sum_{j=1}^p z_j^q)^{1/q}$. For a $p$-by-$p$ positive semi-definite matrix $\mathbf{A}$, we denote $\norm{\zbf}_{\mathbf{A}} = (\zbf^\top \mathbf{A} \zbf)^{1/2}$. The sub-Gaussian norm of a sub-Gaussian random variable $X$ is $\|X\|_{\Psi_2}=\inf\{t>0: \Ebb\{\exp(X^2/t^2)\}\le 2\}$.
The sub-Exponential norm of a sub-Exponential random variable $X$ is defined as $\norm{X}_{\Psi_1} = \inf\set{t > 0: \Ebb\exp(|X|/t) \le 2}$.
For a vector $\bm X\in\mathbb{R}^p$, define $\|\bm X\|_{\Psi_j}=\sup_{\bm v\in\mathbb{S}^{p-1}}\|\bm v^\top \bm X\|_{\Psi_j}$, where $\mathbb{S}^{p-1}$ is the unit sphere in $\Rbb^p$ and $j = 1, 2$.

\section{Methodology}\label{sec:method}

In this section, we first describe the general multiple changepoint models and algorithms with examples. Then we formally introduce the construction of the Reliever procedure.

\subsection{Changepoint Models and Grid-Search Algorithms}\label{subsec:model}

Consider a dataset $\set{\zbf_i}_{i=1}^n$ from a multiple changepoint model
\begin{equation}\label{MCP}
    \zbf_i \sim \M_k^\ast,\ \tau_{k-1}^\ast < i \le \tau_k^\ast,\ k=1,\ldots,K^\ast+1;\ i=1,\ldots,n,
\end{equation}
where $K^\ast$ and $\{\tau_k^\ast\}$ denote the number and locations of changepoints, respectively, with $\tau_0^\ast = 0$ and $\tau_{K^\ast+1}^\ast = n$. The notations $\{\M_k^\ast\}$ represent the models governing each data segment, ensuring that $\M_{k-1}^\ast\ne \M_k^\ast$. These models may describe nonparametric distributions of $\{\zbf_i\}$ or specific parametric forms with parameters $\{\btheta_k^\ast\}$, where $\btheta_{k-1}^\ast\ne \btheta_k^\ast$. For specific instances of these models, please refer to Examples \ref{ex:convex}--\ref{ex:nonpara}.

Changepoint detection typically proceeds through a grid-search process, involving a model fitting procedure, a loss function to evaluate fit quality, and a grid-search algorithm to determine the optimal segmentation, as illustrated in Figure \ref{fig:reliever}. For a candidate interval $I\subset(0,n]$, a model fitting procedure yields a fitted model $\hM_I$ (or $\widehat{\btheta}_{I}$ in parametric scenarios) based on the data segment $\{\zbf_i:i\in I\}$. The quality of this fit is evaluated using a loss function $\Lcal(I;\hM_I)$ (or $\Lcal(I;\widehat{\btheta}_{I})$ for parametric models).

\begin{example}[Parametric models, convex M-estimation]\label{ex:convex}
Consider the general parametric changepoint models within the framework of (\ref{MCP}), where each $\zbf_i\in\Rbb^p$ and
\begin{equation*}
    \zbf_i\ \text{has distribution}\ P_{\btheta_k^\ast},\ \tau_{k-1}^\ast < i \le \tau_k^\ast,\ k=1,\ldots,K^\ast+1;\ i=1,\ldots,n.
\end{equation*}
In scenarios with small $p$ and large $n$, one uses M-estimation for model fitting, which yields $\hat{\btheta}_I = \argmin_{\btheta \in \Theta} \sum_{i \in I} \ell(\zbf_i, \btheta)$, where $\ell(\zbf,\btheta)$ is a convex function with respect to $\btheta \in \Theta \subset \Rbb^p$. The loss evaluation is defined as $\Lcal(I;\hat{\btheta}_I)=\sum_{i \in I} \ell(\zbf_i, \hat{\btheta}_I)$. Employing convex losses, such as the absolute deviation or the Huber loss, is particularly effective in managing heavy-tailed observations or outliers for changepoint detection \citep{MR3941246}.
\end{example}

\begin{example}[High-dimensional linear models, lasso]\label{ex:lin_reg}
In (\ref{MCP}), each sample pair $\zbf_i=(y_i,\xbf_i)$ consists of a response $y_i \in \Rbb$ and covariates $\xbf_i \in \Rbb^p$, modeled by
\begin{equation}\label{MCP:LM}
    y_i = \xbf_i^\top \btheta_k^\ast + \epsilon_i,\ \tau_{k-1}^\ast < i \le \tau_k^\ast,\ k=1,\ldots,K^\ast+1;\ i=1,\ldots,n,
\end{equation}
where $\{\btheta_k^\ast\}$ are regression coefficients and $\{\epsilon_i\}$ denote random noises.
In high-dimensional settings where both $p$ and $n$ are large, lasso is used for model fitting, that is,
$$\hat{\btheta}_I = \argmin_{\btheta \in \Rbb^p}\{\sum_{i\in I} (y_i - \xbf_i^\top \btheta)^2 + \lambda_I \norm{\btheta}_1\},$$
with $\lambda_I$ as a tuning parameter. The loss evaluation function is specified as $\Lcal(I; \hat{\btheta}_I) = \sum_{i\in I} (y_i - \xbf_i^\top \hat{\btheta}_I)^2$. Detecting changes in high-dimensional linear models has recently garnered considerable attention; see, for example, \cite{leonardi_computationally_2016}, \cite{rinaldo_localizing_2021}, \cite{wang_statistically_2021} and \cite{xu2022change}.
\end{example}

\begin{example}[Nonparametric distributions]\label{ex:nonpara}
Within the framework of (\ref{MCP}), consider the samples $\zbf_i\equiv z_i\in\Rbb$ and
\begin{equation*}
    z_i\ \text{has a distribution function}\ F_k^\ast,\ \tau_{k-1}^\ast < i \le \tau_k^\ast,\ k=1,\ldots,K^\ast+1;\ i=1,\ldots,n.
\end{equation*}
Model fitting is performed using the empirical distribution function $\{\hat{F}_I(t):t\in\mathbb{R}\}$ for the data subset $\{z_i:i\in I\}$. The loss evaluation function, proposed by \cite{MR3210993}, is the negative of an integrated nonparametric maximum log-likelihood.
\end{example}

A grid-search algorithm is then employed to minimize a specific criterion over all possible segmented data sequences. This criterion typically comprises the sum of losses evaluated for each segment, along with a penalty that accounts for the complexity of the segmentation. To be specific, let $\mathcal{T}_K(\delta_{\mathsf{m}})=\{(\tau_1,\ldots,\tau_K):0\equiv\tau_0<\tau_1<\cdots<\tau_K<\tau_{K+1}\equiv n, \tau_{k+1}-\tau_k\ge \delta_{\mathsf{m}},\ k=0,\ldots,K\}$ be the set of $K$ candidate changepoints, where $\delta_{\mathsf{m}}>0$ is the minimal-spacing parameter; see Remark \ref{rmk:OnImplementations}. For $(\tau_1,\ldots,\tau_K)\in\mathcal{T}_K(\delta_{\mathsf{m}})$ that partitions the data into $K+1$ segments, the criterion is generally formulated as
\begin{align}\label{criterion}
    \sum_{k=1}^{K+1} \Lcal((\tau_{k-1},\tau_k];\hM_{k}) + \gamma K,
\end{align}
where $\gamma\ge 0$ controls the level of penalization to avoid overfitting. Optimal-kind algorithms (for example, SN, OP, or PELT, see Section \ref{sec:intro}) aim to find the exact minimizer over the entire search space $\mathcal{T}_K(\delta_{\mathsf{m}})$. This involves evaluating a sequence of losses (and fitting the corresponding models) for all $O(n^2)$ intervals $I\subset(0,n]$ satisfying $\size{I}\ge \delta_{\mathsf{m}}$, sequentially explored using a dynamic programming scheme. Although PELT uses a pruning strategy to skip certain intervals and thus reduces this complexity to $O(n)$, this reduction does not always apply (see Eq. (4) in \cite{killick_optimal_2012-1}). In contrast, greedy-kind algorithms, such as binary segmentation (BS), WBS, narrowest-over-threshold, or SeedBS, consider only a subset of these intervals in a sequential and greedy manner, aiming to reach a local minimizer. To illustrate, consider the BS algorithm. This algorithm begins by solving (\ref{criterion}) with $K=1$, which involves approximately $O(n)$ intervals. The resulting changepoint divides the data sequence into two segments. The algorithm then repeats the same procedure within each segment to identify new changepoints. This iterative process continues until a segment contains fewer observations than $\delta_{\mathsf{m}}$ or until a stopping rule is triggered. Overall, BS involves approximately $O(n\log n)$ intervals.
{Throughout, we regard any interval $I=(a,b]$ with integers $0 \le a < b \le n$ as a \textit{search interval} and collect all such candidates in the set $\I = \set{I: I\subset (0,n]}$. Given the collection of loss values $\{\Lcal(I;\hM_I):I\in\I\}$, the grid-search algorithm can be regarded as the operator $\A = \A(\{\Lcal(I;\hM_I):I\in\I\})$, which maps these evaluations to a final segmentation. In fact, $\A$ inspects only a subset of $\I$---either because intervals shorter than a minimal-spacing parameter $\delta_{\mathsf{m}}$ are excluded or because greedy strategies (for example, BS) deliberately restrict the search. Whenever it is necessary to distinguish between the two, we denote this subset actually explored by the algorithm by $\I_\A\subset\I$.}

The grid-search process becomes computationally demanding when model fits along the search path, $\{\hM_I:I\in\I_{\A}\}$, becomes costly. This is particularly evident in scenarios like those described in Examples \ref{ex:convex}--\ref{ex:nonpara}, where a single model fit requires substantial computational effort, and updating neighboring model fits by adding or removing observations remains elusive.

\subsection{Relief Intervals}\label{subsec:intervals}

Our approach is straightforward yet highly adaptable, and it integrates seamlessly with any grid-search algorithm $\A$. We begin by constructing a set of deterministic intervals $\R$. During the search process, for a search interval $I\in\I$, a proxy or \textit{Relief} model, $\hM_{R_I}$, fitted using data from an interval $R_I\in\R$, replaces $\hM_I$ when evaluating the loss $\Lcal(I;\hM_I)$.
Each interval $R_I\in\R$ is referred to as a \textit{relief interval} to distinguish it from a search interval $I$.
It is possible for multiple search intervals to correspond to a single relief interval, and not all relief intervals may be visited during the search.
The notation $R_I$ indicates that the selection of a relief interval is dependent on the current search interval $I$. For simplicity, we will use $R$ interchangeably with $R_I$, which should not lead to confusion. The key to the construction of $\R$ lies in satisfying two properties.
First, it significantly reduces the number of intervals for which a sequence of models needs to be fitted, as opposed to fitting models for every search interval.
Second, it ensures that the losses computed using the relief models exhibit behaviors similar to those computed with the original models, thereby allowing for consistent changepoint detection.

\begin{definition}[Relief intervals]
\label{itv_construction}
Let $\delta_{\mathsf{m}} > 0$ be the minimal-spacing parameter. Define $0 < w \le 1$ as the \textit{wriggle} parameter and $b > 1$ as the \textit{growth} parameter. For each $0 \le k \le \lfloor \log_{b} \{(1 + w) n/\delta_{\mathsf{m}}\} \rfloor$, construct the $k$th layer of relief intervals, consisting of $n_k$ intervals of length $\ell_k$, evenly shifted by $s_k$, that is, $\R_k = \set{(qs_k, qs_k+\ell_k] + a_k: 0 \le q \le n_k}$, where $\ell_k = b^k \delta_{\mathsf{m}}/(1+w)$, $s_k = w \ell_k$, $n_k= \lfloor(n - \ell_k)/s_k\rfloor$, and $a_k = n/2 - (\ell_k + n_k s_k)/2$ is an adjustment factor to center the intervals in $\R_k$ around $n/2$. The complete set of relief intervals is $\R = \bigcup_{k=0}^{\lfloor \log_{b} \{(1 + w) n/\delta_{\mathsf{m}}\} \rfloor} \R_k$.
\end{definition}

\begin{figure}[htbp]
    \centering
    \begin{tikzpicture}[scale=0.07]
    \pgfmathsetmacro{\ticker}{0.5}
    \draw[->](0,0)--(200,0);
    \foreach \i/\texti  in {40,80,...,160} {
        \draw (\i,-\ticker) --(\i,\ticker);
    }
    \foreach \i/\texti  in {20,60,...,180} {
        \draw (\i,-\ticker) --(\i,\ticker) node[label=below:\texti]{};
    }
    \draw[-](0,3)--(32,3);
    \draw[-](8,4)--(40,4);
    \draw[-](16,5)--(48,5);
    \draw[-](24,6)--(56,6);
    \draw[-](32,7)--(64,7);
    \draw[-](40,3)--(72,3);
    \draw[-](48,4)--(80,4);
    \draw[-](56,5)--(88,5);
    \draw[-](64,6)--(96,6);
    \draw[-](72,7)--(104,7);
    \draw[-](80,3)--(112,3);
    \draw[-](88,4)--(120,4);
    \draw[-](96,5)--(128,5);
    \draw[-](104,6)--(136,6);
    \draw[-](112,7)--(144,7);
    \draw[-](120,3)--(152,3);
    \draw[-](128,4)--(160,4);
    \draw[-](136,5)--(168,5);
    \draw[-](144,6)--(176,6);
    \draw[-](152,7)--(184,7);
    \draw[-](160,3)--(192,3);
    \draw[-](168,4)--(200,4);
    \draw[densely dotted](0,10.5)--(40,10.5);
    \draw[densely dotted](10,11.5)--(50,11.5);
    \draw[densely dotted](20,12.5)--(60,12.5);
    \draw[densely dotted](30,13.5)--(70,13.5);
    \draw[densely dotted](40,14.5)--(80,14.5);
    \draw[densely dotted](50,10.5)--(90,10.5);
    \draw[densely dotted](60,11.5)--(100,11.5);
    \draw[densely dotted](70,12.5)--(110,12.5);
    \draw[densely dotted](80,13.5)--(120,13.5);
    \draw[densely dotted](90,14.5)--(130,14.5);
    \draw[densely dotted](100,10.5)--(140,10.5);
    \draw[densely dotted](110,11.5)--(150,11.5);
    \draw[densely dotted](120,12.5)--(160,12.5);
    \draw[densely dotted](130,13.5)--(170,13.5);
    \draw[densely dotted](140,14.5)--(180,14.5);
    \draw[densely dotted](150,10.5)--(190,10.5);
    \draw[densely dotted](160,11.5)--(200,11.5);
    \draw[-](0,18)--(50,18);
    \draw[-](12.5,19)--(62.5,19);
    \draw[-](25,20)--(75,20);
    \draw[-](37.5,21)--(87.5,21);
    \draw[-](50,22)--(100,22);
    \draw[-](62.5,18)--(112.5,18);
    \draw[-](75,19)--(125,19);
    \draw[-](87.5,20)--(137.5,20);
    \draw[-](100,21)--(150,21);
    \draw[-](112.5,22)--(162.5,22);
    \draw[-](125,18)--(175,18);
    \draw[-](137.5,19)--(187.5,19);
    \draw[-](150,20)--(200,20);
    \draw[densely dotted](6.25,25.5)--(68.75,25.5);
    \draw[densely dotted](21.875,26.5)--(84.375,26.5);
    \draw[densely dotted](37.5,27.5)--(100,27.5);
    \draw[densely dotted](53.125,28.5)--(115.625,28.5);
    \draw[densely dotted](68.75,29.5)--(131.25,29.5);
    \draw[densely dotted](84.375,25.5)--(146.875,25.5);
    \draw[densely dotted](100,26.5)--(162.5,26.5);
    \draw[densely dotted](115.625,27.5)--(178.125,27.5);
    \draw[densely dotted](131.25,28.5)--(193.75,28.5);
    \draw[-](2.34375,33)--(80.46875,33);
    \draw[-](21.875,34)--(100,34);
    \draw[-](41.40625,35)--(119.53125,35);
    \draw[-](60.9375,36)--(139.0625,36);
    \draw[-](80.46875,37)--(158.59375,37);
    \draw[-](100,33)--(178.125,33);
    \draw[-](119.53125,34)--(197.65625,34);
    \draw[densely dotted](2.34375,40.5)--(100,40.5);
    \draw[densely dotted](26.7578125,41.5)--(124.4140625,41.5);
    \draw[densely dotted](51.171875,42.5)--(148.828125,42.5);
    \draw[densely dotted](75.5859375,43.5)--(173.2421875,43.5);
    \draw[densely dotted](100,44.5)--(197.65625,44.5);
    \draw[-](8.447265625,48)--(130.517578125,48);
    \draw[-](38.96484375,49)--(161.03515625,49);
    \draw[-](69.482421875,50)--(191.552734375,50);
    \draw[densely dotted](4.632568359375,53.5)--(157.220458984375,53.5);
    \draw[densely dotted](42.779541015625,54.5)--(195.367431640625,54.5);
    \draw[-](4.632568359375,58)--(195.367431640625,58);
\end{tikzpicture}
\caption{Illustration of relief intervals with $n = 200, \delta_{\mathsf{m}}=50, w = 0.25$ and $b = 1.25$.}
\label{fig:itv_25}
\end{figure}

The rationale behind the construction of relief intervals is to ensure that for any search interval $I\in\I$ with $\size{I}\ge\delta_{\mathsf{m}}$, there is always a relief interval $R\in\R$ such that $R\subset I$ and $\size{R}/\size{I}$ is maximized.
We define the \textit{coverage ratio} as
$$r = \min_{I\in\I: \size{I}\ge\delta_{\mathsf{m}}} \max_{R\in\R; R\subset I} \frac{\size{R}}{\size{I}},\qquad 0<r\le 1.$$

\begin{proposition}\label{prop:relief_interval}
(i) In general, $\size{\R} \le c_{w,b} n/\delta_{\mathsf{m}}$ and $r\ge\{(1+w)b\}^{-1}$, where $c_{w,b}=\{(1+w)b\}/\{w(b-1)\}$.
(ii) {Setting $(1+w)=b=r^{-\frac{1}{2}}$, $\size{\R} = \{n (r^{-\frac{1}{2}} - 1)^2\} / (\delta_{\mathsf{m}} r)$. Additionally, if $\delta_{\mathsf{m}}$ and $r \in (0, 1)$ are fixed constants,} then $\size{\R} = O(n)$.
(iii) If $\delta_{\mathsf{m}} = C\log n$ for some constant $C>0$ and $w = b - 1 = \delta_{\mathsf{m}}^{-\frac{1}{2}}$, then $\size{\R}\le n\{1 + (C \log n)^{-\frac{1}{2}}\}^2 = O(n)$ and $r\ge\{1 + (C \log n)^{-\frac{1}{2}}\}^{-2}\approx 1 - 2(C \log n)^{-\frac{1}{2}}$.
\end{proposition}

Proposition \ref{prop:relief_interval} demonstrates that, by selecting appropriate wriggle and growth parameters, alongside a minimal-spacing parameter, the number of relief intervals approaches linearity with the sample size $n$ while achieving a nearly perfect coverage ratio. To facilitate practical applications, setting a single coverage ratio parameter $r\in(0, 1)$ is sufficient, with $1+w=b=r^{-\frac{1}{2}}$. {Figure \ref{fig:itv_25} illustrates the construction of relief intervals with $n = 200$, $\delta_{\mathsf{m}}=50$, $w = 0.25$, and $b = 1.25$ (corresponding to $r = 0.64$).} The parameter $r$ balances computational complexity and estimation accuracy; see Remark \ref{rmk:tradeoff}. Table \ref{tab:num_interval} reports, for $n=1200$ and $\delta_{\mathsf{m}}=30$, the number of search intervals examined by the original SN implementation ($r=1$; that is, $|\I_\A|$ for SN), as well as the corresponding number of relief intervals obtained for various coverage ratios of $r$.

\begin{table}[htb]
\setlength\tabcolsep{0pt}
\centering
\begin{threeparttable}
\caption{\small Number of model fits required the SN algorithm: Reliever for various coverage ratios $r$ versus the original implementation ($r=1$).}
\label{tab:num_interval}
\tabcolsep=0.44em
\begin{tabular*}{.97\linewidth}{c@{\extracolsep{\fill}}*{9}{c}}
\toprule
$r$ & 0.5 & 0.6 & 0.7 & 0.8 & 0.9 & 0.95 & 0.97 & 0.99 & 1\\
\midrule
Model fits & 440 & 762 & 1,298 & 2,744 & 12,227 & 31,699 & 57,522 & 196,395 & 686,206 \\
\bottomrule
\end{tabular*}
\end{threeparttable}
\end{table}

\begin{remark}[Comparison with seeded intervals]\label{rmk:seedbs}
The deterministic nature of our relief intervals is conceptually inspired by the seeded intervals in SeedBS \citep{kovacs_seeded_2022}, yet they diverge significantly in their design principles due to differing objectives. The seeded intervals were developed to replace random (wild) intervals in WBS, focusing on shorter intervals that typically contain a single changepoint to reduce the occurrence of longer intervals that may encompass multiple changepoints. In contrast, the relief intervals are designed to ensure that each search interval closely matches a relief interval of similar length, thereby producing comparable loss values. Our approach is compatible with various grid-search algorithms, including WBS and SeedBS.
A natural question arises: can seeded intervals serve as proxy intervals in place of relief intervals within the proposed Reliever framework? Figure \ref{fig:seeded}(a) displays the coverage ratio $r$ against the number of proxy intervals, revealing that starting at $r \approx 0.5$, our construction achieves a higher coverage ratio with the same number of intervals. Notably, using seeded intervals limits the coverage ratio to approximately $0.68$, even with an increased number of intervals. Figure \ref{fig:seeded}(b) compares the changepoint detection error between the two constructions, both utilizing $1,000$ proxy intervals, across various grid-search algorithms, under a high-dimensional linear model with multiple changepoints (as described in Section \ref{subsec:hdlinear}). This comparison demonstrates that our construction, with its higher coverage, typically yields better detection accuracy.
\end{remark}

\begin{figure}[!ht]
    \centering
    \begin{subfigure}{0.46\textwidth}
    \centering
    \includegraphics[width=.99\linewidth]{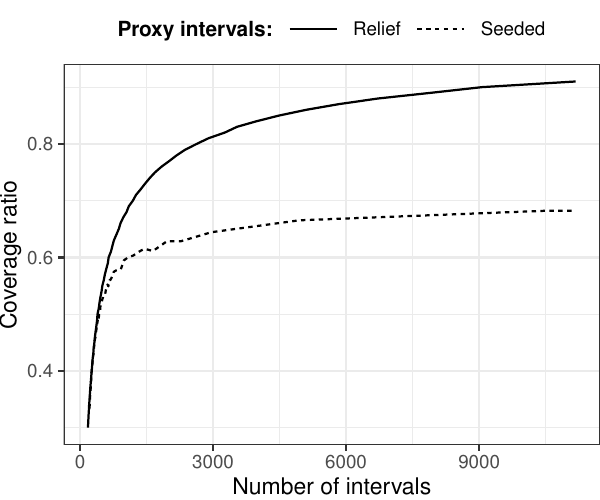}
    \caption{}
    \end{subfigure}
    \begin{subfigure}{0.46\textwidth}
    \centering
    \includegraphics[width=0.99\linewidth]{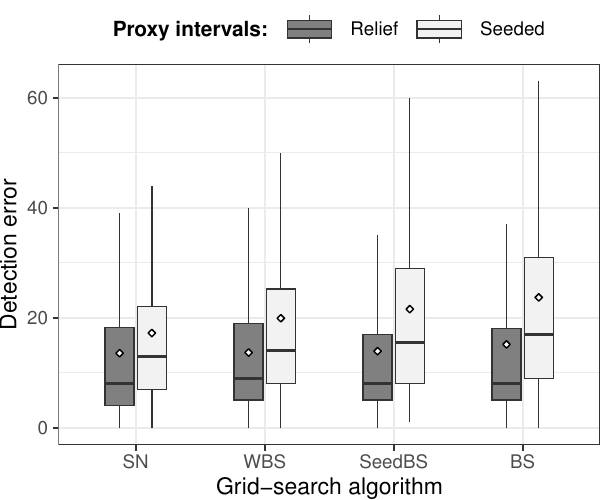}
     \caption{}
    \end{subfigure}
    \caption{Comparison of coverage ratio and changepoint detection error for two proxy interval strategies within the Reliever framework.}
    \label{fig:seeded}
\end{figure}

\subsection{The Reliever Procedure}\label{subsec:relief}

\begin{enumerate}[label=(\arabic*)]
    \item Require a gird search algorithm $\A=\A(\{\Lcal(I;\hM_I):I\in\I\})$ with a minimal-spacing parameter $\delta_{\mathsf{m}}\ge 0$ and a model-fitting procedure $\hM_I$ for any interval $I$ such that $\size{I}\ge\delta_{\mathsf{m}}$, and a coverage ratio parameter $r \in (0, 1]$;
    \item Create relief intervals $\R$ with the wriggle and growth parameters satisfying $1+w=b=r^{-\frac{1}{2}}$ according to Definition \ref{itv_construction};
    \item Execute the gird search with relief models, that is, $\A=\A(\{\Lcal(I;\hM_{R_I}):I\in\I\})$ with $R_I = \argmax\nolimits_{R \in \R, R \subset I} \size{R}$.
\end{enumerate}

The Reliever procedure can be used in conjunction with both optimal- and greedy-kind grid-search algorithms, as outlined in Section \ref{sec:intro}, represented by $\A=\A(\{\Lcal(I;\hM_I):I\in\I\})$. The primary distinction when using Reliever compared to the original implementation is the employment of a relief model $\hM_{R_I}$ to evaluate the loss function $\Lcal(I;\hM_I)$. This modification not only maintains versatility but also significantly reduces the number of model fits required, as each loss evaluation leverages a pre-fitted model from a relief interval, dramatically lowering computational overhead compared to the original implementations (refer to Table \ref{tab:complexity} for computational comparisons).

\begin{remark}[On implementations]\label{rmk:OnImplementations}
The minimal-spacing parameter $\delta_{\mathsf{m}}$ is an inherent component of the grid-search algorithm, ensuring that the model/parameter is estimatable on each segment $I$. For example, in linear models, \cite{bai1998estimating} set $\delta_{\mathrm{m}} = p$ (with $p$ the covariate dimension) so that the least-squares estimator is well-posed. Our construction of relief intervals imposes no additional practical restriction on $\delta_{\mathrm{m}}$.
Theoretically, large $\delta_{\mathrm{m}}$ guarantees that $\hM_{I}$ behaves well and converges to its population counterpart as $|I|$ grows---a standard requirement in nonparametric and high-dimensional settings. For instance, in multivariate nonparametric changepoint detection via kernel density, \cite{padilla2023change} use $\delta_{\mathrm{m}} = h^{-p} \log(n)$ (with $h$ the bandwidth and $p$ the data dimension). In high-dimensional linear models with temporal dependence, \cite{xu2022change} set $\delta_{\mathrm{m}} = O((s \log(p \vee n))^{2/\zeta - 1})$ (with $p$ the covariate dimension, $s$ the sparsity level, and $\zeta$ characterizing dependence and noise tails), which simplifies to $O(s \log(p \vee n))$ in the independent, sub-Gaussian case ($\zeta=1$). For numerical stability, we use $\delta_{\mathrm{m}} = 20$ in simulation studies (Section \ref{sec:sim}). In Section~\ref{sec:small_delta_m}, we investigate robustness to small choice of $\delta_{\mathsf m}$.

The construction of relief intervals also hinges on the coverage-ratio parameter $r$, which governs the trade-off between computational cost and estimation accuracy (see Remark \ref{rmk:tradeoff}). We recommend viewing $r$ as a \textit{budget} parameter. With ample computing time, choose $r\approx 1$ (which recovers the original algorithm). When runtime is critical, pick the largest $r$ that satisfies the time constraints. Extensive experiments show that $0.8 < r < 0.9$ typically cuts runtime substantially while maintaining satisfactory detection accuracy compared to the original implementation; see Section~\ref{sec:sim}.
When the set of grid-search intervals is known (or easily predicted) in advance, the total fitting time for Reliever can be approximated. Practitioners may then tune $r$ to match a target runtime with reasonable confidence. Implementation details are provided in Section~\ref{sec:estimated_time}.
\end{remark}

\section{Theoretical Justifications}\label{subsec:example}

Despite the broad applicability of Reliever across various changepoint detection algorithms and model settings, establishing a unified theoretical framework for analyzing detection accuracy is challenging without specific assumptions regarding the model, the fitting procedure, and the grid-search algorithm. Here, we first offer an indirect justification by examining the variations in loss values resulting from the application of Reliever. This examination focuses on parametric changepoint models with convex minimization routines for model fitting, as demonstrated in Example \ref{ex:convex}. Furthermore, in Section \ref{sec:hdLinear}, we present rigorous theoretical results concerning the estimation accuracy of multiple changepoints in the context of high-dimensional linear models employing lasso, as detailed in Example \ref{ex:lin_reg}. {Unless stated otherwise, we assume that the observations $\zbf_i$ are temporally independent.}

\subsection{Variations in Loss Values: Convex Minimization}

Continue with Example \ref{ex:convex}. In the original implementation of a grid-search algorithm $\A$, losses are evaluated as $\Lcal(I; \hat{\btheta}_I) = \sum_{i \in I} \ell(\zbf_i, \hat{\btheta}_I)$, where $\hat{\btheta}_I = \argmin_{\btheta \in \Theta} \sum_{i \in I} \ell(\zbf_i, \btheta)$. With the Reliever approach, these loss calculations are replaced by $\Lcal(I; \hat{\btheta}_{R}) = \sum_{i \in I} \ell(\zbf_i, \hat{\btheta}_{R})$, where $R \in \R$ is a relief interval corresponding to $I$.
To analyze the impact of this substitution, let $\overline{\Lcal}(I, \btheta) = \Ebb \Lcal(I; \btheta)$ and $\btheta_{I}^\circ = \argmin_{\btheta \in \Theta} \overline{\Lcal}(I, \btheta)$ denote population loss and its minimizer, respectively. Define $G_{I}(\balpha) = \size{I}^{-1} \sum_{i \in I} g(\zbf_i, \btheta_{I}^\circ + \balpha \size{I}^{-\frac{1}{2}})$ and $\overline{G}_{I}(\balpha) = \Ebb G_{I}(\balpha)$, where $g(\zbf, \btheta) = \nabla_{\btheta} \ell(\zbf, \btheta)$.

Under mild regularity conditions for convex M-estimation, Proposition \ref{thm:mest} shows that the difference between the losses $\Lcal(I; \hat{\btheta}_I)$ and $\Lcal(I; \hat{\btheta}_{R})$ remains uniformly controlled over all search intervals $I\in\I$, regardless of whether $I$ contains changepoints.

\begin{enumerate}[label=(\alph*)]
    \item $\ell(\cdot, \zbf)$ is convex on the domain $\Theta$ for all fixed $\zbf$ and $\Theta$ is a compact and convex subset of $\Rbb^p$.
    \item The expectation $\Ebb\ell(\zbf_i, \btheta)$ is finite for all $\zbf_i$ and fixed $\btheta \in \Theta$.
    \item The population minimizer $\btheta_{I}^\circ$ uniquely exists and is interior point of $\Theta$.
    \item $\norm{g(\zbf_i, \btheta)}_{\Psi_1} \le C_{A.1}$ for each $\btheta$ near $\btheta_{I}^\circ$.
    \item $\overline{\Lcal}(I, \btheta)$ is twice differentiable at $\btheta_{I}^\circ$ and $\Hbf_{I} \triangleq \size{I}^{-1}  \nabla_{\btheta}^2 \overline{\Lcal}(I, \btheta_{I}^\circ)$ is positive-define.
    \item $\abs{\overline{G}_I(\size{I}^{\frac{1}{2}}(\btheta - \btheta_{I}^\circ)) - \Hbf_{I} (\btheta - \btheta_{I}^\circ)} = C_{A.2} \norm{{\btheta - \btheta_{I}^\circ}}_{2}^2$.
    \item $\norm{g(\zbf_i, \btheta) - g(\zbf_i, \btheta_{I}^\circ)}_{\Psi_1} \le C_{A.3} \norm{\btheta - \btheta_{I}^\circ}_2$.
    \item $\size{I}^{-1} \overline{\Lcal}(I, \btheta)$ is $\rho$-strongly convex in the compact set $\Theta$.
    \item $\Ebb g(\zbf_i, \btheta)$ is $\zeta$-Lipschitz continuous w.r.t. $\btheta$.
    \item For $i \in I \setminus R$, $\norm{\btheta_{R}^\circ - \btheta_{i}^\circ}_2 \le \Delta_{\infty}$ where $\Delta_{\infty} > 0$ is a fixed constant.
    \item $\norm{\Hbf_{R}^{-1} - \Hbf_{I}^{-1}}_{\mathsf{op}} \le C_{A.4} \norm{\btheta_{R}^\circ - \btheta_{I}^\circ}_{2}$ and $\norm{\Hbf_{I}^{-1}}_{\mathsf{op}} \le C_{A.5}$ for any interval $I$.
\end{enumerate}

\begin{proposition}\label{thm:mest}
    Given that the conditions (a)--(k) hold, with probability at least $1 - n^{- C}$ for some constant $C>0$, the event
    \begin{equation}\label{mest:general}
        0 \le \frac{1}{\size{I}} \{\Lcal(I; \hat{\btheta}_R) - \Lcal(I; \hat{\btheta}_I)\} \le O\biggl(\biggnorm{\frac{1}{\size{I}} \sum_{i \in I \setminus R} \Ebb \nabla_{\btheta} \ell(\zbf_i, \btheta_{R}^\circ)}_2^2 + \frac{(1 - r)\log n}{r\size{I}} + \frac{(\log n)^2}{r^2\size{I}^2}\biggr)
    \end{equation}
    holds uniformly for all intervals $R \subset I \in \I$, where $\nabla_{\btheta} \ell(\cdot,\btheta)$ denotes the gradient or sub-gradient. In particular, in scenarios where $I = (s, e]$ contains no changepoint, or there is only one changepoint $\tau \in I$ such that $\min(\tau - s, e - \tau) = O(\log n)$, this event simplifies to
    \begin{equation}\label{mest:homo}
        0 \le \frac{1}{\size{I}}\{\Lcal(I; \hat{\btheta}_{R}) - \Lcal(I; \hat{\btheta}_I)\} \le C_1 \biggl(\frac{(1 - r)\log n}{r\size{I}} + \frac{(\log n)^2}{r^2\size{I}^2}\biggr),
   \end{equation}
   where $C_1 > 0$ is a constant.
\end{proposition}

\begin{remark}[On Conditions (a)--(k)]\label{rmk:condition_mest}
Conditions (a)--(k) parallel the regularity assumptions of \citet{niemiro_asymptotics_1992}, which analyzed M-estimators obtained through convex minimization under independent and identically distributed (i.i.d.) data. These conditions fundamentally concern the smoothness and convexity of the loss function $\ell$ and its expectation.
Conditions (a)--(f) are the standard ingredients for deriving estimation error bounds in classical convex M-estimation with i.i.d. samples. The additional conditions (g)--(k) address distributional heterogeneity created by changepoints and ensure uniform control of the difference terms $\norm{\btheta_I^\circ-\btheta_R^\circ}_2$ and $\norm{\hat{\btheta}_I-\hat{\btheta}_R}_2$ for all $I \in \I$.
The proof of Proposition \ref{thm:mest} relies on a novel non-asymptotic Bahadur-type representation for $\hat{\btheta
}_I - \hat{\btheta}_R$ in the presence of changepoints across all $I\in\I$, which may be of independent interest.
\end{remark}

In Proposition \ref{thm:mest}, Eq.(\ref{mest:homo}) indicates that the discrepancy between the Reliever-based loss $\Lcal(I; \hat{\btheta}_{R})$ and the original loss $\Lcal(I; \hat{\btheta}_I)$ vanishes when the data within $I$ are (nearly) homogeneous and $(\log n) / \size{I}$ goes to zero. This provides a justification for employing Reliever. Conversely, for heterogeneous $I$ containing changepoints distant from the boundaries, this vanishing property of the discrepancy may not hold. Surprisingly, the inequality $\Lcal(I; \hat{\btheta}_R)\ge \Lcal(I; \hat{\btheta}_I)$ in Eq.(\ref{mest:general}) becomes instrumental in excluding inconsistent changepoint estimators in such scenarios. Therefore, we can expect that Reliever effectively tracks the original search path. To gain some intuition, consider the scenario with a single changepoint $\tau^\ast$ such that $\min(\tau^\ast, n - \tau^\ast) \ge \delta_{\mathsf{m}}$ or $\tau^\ast\in\mathcal{T}_1(\delta_{\mathsf{m}})$. And the grid-search algorithm is specified as the first BS step. Define the changepoint estimator as $\hat{\tau}_{\mathrm{original}} = \argmin_{\tau\in\mathcal{T}_1(\delta_{\mathsf{m}})} S^{(I)}_{I}(\tau)$, where $S^{(I)}_{I}(\tau)=\Lcal(I_{1, \tau}, \hat{\btheta}_{I_{1, \tau}}) + \Lcal(I_{2, \tau}, \hat{\btheta}_{I_{2, \tau}})$, and for any $\tau$, $I_{1, \tau} = (0, \tau]$ and $I_{2, \tau} = (\tau, n]$. The Reliever-based changepoint estimator is then defined as $\hat{\tau} = \argmin_{\tau\in\mathcal{T}_1(\delta_{\mathsf{m}})} S^{(R)}_{I}(\tau)$, where $S^{(R)}_{I}(\tau)=\Lcal(I_{1, \tau}, \hat{\btheta}_{R_{1, \tau}}) + \Lcal(I_{2, \tau}, \hat{\btheta}_{R_{2, \tau}})$, and $R_{j, \tau}\subset I_{j, \tau}$ is the corresponding relief interval for $j=1,2$. We present the following corollary, which establishes the consistency of $\hat{\tau}$ in the sense that $|\hat{\tau} - \tau^\ast| / n \rightarrow 0$.

\begin{corollary}\label{cor:scp}
    Assume $\delta_{\mathsf{m}} = C_{\mathsf{m}} \log n$ for some constant $C_{\mathsf{m}} > 0$ and the event described in Proposition \ref{thm:mest} holds. If there exists a sufficiently large constant $C_2 > 0$ such that for any $\tau\in\mathcal{T}_1(\delta_{\mathsf{m}})$ satisfying $\abs{\tau - \tau^\ast} > \delta$ for a constant $\delta > 0$,
    \begin{align}\label{eq:losslogn}
        S^{(I)}_{I}(\tau) - S^{(I)}_{I}(\tau^\ast) > C_2 \log n
    \end{align}
    holds, then $\abs{\hat{\tau} - \tau^\ast} \le \delta$.
\end{corollary}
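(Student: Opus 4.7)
The plan is to exploit the asymmetry in Theorem \ref{thm:mest}: for every candidate $\tau$ the inequality provides the trivial lower bound $\L(I, \hat{\btheta}_R) \geq \L(I, \hat{\btheta}_I)$ (simply because $\hat{\btheta}_I$ is defined as the minimizer of $\L(I, \cdot)$ over $\Theta$), while a sharp upper bound on $\L(I, \hat{\btheta}_R) - \L(I, \hat{\btheta}_I)$ of order $\log n$ is available in the homogeneous case (\ref{mest:homo})---and the latter is all that is needed, and only at $\tau = \tau^\ast$.

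First I would sum the free lower bound over the two segments, obtaining $S^{(R)}_I(\tau) \geq S^{(I)}_I(\tau)$ for every $\tau \in \mathcal{T}_1(\delta_m)$, which transports the separation assumption (\ref{eq:losslogn}) verbatim to the \Relief{}-based loss: for any $\tau$ with $\size{\tau - \tau^\ast} > \delta$,
\[
S^{(R)}_I(\tau) \;\geq\; S^{(I)}_I(\tau) \;>\; S^{(I)}_I(\tau^\ast) + C_2 \log n.
\]
Next I would bound $S^{(R)}_I(\tau^\ast)$ from above. Since $\tau^\ast \in \mathcal{T}_1(\delta_m)$, both $I_{1,\tau^\ast} = (0, \tau^\ast]$ and $I_{2,\tau^\ast} = (\tau^\ast, n]$ contain no changepoint and have length at least $\delta_m = C_m \log n$, so Eq.~(\ref{mest:homo}) applies to each segment. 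Multiplying that inequality by $\size{I_{j,\tau^\ast}}$ and using $(\log n)^2/\size{I_{j,\tau^\ast}} \leq \log n / C_m$ yields
\[
\L(I_{j,\tau^\ast}, \hat{\btheta}_{R_{j,\tau^\ast}}) - \L(I_{j,\tau^\ast}, \hat{\btheta}_{I_{j,\tau^\ast}}) \;\leq\; C_1 \bigl(1 - r + C_m^{-1}\bigr) \log n,
\]
and summing over $j = 1, 2$ produces $S^{(R)}_I(\tau^\ast) \leq S^{(I)}_I(\tau^\ast) + C_3 \log n$, where $C_3 = 2 C_1 (1 - r + C_m^{-1})$ depends only on $C_1$, $C_m$ and $r$.

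Combining the two pieces, for every $\tau \in \mathcal{T}_1(\delta_m)$ with $\size{\tau - \tau^\ast} > \delta$,
\[
S^{(R)}_I(\tau) - S^{(R)}_I(\tau^\ast) \;>\; C_2 \log n - C_3 \log n \;=\; (C_2 - C_3) \log n,
\]
which is strictly positive as soon as $C_2 > C_3$. This is precisely the ``sufficiently large $C_2$'' required by the hypothesis, so no $\tau$ with $\size{\tau - \tau^\ast} > \delta$ can minimize $S^{(R)}_I(\cdot)$, forcing $\size{\hat{\tau} - \tau^\ast} \leq \delta$.

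No genuinely hard step arises here once Theorem \ref{thm:mest} is in hand; the mildly subtle observation, worth emphasizing, is that the potentially loose heterogeneous bound in (\ref{mest:general}) does not hamper the argument. It only inflates $S^{(R)}_I(\tau)$ above $S^{(I)}_I(\tau)$ at exactly the $\tau$'s we want to rule out, which helps rather than hurts. This asymmetry is the reason it suffices to invoke the clean homogeneous bound (\ref{mest:homo}) at the single point $\tau^\ast$, where both segments are automatically free of changepoints.
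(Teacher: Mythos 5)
Your proof is correct and follows essentially the same route as the paper's own argument: combine the free lower bound $S^{(R)}_I(\tau) \ge S^{(I)}_I(\tau)$ (valid for every $\tau$) with the upper bound on $S^{(R)}_I(\tau^\ast) - S^{(I)}_I(\tau^\ast)$ from Eq.~(\ref{mest:homo}), and conclude via Eq.~(\ref{eq:losslogn}) that no $\tau$ far from $\tau^\ast$ can minimize the \Relief{}-based criterion. The only cosmetic differences are that you argue directly rather than by contradiction, and you point out (correctly) that the lower bound is already trivial from the definition of the M-estimator $\hat{\btheta}_I$ as the minimizer of $\L(I,\cdot)$, whereas the paper cites it as part of Eq.~(\ref{mest:general}) (incidentally, where the paper writes ``$\L(I, \hat{\btheta}_R)\le \L(I, \hat{\btheta}_I)$'' in its in-text justification it is a typo for $\ge$, consistent with what you and Eq.~(\ref{mest:general}) state).
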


Corollary \ref{cor:scp} is a direct consequence of Proposition \ref{thm:mest}. Assume $\abs{\hat{\tau} - \tau^\ast} > \delta$. Since $\Lcal(I; \hat{\btheta}_R)\le \Lcal(I; \hat{\btheta}_I)$ according to Eq.(\ref{mest:general}), it follows that $S^{(R)}_{I}(\hat{\tau})\ge S^{(I)}_{I}(\hat{\tau})$. Utilizing Eq.(\ref{mest:homo}), we derive
\begin{align*}
    S^{(R)}_{I}(\tau^\ast)
    \le S^{(I)}_{I}(\tau^\ast) + 2C_1 \Big\{\frac{1-r}{r} + \frac{n \log n}{\tau^\ast (n - \tau^\ast)r^2}\Big\} \log n.
\end{align*}
Considering Eq.(\ref{eq:losslogn}), by setting $C_2 \ge 2C_1 \{(1 - r)r^{-1} + C_{\mathsf{m}}^{-1}r^{-2}\}$, we have $S^{(R)}_{I}(\hat{\tau})-S^{(R)}_{I}(\tau^\ast) > [C_2 - 2C_1 \{(1 - r)r^{-1} + C_{\mathsf{m}}^{-1}r^{-2}\}] \log n \ge 0$. Therefore, the assumption $\abs{\hat{\tau} - \tau^\ast} > \delta$ leads to a contradiction, establishing the validity of Corollary \ref{cor:scp}. Eq.(\ref{eq:losslogn}) imposes implicit constraints on the model, ensuring that the original grid-search algorithm produces a consistent changepoint estimator, that is, $\abs{\hat{\tau}_{\mathrm{original}} - \tau^\ast} \le \delta$. Verifying Eq.(\ref{eq:losslogn}) or establishing a lower bound for $S^{(I)}_{I}(\tau) - S^{(I)}_{I}(\tau^\ast)$ is a well-accepted technique for justifying the consistency of changepoint estimators \citep{MR2743035}. Corollary \ref{cor:scp} demonstrates that the consistency proof for the original grid-search algorithm can readily be extended to the Reliever estimator. This approach is also applicable to multiple changepoint detection tasks.

\subsection{Changepoint Detection Accuracy: Lasso Regression}\label{sec:hdLinear}

To deepen our understanding of how variations in loss values impact the accuracy of changepoint detection with the Reliever method, we delve into the detection of multiple changepoints in high-dimensional linear models (cf. Example \ref{ex:lin_reg}).

We use the OP algorithm \citep[for example,][]{leonardi_computationally_2016}. The standard implementation minimizes the criterion
\begin{align}\label{object_cp_full_linear}
    \sum_{k=1}^{K+1} \Lcal((\tau_{k-1},\tau_k];\hat{\btheta}_{(\tau_{k-1},\tau_k]}) + \gamma K,
\end{align}
over all candidate changepoints $(\tau_1,\ldots,\tau_K)\in\mathcal{T}_K(\delta_{\mathsf{m}})$. For any search interval $I\in\I$ with $\size{I}\ge\delta_{\mathsf{m}}$, model parameters are estimated as $\hat{\btheta}_I = \argmin_{\btheta \in \Rbb^p}\{\sum_{i\in I} (y_i - \xbf_i^\top \btheta)^2 + \lambda_I \norm{\btheta}_1\}$, and loss values are computed as $\Lcal(I; \hat{\btheta}_I) = \sum_{i\in I} (y_i - \xbf_i^\top \hat{\btheta}_I)^2$. The tuning parameter $\gamma$ helps avoid overestimating the number of changepoints. Specific values for $\lambda_I$ and $\gamma$ are established through a rigorous theoretical analysis discussed later in this section.

To integrate Reliever with OP, we construct a collection of relief intervals $\R$ with a fixed coverage ratio parameter $0<r<1$. The optimization criterion within the Reliever framework is thus reformulated as
\begin{align}\label{object_cp_part_linear}
    \sum_{k=1}^{K+1} \Lcal((\tau_{k-1},\tau_k];\hat{\btheta}_{R_k}) + \gamma K,\mbox{ with } R_k = \argmax\nolimits_{R \in \R, R \subset (\tau_{k-1},\tau_k]} \size{R}.
\end{align}
This criterion and the original in Eq.(\ref{object_cp_full_linear}) are specific instances of a more general optimization problem
\begin{align}\label{object}
    \min_{(\tau_1,\ldots,\tau_K)\in\mathcal{T}_K(\delta_{\mathsf{m}})}
    \left\{\sum_{k=1}^{K+1} \Lcal\Big((\tau_{k-1},\tau_k];\widetilde{\btheta}\big((\tau_{k-1},\tau_k]\big)\Big) + \gamma K\right\}.
\end{align}
Here, $\widetilde{\btheta}(I)$ can be any valid estimator of regression coefficients for $I\in\I$ such that $\size{I}\ge\delta_{\mathsf{m}}$. Setting $\widetilde{\btheta}(I)=\hat{\btheta}_I$ recovers the original criterion (\ref{object_cp_full_linear}). Alternatively, choosing $\widetilde{\btheta}(I)=\hat{\btheta}_{R_I}$ with $R_I = \argmax\nolimits_{R \in \R, R \subset I} \size{R}$, gives us the reformulated problem (\ref{object_cp_part_linear}). OP manages this optimization (\ref{object})  by integrating a sequence of parameter estimation $\{\widetilde{\btheta}(I)\}$ and loss evaluation $\{\Lcal_I\equiv \Lcal(I;\widetilde{\btheta}(I))\}$ steps along the search path, with dynamic ordering of intervals $I$ determined by OP itself.

We first establish a deterministic claim about the consistency and near rate-optimality of the resulting changepoint estimators, conditional on an event measuring the quality or \textit{goodness} of the evaluated losses.
We introduce some notations and conditions crucial for this analysis.
For any search interval $I\in\I$, denote $\btheta_I^\circ = \argmin_{\btheta \in \Rbb^p} \Ebb\{\Lcal(I; \btheta)\}$, and define $\Delta_{I} = ({\size{I}}^{-1}\sum_{i \in I} \norm{\btheta_i^\circ - \btheta_I^\circ}_{\Sigma}^2)^{\frac{1}{2}}$, where $\btheta_i^\circ = \btheta_{\set{i}}^\circ$ for $i=1,\ldots,n$. For $k=1,\ldots,K^\ast$, let $\Delta_k = \norm{\btheta_{k+1}^\ast - \btheta_{k}^\ast}_{\Sigma}$ be the magnitude of change at $\tau_k^\ast$, with $\Delta_0 = \Delta_{K^\ast + 1} = \infty$.

\begin{condition}[Change signals]\label{cond:change}
    There exists a sufficiently large constant $C_{\mathsf{snr}}>0$ such that for $k = 1,\dots,K^\ast+1$, $\tau_k^\ast - \tau_{k-1}^\ast \ge C_{\mathsf{snr}} s\log(p \vee n) (\Delta_{k-1}^{-2} + \Delta_k^{-2} + 1)$.
\end{condition}
\begin{condition}[Regression coefficients]\label{cond:parameter}
    (a) Sparsity: $\size{\Scal_k} \le s < p$, where $\Scal_k = \set{1\le j\le p: \theta_{k,j}^\ast \neq 0}$ and $\theta_{k,j}^\ast$ is the $j$th component of $\btheta_k^\ast$;
    (b) Boundness: $\abs{\theta_{k,j}^\ast} \le C_\theta$ for some constant $C_\theta>0$.
\end{condition}
\begin{condition}[Covariates and noises]\label{cond:dist}
    (a) Covariates $\set{\xbf_i}_{i=1}^n$ are i.i.d. sub-Gaussian with zero mean and covariance $\Sigma$, satisfying $0<\kmin\le \sigma_x^2<\infty$, where $\kmin=\lambda_{\min}(\Sigma)$ and $\sigma_x^2=\lambda_{\max}(\Sigma)$ are the minimum and maximum eigenvalues of $\Sigma$, respectively. Furthermore, $\norm{\Sigma^{-\frac{1}{2}}\xbf_i}_{\Psi_2} \le C_x$ for some constant $C_x>0$;
    (b) Noises $\set{\epsilon_i}_{i=1}^n$ are i.i.d. sub-Gaussian with zero mean, variance $\sigma_\epsilon^2$, and a sub-Gaussian norm $C_{\epsilon}$.
\end{condition}

These conditions are commonly adopted in the literature for changepoint detection in high-dimensional linear models \citep{leonardi_computationally_2016,wang_statistically_2021,rinaldo_localizing_2021,xu2022change}. Specifically, Condition \ref{cond:change} introduces a \textit{local multiscale} signal-to-noise ratio (SNR) requirement for the spacing between neighboring changepoints, providing greater flexibility compared to the global SNR condition in existing works like \cite{leonardi_computationally_2016} and \cite{wang_statistically_2021}.

\begin{lemma}[Goodness of loss evaluations]\label{lem:loc_err_g}
    Under Condition \ref{cond:change}, for the problem~(\ref{object}) with $\delta_{\mathsf{m}} = C_{\mathsf{m}} s \log(p \vee n)$ for a sufficiently large constant $C_{\mathsf{m}}>0$, and $\gamma = C_{\gamma} s \log(p \vee n)$ for a constant $C_{\gamma}>0$, the solution $(\hat{\tau}_1,\ldots,\hat{\tau}_{\hat{K}})$ satisfies:
    \begin{equation*}
        \widehat{K} = K^\ast \mbox{ and } \max_{1 \le k \le K^\ast} \min_{1 \le j \le \widehat{K}} \frac{1}{2} \Delta_{k}^2 \abs{\tau_k^\ast - \hat{\tau}_j} \le \widetilde{C} s \log(p \vee n),
    \end{equation*}
    for some constant $\widetilde{C} > 0$, conditional on the event $\mathbb{G}=\mathbb{G}_1\cap\mathbb{G}_2^- \cap \mathbb{G}_2^+ \cap\mathbb{G}_3$. Here,
    \begin{align*}
        \mathbb{G}_1 &= \left\{\mbox{for any }I\in E_1, \Bigabs{\Lcal_I - \sum_{i \in I} \epsilon_i^2} < C_{\ref*{lem:loc_err_g}.1} s \log(p \vee n)\right\},\\
        \mathbb{G}_2^- &= \left\{\mbox{for any }I\in E_2^-, \Lcal_I - \sum_{i \in I} \epsilon_i^2 - \Delta_{I}^2 \size{I} > - C_{\ref*{lem:loc_err_g}.2} s \log(p \vee n)\right\},\\
        \mathbb{G}_2^+ &= \left\{\mbox{for any }I\in E_2^+, \Lcal_I - \sum_{i \in I} \epsilon_i^2 - \Delta_{I}^2 \size{I} < C_{\ref*{lem:loc_err_g}.2} s \log(p \vee n)\right\},\\
        \mathbb{G}_3 &= \left\{\mbox{for any }I\in E_3, \Lcal_I - \sum_{i \in I} \epsilon_i^2 > (1 - C_{\ref*{lem:loc_err_g}.3}) \Delta_{I}^2 \size{I} \right\},
    \end{align*}
    where $E_1=\{I: \size{I} \ge \delta_{\mathsf{m}}, \Delta_{I} = 0\}$,
    $
E_2^-=\{I = (a, b]: \size{I} \ge \delta_{\mathsf{m}}; I \cap \truecps = \{\tau_k^\ast\}, \min(\tau_k^\ast - a, b - \tau_k^\ast) \le 2 \widetilde{C} \Delta_k^{-2} s \log(p \vee n)\}
$,
$
E_2^+ =\{I \in E_2^{-}: I \cap \truecps = \{\tau_k^\ast\}, \size{I} \ge (\Delta_{k}^2 \vee 1) C_{\mathsf{snr}} s \log(p \vee n)\}
$,
    and $E_3=\{I:  \size{I} \ge \delta_{\mathsf{m}}, \Delta_{I}^2 \size{I} \ge \widetilde{C} s \log(p \vee n)\}$.
    Here $C_{\ref*{lem:loc_err_g}.1}$, $C_{\ref*{lem:loc_err_g}.2}$ and $C_{\ref*{lem:loc_err_g}.3}$ are positive constants. In addition, the constants $C_{\gamma}$ and $\widetilde{C}$ only depends on $C_{\mathsf{snr}}$, $C_{\mathsf{m}}$, $C_{\ref*{lem:loc_err_g}.1}$, $C_{\ref*{lem:loc_err_g}.2}$, and $C_{\ref*{lem:loc_err_g}.3}$.
\end{lemma}

Lemma \ref{lem:loc_err_g} presents a deterministic result. The probabilistic conditions come into play when certifying that the event $\mathbb{G}$ holds with high probability for both the standard OP implementation with $\Lcal_I=\Lcal(I;\hat{\btheta}_I)$ and the accelerated Reliever version with $\Lcal_I=\Lcal(I;\hat{\btheta}_{R_I})$. This lemma offers new insights into the necessary conditions for the evaluated losses using a general model-fitting procedure $\hat{\btheta}(I)$ along the OP grid-search path to produce consistent and nearly rate-optimal changepoint estimators, which may be of independent interest. Theorem \ref{thm:localization_error} further asserts that this event $\mathbb{G}$ occurs with high probability under additional Conditions \ref{cond:parameter}--\ref{cond:dist}.

\begin{theorem}\label{thm:localization_error}
    Suppose that Conditions \ref{cond:change}--\ref{cond:dist} hold. Let $C_{\lambda}$ and $C_{\gamma}$ be positive constants, and $0 < C_{\mathsf{m}} < C_{\mathsf{snr}}$ be sufficiently large constants.
    The solution $(\hat{\tau}_1,\ldots,\hat{\tau}_{\hat{K}})$ of either Problem (\ref{object_cp_full_linear}) or Problem (\ref{object_cp_part_linear}) with $\delta_{\mathsf{m}} = C_{\mathsf{m}} s \log(p \vee n)$, $\lambda_I = C_{\lambda} C_x \sigma_x D_I \{\size{I} \log(p \vee n)\}^{\frac{1}{2}}$, and $\gamma = C_{\gamma} s \log(p \vee n)$, satisfies that
    \begin{align*}
    \Pbb\Bigset{
        \widehat{K} = K^\ast \mbox{ and } \max_{1 \le k \le K^\ast} \min_{1 \le j \le \widehat{K}} \frac{1}{2} \Delta_{k}^2 \abs{\tau_k^\ast - \hat{\tau}_j} \le \widetilde{C} s \log(p \vee n)
    } \ge 1 - (p \vee n)^{-c},
    \end{align*}
    where $D_I = (C_x^2\Delta_I^2 + C_{\epsilon}^2)^{\frac{1}{2}}$. The constants $C_{\gamma}$, $C_{\lambda}$, $\widetilde{C}$ and $c$ are independent of $(n, p, s, K^\ast)$.
    Moreover, under the same probability event, there exists a constant $C > 0$ such that for all $1 \le k \le K^\ast + 1$,
    \begin{equation*}
        \norm{\hat{\btheta}_{(\hat{\tau}_{k-1}, \hat{\tau}_k]} - \btheta_k^\ast}_2 \le C \biggl\{\frac{s \log(p \vee n)}{\tau_k^\ast - \tau_{k-1}^\ast}\biggr\}^{\frac{1}{2}}.
    \end{equation*}
\end{theorem}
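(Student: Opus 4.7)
The plan is to invoke Lemma \ref{lem:loc_err_g} as a deterministic bridge: once the event $\mathbb{G}=\mathbb{G}_1\cap\mathbb{G}_2\cap\mathbb{G}_3$ is shown to occur with probability at least $1-(p\vee n)^{-c}$ under \emph{both} loss specifications (the original $\L_I = \L(I;\hb_I)$ from Problem (\ref{object_cp_full_linear}) and the Reliever $\L_I = \L(I;\hb_{R_I})$ from Problem (\ref{object_cp_part_linear})), the claims on $\widehat{K}$ and on the weighted localization error follow verbatim. The probabilistic task therefore splits into (a) a uniform LASSO toolkit that controls $\hb_I$ (and $\hb_R$) for every candidate interval, and (b) a translation of those estimates into the concrete deviation bounds appearing in $\mathbb{G}_1,\mathbb{G}_2,\mathbb{G}_3$.

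For (a), I would first build three ingredients uniformly over all intervals $I\subset(0,n]$ with $\size{I}\ge\delta_m = C_m s\log(p\vee n)$: a restricted eigenvalue (RE) condition for $\size{I}^{-1}X_I^\top X_I$ on $s$-sparse directions via a covering argument combined with a union bound over the $O(n^2)$ candidate intervals, using Condition \ref{cond:dist}; a sup-norm bound on the noise cross term $\|\sum_{i\in I}\epsilon_i\xbf_i\|_\infty \lesssim C_\epsilon \sigma_x\sqrt{\size{I}\log(p\vee n)}$ and the analogous bound on the signal cross term $\|\sum_{i\in I}\xbf_i\xbf_i^\top(\bbeta_i^\circ-\bbeta_I^\circ)\|_\infty$, both by sub-Gaussian Bernstein inequalities; and concentration of $\sum_{i\in I}\epsilon_i^2$ around $\sigma_\epsilon^2\size{I}$. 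Combined with the data-adaptive tuning $\lambda_I=C_\lambda C_x\sigma_x D_I\sqrt{\size{I}\log(p\vee n)}$, in which $D_I$ absorbs the within-interval signal heterogeneity created by any changepoints strictly inside $I$, standard LASSO arguments yield the uniform oracle bounds $\norm{X_I(\hb_I-\bbeta_I^\circ)}_2^2\lesssim s\log(p\vee n)\,D_I^2$ and $\norm{\hb_I-\bbeta_I^\circ}_1\lesssim s D_I\sqrt{\log(p\vee n)/\size{I}}$.

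For (b), the original case follows by expanding
\begin{equation*}
\L(I;\hb_I)-\sum_{i\in I}\epsilon_i^2 = \norm{X_I(\bbeta_I^\circ-\hb_I)}_2^2 - 2\sum_{i\in I}\epsilon_i\xbf_i^\top(\hb_I-\bbeta_I^\circ) + \Delta_I^2\size{I} + (\text{cross bias terms}),
\end{equation*}
applying $\ell_1/\ell_\infty$ duality to the noise cross term and invoking the uniform oracle bound; this yields the deviations required on $E_1,E_2,E_3$. The hard part is the Reliever case, because $\hb_R$ was fit on the strict sub-interval $R\subset I$, so we must control $\norm{X_I(\bbeta_I^\circ-\hb_R)}_2^2$ rather than $\norm{X_I(\bbeta_I^\circ-\hb_I)}_2^2$. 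I would decompose this as
\begin{equation*}
\norm{X_R(\bbeta_R^\circ-\hb_R)}_2^2 + \norm{X_{I\setminus R}(\bbeta_R^\circ-\hb_R)}_2^2 + \norm{X_I(\bbeta_R^\circ-\bbeta_I^\circ)}_2^2 + (\text{cross terms}),
\end{equation*}
where the first summand is $O(s\log(p\vee n)\,D_R^2)$ by the uniform oracle bound on $R$; the second is bounded by the uniform \emph{upper}-RE bound on $X_{I\setminus R}$ restricted to $s$-sparse directions together with $\size{I\setminus R}\le(1-r)\size{I}$ from Proposition \ref{prop:relief_interval}; and the third is at most a universal multiple of $\Delta_I^2\size{I}$ because the high coverage of relief intervals combined with Condition \ref{cond:change} forces the population minimizers to satisfy $\norm{\bbeta_R^\circ-\bbeta_I^\circ}_\Sigma^2\lesssim \Delta_I^2$. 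With $r$ fixed, all $(1-r)$-dependent factors are absorbed into the constants of $\mathbb{G}$, so the same high-probability statement carries over.

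Finally, for the coefficient bound, the localization result implies that each estimated interval $(\hat\tau_{k-1},\hat\tau_k]$ overlaps the true segment $(\tau_{k-1}^\ast,\tau_k^\ast]$ except on end zones of combined length $O(s\log(p\vee n)/\Delta_k^2)$, which by Condition \ref{cond:change} is negligible compared to $\tau_k^\ast-\tau_{k-1}^\ast$. Consequently $D_{(\hat\tau_{k-1},\hat\tau_k]}$ remains bounded by a constant, and the uniform LASSO $\ell_2$-oracle on this interval yields $\norm{\hb_{(\hat\tau_{k-1},\hat\tau_k]}-\bbeta_k^\ast}_2 = O(\sqrt{s\log(p\vee n)/(\tau_k^\ast-\tau_{k-1}^\ast)})$, as claimed.
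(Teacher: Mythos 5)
Your proposal is correct and follows essentially the same route as the paper: invoke Lemma \ref{lem:loc_err_g} as the deterministic bridge, certify the event $\mathbb{G}$ with high probability separately for $\L(I;\hb_I)$ and $\L(I;\hb_{R_I})$ via a uniform restricted-eigenvalue condition, sub-Gaussian/sub-exponential Bernstein bounds on the cross terms, and LASSO oracle inequalities (the paper's Lemmas \ref{lem:re}--\ref{lem:oracle}, Corollaries \ref{cor:in_err} and \ref{cor:mix_err}), and then read off the coefficient bound from the oracle inequality on the estimated segments. The only differences are bookkeeping choices --- the paper splits the Reliever loss as $\L_R + \sum_{i\in J}(y_i-\xbf_i^\top\hb_R)^2$ and controls the random out-of-sample variation $\widehat{\Delta}_J^2$ directly (Lemmas \ref{lem:mix_err}, \ref{lem:rd_var}), rather than expanding $\|X_I(\bbeta_I^\circ-\hb_R)\|_2^2$ around $\bbeta_R^\circ$ as you do, and it avoids your extraneous concentration of $\sum_{i\in I}\epsilon_i^2$ since $\mathbb{G}$ compares $\L_I$ to $\sum\epsilon_i^2$ directly --- but the underlying ingredients and the conclusion are identical.
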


Theorem \ref{thm:localization_error} demonstrates that, under mild conditions and with appropriately chosen tuning parameters $\gamma$ and $\lambda_I$, both the original and Reliever-enhanced implementations of OP consistently estimate the number of changepoints and achieve a state-of-the-art localization rate $n^{-1}\abs{\tau_k^\ast - \hat{\tau}_k} \le C\Delta_{k}^{-2} n^{-1} s \log(p \vee n) $ with high probability. This rate exhibits the phenomenon of \textit{superconsistency} for changepoint estimation in high-dimensional linear regression with multiple changepoints, extending a well-known result for single changepoint scenarios \citep{MR3453652}. Importantly, our theoretical analysis supports scenarios where $K^*$, the number of changepoints, varies with $n$ and may potentially diverge. When $K^*=O(1)$, our findings are consistent with those reported in \cite{rinaldo_localizing_2021} and \cite{xu2022change}, which use OP-type algorithms. \citet{wang_statistically_2021} allows for $K^*$ to diverge and derives this rate using a WBS-type algorithm. Additionally, it is noteworthy that the tuning parameter $\lambda_I$, which serves as the regularization factor for the lasso model within each interval $I$, not only scales with $|I|^{\frac{1}{2}}$ but is also modulated by the change magnitude $\Delta_I^2$. In fact, determining the rate of $\lambda_I$ involves examining the uniform bound of a sequence of mean-zero (sub-)gradients, where the variance is, however, influenced by $\Delta_I^2$. Previous works, such as those by \cite{wang_statistically_2021} and \cite{xu2022change}, typically assume $\sup_{I}\Delta_I^2=O(1)$, which simplifies the dependency of $\lambda_I$ to $|I|^{\frac{1}{2}}$ alone. Theorem \ref{thm:localization_error} underscores the nuanced, change-adaptive nature of the regularization parameter $\lambda_I$. While the comprehensive exploration of this parameter's dynamics is outside the scope of our current study, it marks a promising avenue for future research and merits further investigation.

\begin{remark}[Tradeoff between computational time and estimation accuracy]
\label{rmk:tradeoff}
At first glance, Reliever might appear to provide an advantage without a corresponding cost, as the localization rate initially appears to be unaffected by the coverage ratio $r$.
However, a deeper analysis of the underlying proofs reveals that $r$ subtly influences the constant $\tilde{C}$ in the localization rate, particularly since $r$ is held constant.
More precisely, the magnitude of $\tilde{C}$ is dependent on several constants including $C_{\mathsf{snr}}$, $C_{\mathsf{m}}$, $C_{\ref*{lem:loc_err_g}.1}$, $C_{\ref*{lem:loc_err_g}.2}$, and $C_{\ref*{lem:loc_err_g}.3}$, as detailed in Lemma \ref{lem:loc_err_g}.
By setting $C_{\mathsf{snr}}$ and $C_{\mathsf{m}}$ to sufficiently large values, we determine that $\widetilde{C} = 2 (1 - C_{\ref*{lem:loc_err_g}.3})^{-1} (3 C_{\ref*{lem:loc_err_g}.1} + 10 C_{\ref*{lem:loc_err_g}.2})$.
From the proof, it becomes evident that $C_{\ref*{lem:loc_err_g}.j} \propto r^{-2}$ for $j=1,2,3$.
Therefore, as $r$ decreases, the constants $C_{\ref*{lem:loc_err_g}.j}$ increase, which in turn elevates $\tilde{C}$ and leads to deteriorated localization rates for smaller values of $r$.
Similar relation can also be observed in Proposition~\ref{thm:mest} for single changepoint detection.
The observation illustrates a pivotal \textit{tradeoff}: lower values of $r$ enhance computational speed at the expense of localization precision.
As $r$ approaches 1, the distinction between Reliever and the original grid-search algorithm diminishes, indicating minimal computational gains in exchange for optimal localization accuracy.
It is also worth emphasizing that by choosing a fixed $0 < r < 1$, Reliever can always reduce the number of model fits to $O(n)$ as Proposition~\ref{prop:relief_interval} shows.
For detailed derivations and specific values of $C_{\ref*{lem:loc_err_g}.j}$, $j=1,2,3$, please refer to Corollary~\ref{cor:mix_err} in Section~\ref{subsec:cert_reliever}.
\end{remark}

\subsubsection{Temporal Dependence: Extending the Localization Theory}\label{rmk:temporal_dependence}

A closer inspection of the proof of Theorem~\ref{thm:localization_error} (independent case) reveals that independence is invoked only inside a Bernstein-type tail bound that establishes oracle inequalities (see Section~\ref{sec:proof_main_thm}).
Replacing this bound with a version suited to dependent data therefore suffices to extend the theory.

\citet{xu2022change} establish such a Bernstein bound for functionally dependent sequences.
Incorporating their bound yields Reliever's nearly rate-optimal localization guarantee for temporally dependent data. The main ingredients are summarized below; detailed proofs are deferred to Section~\ref{sec:proof_under_dependence}.

\begin{definition}[Functional dependent sequence \citep{wu2005nonlinear,xu2022change}]
For each $t \in \Zbb$, let
\begin{equation*}
    \xbf_t = \gbf_t(\Fcal_t^X),
\end{equation*}
where $\Fcal_t^X = \set{X_s}_{s \le t}$ is generated from i.i.d. elements $\set{X_s}_{s \in \Zbb}$, and $\gbf_t:\Fcal_t^X\to\Rbb^p$ is measurable. The sequence $\set{\xbf_t}_{t \in \Zbb}$ is then called functionally dependent, with dependence functions $\{\gbf_{t}\}_{t \in \Zbb}$ and generating elements $\set{X_t}_{t \in \Zbb}$. Let $\Fcal_{t, s}^X$ be the same as $\Fcal_t^X$ except that $X_s$ is replaced by an independent copy $\widetilde{X}_s$. Define, for $q\ge1$, the functional dependence measure and its cumulative version as
\begin{equation*}
    \delta_{s, q}^{\xbf} = \sup_{\vbf \in \Scal^{p-1},\, t \in \Zbb} [\Ebb \abs{\vbf^\top (\xbf_{t} - \xbf_{t - s})}^q]^{\frac{1}{q}} \mbox{~and~} \Delta_{m, q}^{\xbf} = \sum_{s = m}^{\infty} \delta_{s, q}^{\xbf},\, m \in \Zbb,
\end{equation*}
respectively.
\end{definition}

Within the functional-dependence framework, we impose the following conditions.

\begin{condition}[Change signals]\label{cond:change_temporal}
    There exists a sufficiently large constant $C_{\mathsf{snr}}>0$ such that for $k = 1,\dots,K^\ast+1$, $\tau_k^\ast - \tau_{k-1}^\ast \ge C_{\mathsf{snr}} s\log(p \vee n) [\Delta_{k-1}^{-2} + \Delta_k^{-2} + \{s\log(p \vee n)\}^{2/\zeta - 2}]$ where $\zeta \in (0, 1)$. We additionally assume $K = O(1)$ and $\sup_{1 \le k \le K^\ast} \Delta_k \le C_{\Delta}$ for some universal constant $C_{\Delta} > 0$.
\end{condition}
\begin{condition}[Regression coefficients]\label{cond:parameter_temporal}
$\size{\Scal_k} \le s < p$, where $\Scal_k = \set{1\le j\le p: \theta_{k,j}^\ast \neq 0}$.
\end{condition}
\begin{condition}[Covariates and noises]\label{cond:dist_temporal}
    Let $\zeta_1 > 0$ and $\zeta_2 \in (0, 2]$ be two constants such that $(\zeta_1^{-1} + \zeta_2^{-1})^{-1} = \zeta \in (0, 1)$.
    (a) \textbf{Covariates.} The sequence $\set{\xbf_i}_{i=1}^n$ is a consecutive subsequence of an infinite functionally dependent sequence $\set{\xbf_t}_{t \in \Zbb}\subset \Rbb^p$ with a time-invariant dependence function $\gbf^{\xbf}$ ($\gbf_t = \gbf^{\xbf}$ for all $t \in \Zbb$). Moreover, $\sup_{m \ge 0} \exp(c m^{\zeta_1}) \Delta_{m,4}^{\xbf} \le D_{x}$, for some constant $D_x > 0$.
    Assume each $\xbf_i$ has mean zero and covariance $\Sigma$, satisfying $0<\kmin\le \sigma_x^2<\infty$, where $\kmin=\lambda_{\min}(\Sigma)$ and $\sigma_x^2=\lambda_{\max}(\Sigma)$, respectively. Furthermore, $\norm{\Sigma^{-\frac{1}{2}}\xbf_i}_{\Psi_{\zeta_2}} \le C_x$ for some constant $C_x>0$.
    (b) \textbf{Noises}. The sequence $\set{\epsilon_i}_{i=1}^n$ is a consecutive subsequence of an infinite functionally dependent sequence $\set{\epsilon_t}_{t \in \Zbb} \subset \Rbb$ with a time-invariant dependence function $g^{\epsilon}$. Moreover, $\sup_{m \ge 0} \exp(c m^{\zeta_1}) \Delta_{m,4}^{\epsilon} \le D_{\epsilon}$, for some constant $D_\epsilon > 0$.
    Assume $\set{\epsilon_i}_{i=1}^n$ are independent of $\set{\xbf_i}_{i=1}^n$, and each $\epsilon_i$ has mean zero and variance $\sigma_{\epsilon}^2$. Furthermore, $\norm{\epsilon_i}_{\Psi_{\zeta_2}} \le C_{\epsilon}$.
\end{condition}

Conditions \ref{cond:change_temporal}--\ref{cond:dist_temporal} mirror their counterparts in the temporally independent setting (Conditions \ref{cond:change}--\ref{cond:dist}) in temporal independence scenarios; we additionally assume $K = O(1)$ and $\sup_{1 \le k \le K^\ast} \Delta_k \le C_{\Delta}$, as in \cite{xu2022change}.

\begin{corollary}\label{thm:localization_error_temporal}
    Suppose that Conditions \ref{cond:change_temporal}--\ref{cond:dist_temporal} hold. Let $C_{\lambda}$ and $C_{\gamma}$ be some positive constants, and $0 < C_{\mathsf{m}} < C_{\mathsf{snr}}$ be sufficiently large constants.
    The solution $(\hat{\tau}_1,\ldots,\hat{\tau}_{\hat{K}})$ of either Problem (\ref{object_cp_full_linear}) or Problem (\ref{object_cp_part_linear}) with $\delta_{\mathsf{m}} = C_{\mathsf{m}} \{s \log(p \vee n)\}^{2/\zeta - 1}$, $\lambda_I = C_{\lambda} C_x \sigma_x \{\size{I} \log(p \vee n)\}^{\frac{1}{2}}$, and $\gamma = C_{\gamma} s \log(p \vee n)$, satisfies that
    \begin{align*}
    \Pbb\Bigset{
        \widehat{K} = K^\ast \mbox{ and } \max_{1 \le k \le K^\ast} \min_{1 \le j \le \widehat{K}} \Delta_{k}^2 \abs{\tau_k^\ast - \hat{\tau}_j} \le \widetilde{C} s \log(p \vee n)
    } \ge 1 - (p \vee n)^{-c}.
    \end{align*}
    The constants $C_{\gamma}$, $C_{\lambda}$, $\widetilde{C}$ and $c$ are independent of $(n, p, s, K^\ast)$.
    Moreover, under the same probability event, there exists a constant $C > 0$ such that for all $1 \le k \le K^\ast + 1$,
    \begin{equation*}
        \norm{\hat{\btheta}_{(\hat{\tau}_{k-1}, \hat{\tau}_k]} - \btheta_k^\ast}_2 \le C \biggl\{\frac{s \log(p \vee n)}{\tau_k^\ast - \tau_{k-1}^\ast}\biggr\}^{\frac{1}{2}}.
    \end{equation*}
\end{corollary}

\section{Numerical Studies}\label{sec:sim}

To evaluate the effectiveness of the Reliever approach compared to the original implementation of various grid-search algorithms, we explore two scenarios: high-dimensional linear changepoint models (cf. Example \ref{ex:lin_reg}) and nonparametric changepoint models (cf. Example \ref{ex:nonpara}). The grid-search algorithms assessed include SN, WBS (with $M=100$ random intervals), and SeedBS (using a decay parameter $a = 2^{-{1}/{2}}$, as recommended by \citet{kovacs_seeded_2022}), implemented with a known number of changepoints for a fair comparison. For nonparametric models, we also consider OP and PELT, which do not presuppose the number of changepoints. The accuracy of changepoint estimation is quantified using the Hausdorff distance $\max\{\text{OE},\text{UE}\}$, where $\text{OE} = \max_{1 \le j \le \widehat{K}} \min_{1 \le k \le K^\ast} |\tau_k^\ast - \hat{\tau}_j|$ is the over-segmentation error and $\text{UE} = \max_{1 \le k \le K^\ast} \min_{1 \le j \le \widehat{K}} |\tau_k^\ast - \hat{\tau}_j|$ is the under-segmentation error. The following results are based on $500$ replications.

\subsection{High-Dimensional Linear Models}\label{subsec:hdlinear}

In this scenario, we examine changepoint detection in high-dimensional linear models as outlined in Example \ref{ex:lin_reg}, with $n\in\{300,600,900,1200\}$ and $p = 100$. The covariates $\set{\xbf_i}$ are i.i.d. from the standard multivariate Gaussian distribution, and the noises $\set{\epsilon_i}$ are i.i.d. from the standard Gaussian distribution $\mathcal{N}(0,1)$. We introduce three changepoints at $\set{\tau_k^\ast}_{k=1}^3 = \set{\lfloor 0.22 n \rfloor, \lfloor 0.55 n \rfloor, \lfloor 0.77 n \rfloor}$. The regression coefficients $\set{\btheta_k^\ast}$ are generated such that $\theta_{k,j}=0$ for $j=3,\ldots,p$, and $\theta_{k,1}$ and $\theta_{k,2}$ are uniformly sampled, satisfying the SNRs $\norm{\btheta_1}_2^2 / \Var(\epsilon_1) = 2$ and $\norm{\btheta_{k} - \btheta_{k-1}}_2^2 / \Var(\epsilon_1) = 2^{-1}$ for $k = 2, 3, 4$. Here $\theta_{k,j}$ denotes the $j$th element of $\btheta_{k}$. We set $\delta_{\mathsf{m}} = 20$ for numerical stability and employ lasso for parameter estimation using the \textsf{glmnet} package \citep{friedman_regularization_2010} in \textsf{R}. We apply three grid-search algorithms, SN, WBS, and SeedBS, assuming a known number of changepoints $\hat{K}=K^\ast=3$. For each algorithm, we scale the regularization parameter $\lambda_I = \lambda \size{I}^{\frac{1}{2}}$, with $\lambda$ ranging from a set of $30$ values. The changepoint detection error for each algorithm is reported as the minimal Hausdorff distance achieved across all values of $\lambda$.

Figures \ref{fig:lasso_err}--\ref{fig:lasso_time} display the changepoint detection error and the computational time for each grid-search algorithm across different values of the coverage ratio $r$.
The value $r=1$ corresponds to the original implementation. The results indicate that as $r$ approaches $1$, the performance of Reliever approaches that of the original implementation. For $r = 0.9$, Reliever delivers comparable results to the original implementations but with substantial reductions in computational time. Even when $r = 0.6$, the performance is still acceptable, considering the negligible running time.

\begin{figure}[htb]
    \centering
    \includegraphics[width=1\linewidth]{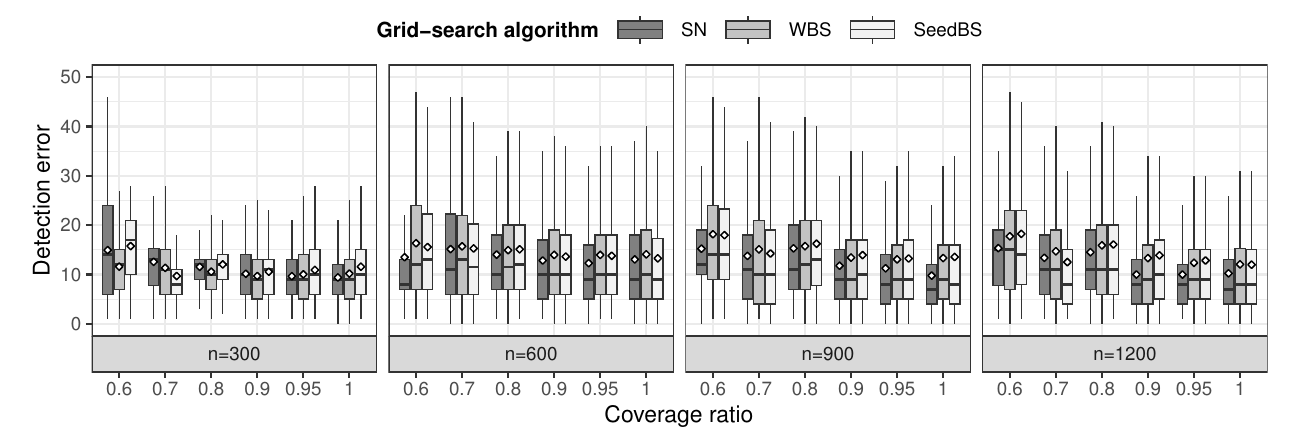}
\caption{Changepoint detection error for various grid-search algorithms across varying values of the coverage ratio, under the high-dimensional linear model.}
\label{fig:lasso_err}
\end{figure}

\begin{figure}[htb]
    \centering
    \includegraphics[width=1\linewidth]{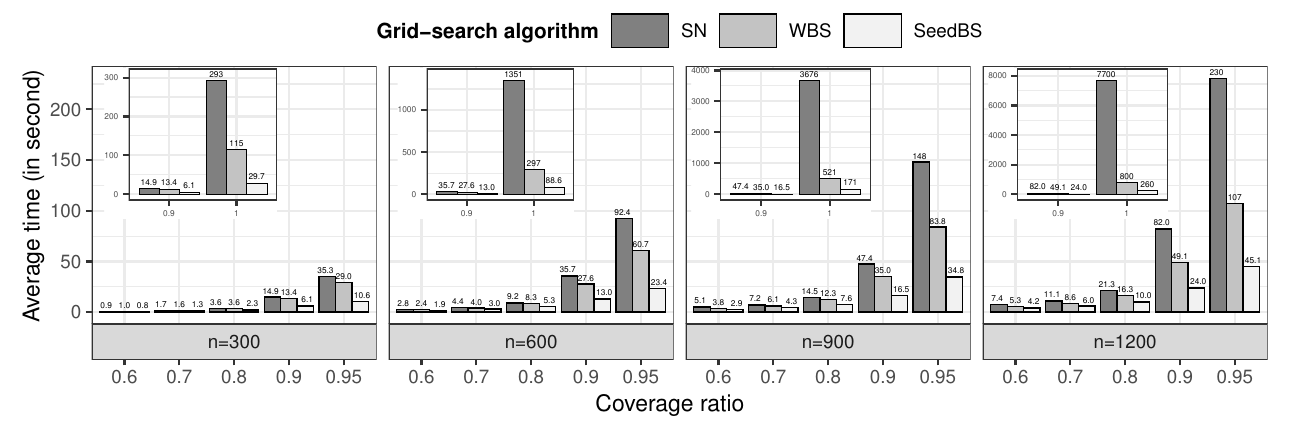}
\caption{Computational time for various grid-search algorithms across varying values of the coverage ratio, under the high-dimensional linear model.}
\label{fig:lasso_time}
\end{figure}

\subsection{Univariate Nonparametric Models}\label{subsec:nmcd}

In the second scenario, we explore changepoint detection for univariate nonparametric distributions as described in Example \ref{ex:nonpara}. We employ the same three-changepoint structure used in the first scenario. The data within the four segments are generated from different distributions, that is, $\mathcal{N}(0,1)$, $\chi_{(3)}^2$ (standardized to have unit variance), $\chi_{(1)}^2$ (likewise standardized), and $\mathcal{N}(0,1)$, respectively. We implement SN, WBS, and SeedBS to identify changepoints, assessing their effectiveness across varying coverage ratios $r$. Figures \ref{fig:nmcd_err}--\ref{fig:nmcd_time} summarize the changepoint detection error and computational time for each algorithm. The Reliever method demonstrates robust performance, particularly for $r$ values above $0.7$. Notably, SN maintains consistent accuracy and efficiency across a range of $r$ values.

\begin{figure}[htb]
    \centering
    \includegraphics[width=1\linewidth]{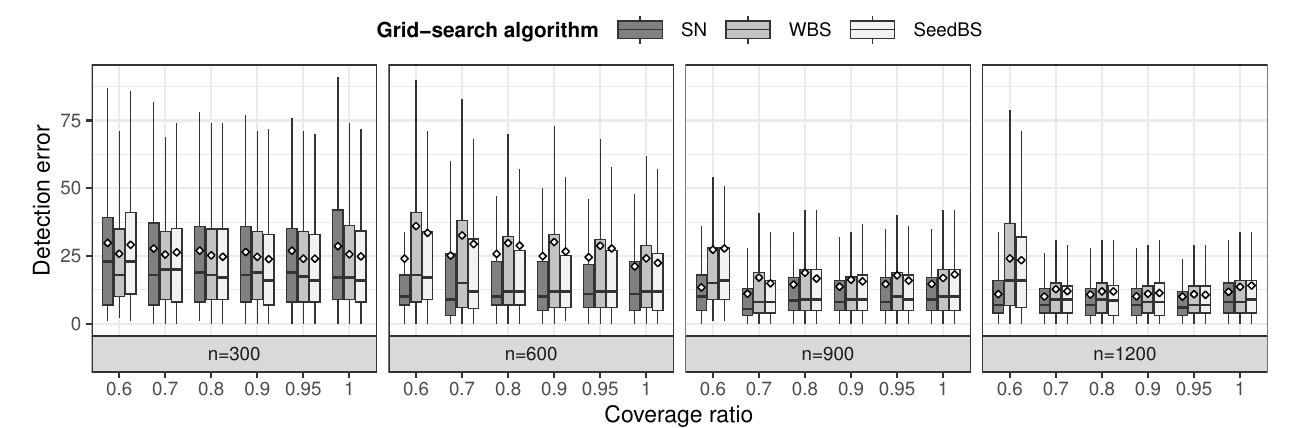}
\caption{Changepoint detection error for various grid-search algorithms across varying values of the coverage ratio, under the nonparametric model.}
\label{fig:nmcd_err}
\end{figure}

\begin{figure}[htb]
    \centering
    \includegraphics[width=1\linewidth]{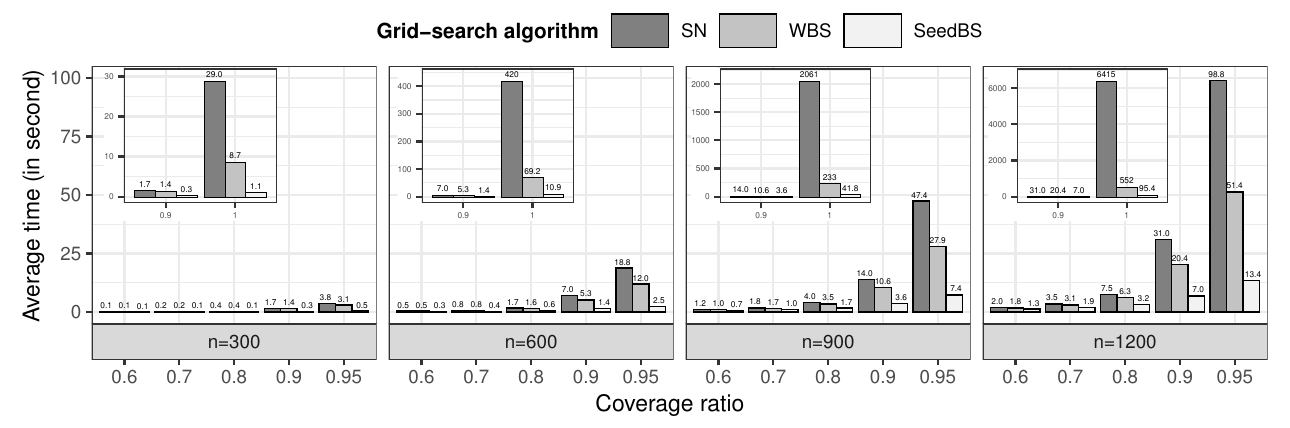}
\caption{Computational time for various grid-search algorithms across varying values of the coverage ratio, under the nonparametric model.}
\label{fig:nmcd_time}
\end{figure}

\subsection{Nonparametric Changepoint Detection via Kernel-Density Estimation}\label{subsec:multivariate_nonpara}

This section shows how Reliever integrates naturally with the kernel-density CUSUM framework of \citet{padilla2023change}. Consider observations $\{\zbf_i\}_{i=1}^n \subset \Rbb^p$. For any search interval $I=(a,b]\in\I$ and for a kernel $\Kcal(\cdot)$ with bandwidth $h > 0$, the density is estimated by $\hat{f}_{I}(\zbf) = \{\size{I}h^p\}^{-1} \sum_{i \in I} \Kcal\Bigl((\zbf - \zbf_i)/h\Bigr),\ \forall \zbf \in \Rbb^p$. Given a split point $t\in I$, define the pointwise CUSUM contrast $\tilde{f}_{I, t}(\zbf) = \sqrt{(b - t)(t - a)/(b - a)} \{\hat{f}_{(a, t]}(\zbf) - \hat{f}_{(t, b]}(\zbf)\}$ and the overall CUSUM contrast $\norm{\tilde{f}_{I, t}}_{n}^2 = {n}^{-1} \sum_{j=1}^n \tilde{f}_{I, t}^2(\zbf_j)$. The changepoint within $I$ is then estimated as $\argmax_{a+\delta_{\mathrm{m}}<t\le b - \delta_{\mathrm{m}}} \norm{\tilde{f}_{I, t}}_{n}^2$.

\cite{padilla2023change} adopt the SeedBS algorithm; to embed their method in our Reliever framework we translate the contrast-based CUSUM into a loss-based evaluation. Define the loss $\Lcal(I; \hat{f}_I) = \sum_{j=1}^n \sum_{i \in I} \Bigl\{{h^p}^{-1} K\Bigl((\zbf_j - \zbf_i)/{h}\Bigr) - \hat{f}_I(\zbf_j)\Bigr\}^2$. Maximizing the overall contrast is equivalent to $\argmin_{a+\delta_{\mathrm{m}}<t\le b - \delta_{\mathrm{m}}} \Lcal((a,t]; \hat{f}_{(a,t]}) + \Lcal((t,b]; \hat{f}_{(t,b]})$.

In the original SeedBS algorithm, $\hat{f}_{I}$ and $\Lcal(I; \hat{f}_{I})$ are computed for every search interval $I\in\I$ along the search path. With Reliever, we instead fit $\hat{f}_{R}$ on only $O(n)$ deterministic proxy intervals $R \in \R$, reuse these fits, and evaluate $\Lcal(I; \hat{f}_{R})$ for all required $I$.

We replicate the three-changepoint scenario of Sections 4.1--4.2 (with $n = 1200$, $p=5$, $\{\tau_k^\ast\}_{k=1}^3 = \{0.22 n, 0.55 n, 0.77 n\}$, $\delta_{\mathrm{m}} = 2$). Data are generated similar to those in Scenario 3 of \citet{padilla2023change}: two independent sequences $\{\ebf_{z, i}\} \subset \Rbb^p$ and $\{\ebf_{z, i}'\} \subset \Rbb^p$ with i.i.d. entries $\ebf_{z, i,j}\sim\mathrm{Pareto}(3, 1)$ and $\ebf_{z, i,j}'\sim\mathrm{Uniform}(-\sqrt{3}, \sqrt{3})$ drive AR(1) processes $\zbf_i = 0.3 \zbf_{i-1} + \ebf_{z, i}$ and $\zbf_i' = 0.3 \zbf_{i-1}' + \ebf_{z, i}'$ (starting with $\zbf_{0} = \zbf_{0}' = \boldsymbol{0}$). Then for $i \in (\tau_1^\ast, \tau_2^\ast] \cup (\tau_3^\ast, n]$, we reset $\zbf_i$ by $\zbf_i = \sqrt{0.8} \zbf_i - \sqrt{0.2} \zbf_i'$. We use a standard radial basis function (RBF) kernel $\Kcal$ with bandwidth $h = 1$, and set Reliever's coverage ratio to $r = 0.9$.

Table \ref{tab:multi_nonpara_cp} reports changepoint detection error and density‐fit time. Although one can update $\hat{f}_{I}$ and $\L(I; \hat{f}_I)$ in $O(p)$ per step---for instance, $\hat{f}_{(a, b+1]}(\zbf) = \frac{1}{b - a + 1}\{(b-a)\hat{f}_{(a, b]}(\zbf) + h^{-p} K((\zbf - \zbf_{b+1})/ h) \}$, Reliever still cuts total fitting time across WBS, SeedBS, and SN, with almost no change in detection error.

\begin{table}[htb]
\setlength\tabcolsep{0pt}
\centering
\begin{threeparttable}
\caption{\small Average detection error and density-fit time (centiseconds) for multivariate nonparametric changepoint detection via kernel density estimation, comparing original and Reliever-enabled versions (coverage ratio \(r=0.9\)) of WBS, SeedBS, and SN.}
\label{tab:multi_nonpara_cp}
\tabcolsep=0.44em
\begin{tabular*}{.97\linewidth}{c@{\extracolsep{\fill}}*{6}{r}}
\toprule
Grid search & \multicolumn{2}{c}{WBS} & \multicolumn{2}{c}{SeedBS} & \multicolumn{2}{c}{SN} \\
\cline{1-1} \cline{2-3}  \cline{4-5} \cline{6-7}
Model fitting & Original & Reliever & Original & Reliever & Original & Reliever \\
\midrule
Error & 38.9 & 39.6 & 41.7 & 41.6 & 34.8 & 34.8 \\
Time & 91.9 & 54.9 & 19.2 & 14.9 & 447.1 & 147.3 \\
\bottomrule
\end{tabular*}
\end{threeparttable}
\end{table}

\subsection{Integration of Reliever with OP and PELT}

To investigate how Reliever integrates with OP and PELT, without presupposing the number of changepoints, we revisit the nonparametric changepoint model discussed in Section \ref{subsec:nmcd}. This model accommodates the applicability of PELT, proposed by \cite{haynes_computationally_2017}. Additionally, we examine the data-generating process of Model 1 from \cite{MR3210993} and \cite{haynes_computationally_2017}, with $K^\ast = 11$ changepoints and Student-$t(3)$-distributed noises. Specifically,
\[
    z_i = \sum_{k = 1}^{K^\ast} h_k \id_{\set{i > \tau_k^\ast}} + \sigma \epsilon_i,\, i=1,\ldots,n,
\]
where $\set{\tau_k^\ast}/n = \set{0.1, 0.13, 0.15, 0.23, 0.25, 0.40, 0.44, 0.65, 0.76, 0.78, 0.81}$, $\set{h_k} = \{2.01,\\ -2.51, 1.51, -2.01, 2.51, -2.11, 1.05, 2.16, -1.56, 2.56, -2.11\}$, $\set{\epsilon_i} \overset{i.i.d}{\sim} t(3)$, and $\sigma = 0.5$. This configuration is designated as Model~(B), in contrast to the three-changepoint setting, which is referred to as Model~(A). Table~\ref{tab:nmcd_op_pelt} summarises the results for $n = 1000$. The findings highlight that while PELT significantly reduces computational time compared to OP, Reliever can further decrease this burden without compromising the detection accuracy, remaining nearly identical to those obtained via the original OP algorithm.

\begin{table}[htb]
\setlength\tabcolsep{0pt}
\centering
\begin{threeparttable}
\caption{\small Average absolute errors of changepoint number estimates, detection error, and computational time of OP and PELT, with and without Reliever, under the univariate nonparametric model with $n=1000$ and $K^\ast = 11$.}
\label{tab:nmcd_op_pelt}
    \tabcolsep=0.45em
    \begin{tabular*}{.97\linewidth}{@{\extracolsep{\fill}}cc*{8}{r}}
        \toprule
        Model & Coverage ratio & \multicolumn{2}{c}{$|\widehat{K} - K^\ast|$} & \multicolumn{2}{c}{OE} & \multicolumn{2}{c}{UE} & \multicolumn{2}{c}{Time (Second)} \\
        \midrule
         & & OP & PELT & OP & PELT & OP & PELT & OP & PELT \\
        \multirow{6}{*}{(A)}
        & (Original) 1.0 & 0.13 & 0.13 & 14.30 & 14.30 & 48.18 & 48.18 & 2804.17 & 1098.88 \\
        &0.9 & 0.13 & 0.13 & 14.01 & 14.01 & 49.16 & 49.16 & 22.65 & 19.22   \\
        &0.8 & 0.14 & 0.14 & 14.05 & 14.10 & 49.05 & 49.11 & 7.71  & 6.60   \\
        &0.7 & 0.15 & 0.15 & 13.95 & 14.12 & 54.36 & 55.83 & 3.56  & 3.01   \\
        &0.6 & 0.17 & 0.17 & 14.87 & 15.12 & 57.39 & 58.21 & 2.07  & 1.72   \\
        &0.5 & 0.21 & 0.21 & 18.67 & 19.72 & 69.16 & 72.34 & 1.34  & 1.07   \\
        \midrule
        \multirow{6}{*}{(B)}
        & (Original) 1.0 & 0.01 & 0.01 & 2.32 & 2.32 & 2.54 & 2.54 & 3010.80 & 65.85 \\
        & 0.9 & 0.01 & 0.01 & 2.29 & 2.29 & 2.51 & 2.51 & 24.26   & 8.56  \\
        & 0.8 & 0.01 & 0.01 & 2.25 & 2.25 & 2.47 & 2.47 & 8.22    & 3.51  \\
        & 0.7 & 0.01 & 0.01 & 2.37 & 2.37 & 2.41 & 2.41 & 3.80    & 1.80  \\
        & 0.6 & 0.00 & 0.00 & 2.18 & 2.18 & 2.18 & 2.18 & 2.19    & 1.13  \\
        & 0.5 & 0.01 & 0.01 & 2.45 & 2.45 & 2.56 & 2.56 & 1.41    & 0.75  \\
        \bottomrule
    \end{tabular*}
    \vspace{-1em}
\end{threeparttable}
\end{table}

\subsection{Integration of Reliever with DCDP}\label{subsec:dcdp}

\cite{li2023divide} propose the two-stage Divide and Conquer Dynamic Programming (DCDP) algorithm. Stage I runs dynamic programming (DP, that is, SN) on a coarse grid of candidate points, while Stage II locally refines each preliminary changepoint to improve accuracy.
This design reduces computation by largely eliminating the grid-search space. Because DCDP and Reliever address different bottlenecks, the two can be combined. We therefore evaluate (i) DP (Original versus Reliever); (ii) DCDP Stage I (Original versus Reliever); and (iii) DCDP Stages I--II (Original versus Reliever). Here, ``Original'' corresponds to Reliever with \(r=1\) (that is, full model fits). We set $r=0.9$ for Reliever and use a coarse-grid step of $20$ for DCDP. Results for the high-dimensional linear setting of Section~\ref{subsec:hdlinear} ($n = 1200$) are summarized in Table~\ref{tab:DCDP}.
Across all three schemes, Reliever cuts fitting time while keeping error roughly at the same level---so the combined strategy ``Reliever + DCDP'' is promising for very large-scale problems.
\begin{table}[htb]
\setlength\tabcolsep{0pt}
\centering
\begin{threeparttable}
\caption{\small Detection error and runtime for DP and DCDP with/without Reliever in a high-dimensional linear model (coarse step $=20$; coverage ratio $r=0.9$).}
\label{tab:DCDP}
\tabcolsep=0.44em
\begin{tabular*}{.97\linewidth}{c@{\extracolsep{\fill}}*{6}{r}}
\toprule
Grid search & \multicolumn{2}{c}{DP} & \multicolumn{2}{c}{DCDP I} & \multicolumn{2}{c}{DCDP I--II} \\
\cline{1-1} \cline{2-3}  \cline{4-5} \cline{6-7}
Model fitting & Original & Reliever & Original & Reliever & Original & Reliever \\
\midrule
Error & 12.6 & 13.4 & 13.9 & 14.3 & 13.6 & 13.9 \\
Time & 7,700.0 & 82.0 & 32.9 & 10.7 & 33.4 & 11.2 \\
\bottomrule
\end{tabular*}
\end{threeparttable}
\end{table}

\subsection{Comparison with Two-Step Methods}\label{sec:comp_ts}

We present a comparative analysis between the Reliever method and the two-step approach proposed by \cite{kaul_efficient_2019-1}. The two-step method is specifically designed to detect a single changepoint in a high-dimensional linear model. It involves an initial guess of the changepoint, which divides the data into two intervals. Proxy models are then fitted within these intervals. Consequently, both methods expedite the process of change detection by reducing extensive model fits. For mitigating the uncertainty in the initialization, multiple guesses are considered, and a changepoint estimator that minimizes the total loss on both segments is reported. In our study, we consider the high-dimensional linear model discussed in Section 5.1 of \citet{kaul_efficient_2019-1}, with $n = 1200$ and $\tau^\ast = 120$. We consider multiple initial guesses, specifically ${0.25n, 0.5n, 0.75n}$. The results presented in Table \ref{tab:scp_ts_vs_rf} indicate that although the two-step method may offer faster computation due to fewer model fits, it also exhibits larger changepoint detection error. This can be attributed to its performance being heavily reliant on the accuracy of the initial changepoint estimate (or the quality of the corresponding intervals). In contrast, the Reliever method demonstrates stability across a range of choices for the parameter $r$, varying from $0.9$ to $0.3$.

\begin{table}[htb]
        \setlength\tabcolsep{1em}
        \centering
        \begin{threeparttable}
        \caption{\small Comparison of average changepoint detection error and computational time (\textit{in centiseconds}) between the Reliever method and the two-step method under the high-dimensional linear model with single changepoint, and $(n,\delta_{\mathsf{m}})=(1200,30)$. The numbers in parentheses represent the corresponding standard errors.}
        \label{tab:scp_ts_vs_rf}
        \begin{tabular*}{.97\linewidth}{c@{\extracolsep{\fill}}*{5}{r}}
        \toprule
        & Two-step & $r=0.9\,$ & $r=0.7\,$ & $r=0.5\,$ & $r=0.3\,$ \\
        \midrule
        Error & $18.6(3.2)$ & $9.2(1.0)$ & $9.4(1.1)$ & $7.6(0.8)$ & $8.7(1.4)$  \\
        Time ($10$ms) & $60.7(0.6)$ & $480.5(1.1)$ & $141.0(0.4)$ & $89.0(0.3)$ & $64.1(0.3)$ \\
        \bottomrule
        \end{tabular*}
        \vspace{-1em}
        \end{threeparttable}
\end{table}

Though without theoretical guarantees, the two-step method can be extended for multiple changepoint detection by incorporating BS along with the multiple guess scheme, as suggested by \citet{londschien_random_2022}. This extension can also be applied to WBS and SeedBS in a similar manner. In our study, we examine the examples presented in Sections \ref{subsec:hdlinear} and \ref{subsec:nmcd} with $n=1200$. Multiple initial guesses are selected as $m$-equally spaced quantiles within a search interval, following the recommendation by \citet{londschien_random_2022}. The results depicted in Table \ref{tab:mcp_ts_vs_rf} reveal that the two-step approach is less efficient for multiple changepoint detection, and increasing the number of multiple initial guesses can even have a detrimental impact on its performance. In contrast, the Reliever method (with $r=0.9$) exhibits performances that are almost comparable to the original implementation.

We also report the corresponding average computational time in Table~\ref{tab:mcp_ts_vs_rf_time}. It is noteworthy to emphasize that, for multiple changepoint detection tasks, even with $r=0.9$, the Reliever method shows comparable computational time to the two-step approach. When $r=0.8$, the Reliever method becomes more efficient. The result is slightly different from the single changepoint case. The reason is that for multiple changepoint detection algorithms like WBS and SeedBS, the two-step method should repeatedly fit the models for every wild/seeded interval. In contrast, the relief models are shared with the global system.

Furthermore, we posit that Reliever serves as a complementary tool rather than a rival to the two-step approach in the domain of multiple changepoint detection. The Reliever method can be combined with the two-step method for multiple changepoint detection tasks. This integration is beneficial because the two-step method still involves a significant number of model fits within the seeded/wild intervals. The Reliever method can further enhance computational efficiency by reducing the time required for these model fits.

\begin{table}[htb]
        \setlength\tabcolsep{1em}
        \centering
        \begin{threeparttable}
        \caption{Comparison of average changepoint detection error between the Reliever method and the two-step method under the multiple changepoint setting in Section~\ref{subsec:hdlinear} and  Section~\ref{subsec:nmcd}. The numbers in parentheses represent the corresponding standard errors}
        \label{tab:mcp_ts_vs_rf}
        \tabcolsep=0.44em
        \begin{tabular*}{.97\linewidth}{@{\extracolsep{\fill}}cc*{6}{r}}
        \toprule
        Example & Algorithm & $m=1\,$ & $m=3\,$ & $m=5\,$ & $r=0.9\,$ & $r=0.8\,$ & Original\, \\
        \midrule
        \multirow{2}{*}{HD} & WBS & $19.7(0.9)$ & $14.7(0.6)$ & $16.8(0.8)$ & $13.3(0.7)$ & $15.1(0.5)$ & $12.1(0.6)$ \\
        & SeedBS & $21.4(1.0)$ & $17.3(0.8)$ & $17.8(0.8)$ & $13.9(0.7)$ & $15.2(0.5)$ & $12.0(0.6)$ \\
         \multirow{2}{*}{NP} & WBS & $85.5(3.2)$ & $17.4(1.2)$ & $17.5(1.2)$ & $11.1(0.5)$ & $12.3(0.6)$ & $13.6(1.0)$ \\
        & SeedBS & $87.5(3.2)$ & $18.7(1.5)$ & $17.6(1.2)$ & $11.4(0.5)$ & $11.9(0.6)$ & $14.1(1.0)$ \\
        \bottomrule
        \end{tabular*}
        \vspace{-1em}
        \end{threeparttable}
\end{table}

\begin{table}[htb]
        \setlength\tabcolsep{1em}
        \centering
        \begin{threeparttable}
        \caption{Comparison of average computational time (\textit{in seconds}) between the Reliever method and the two-step method under the multiple changepoint setting in Section~\ref{subsec:hdlinear} and  Section~\ref{subsec:nmcd}.}
        \label{tab:mcp_ts_vs_rf_time}
        \tabcolsep=0.44em
        \begin{tabular*}{.97\linewidth}{@{\extracolsep{\fill}}cc*{5}{r}}
        \toprule
        Example & Algorithm & $m=1\,$ & $m=3\,$ & $m=5\,$ & $r=0.9\,$  & $r=0.8\,$ \\
        \midrule
        \multirow{2}{*}{HD} & WBS & $20.8 (0.2)$ & $37.0 (0.6)$ & $52.4 (1.2)$ & $49.1 (0.8)$ & $16.3 (0.3)$ \\
        & SeedBS & $17.2 (0.1)$ & $30.2 (0.3)$ & $42.6 (0.8)$ & $24.0 (0.4)$ & $10.0 (0.1)$ \\
         \multirow{2}{*}{NP} & WBS & $2.6 (0.1)$ & $5.7 (0.2)$ & $8.7 (0.2)$ & $20.4 (0.1)$ & $6.3 (0.1)$ \\
        & SeedBS & $1.2 (0.1)$ & $2.5 (0.1)$ & $3.9 (0.1)$ & $7.0 (0.1)$ & $3.2 (0.1)$ \\
        \bottomrule
        \end{tabular*}
        \vspace{-1em}
        \end{threeparttable}
\end{table}

\section{Concluding Remarks}\label{sec:conclusion}

Searching for multiple changepoints in complex models with large datasets poses significant computational challenges.
Current algorithms involve fitting a sequence of models and evaluating losses within numerous intervals during the search process. Existing approaches, such as PELT, WBS, SeedBS, and optimistic search algorithms, aim to reduce the number of (search) intervals. In this paper, we introduce Reliever which specifically relieves the computational burden by reducing the number of fitted models, as they are the primary contributors to computational costs. Our method associates each search interval with a deterministic (relief) interval from a pre-defined pool, enabling the fitting of models only within (or partially within) these selected intervals. The simplicity of the Reliever approach allows for seamless integration with various grid-search algorithms and accommodates different models, providing tremendous potential for leveraging modern machine learning tools \citep{londschien_random_2022,liu2021score,li2022automatic}.

Reliever incorporates a coverage ratio parameter, which balances computational efficiency and estimation accuracy. For high-dimensional regression models with changepoints, by employing an OP algorithm, we characterize requirements on the search path to ensure consistent and nearly rate-optimal estimators for changepoints; see Lemma \ref{lem:loc_err_g}. Our analysis demonstrates that the Reliever method satisfies these properties for any fixed coverage ratio parameter. Further investigation is warranted to characterize the search path for other algorithms and broader model classes. Additionally, our theoretical analysis highlights the importance of adaptively selecting the nuisance parameter based on the underlying change magnitude. Future research should focus on extending the Reliever to enable data-driven selection of nuisance parameters. While the Reliever focuses on changepoint estimation, it is worth exploring the generalization of these concepts to quantify uncertainty in changepoint detection \citep{frick2014multiscale,chen_data-driven_2021} and perform post-change-estimation inference \citep{jewell_testing_2022}.

\section*{Acknowledgments}

All authors contributed equally to this work and are listed in alphabetical order. We would like to acknowledge the action editor, Ji Zhu, and anonymous referees for their valuable comments and suggestions, which have improved the manuscript greatly. Qian's research was supported by the National Natural Science Foundation of China (No. 12501410) and the China Postdoctoral Science Foundation (No. 2024M761934, GZB20250716). Wang was supported by the National Natural Science Foundation of China (No. 12471255), the Natural Science Foundation of Shanghai (No. 23ZR1419400), and the Fundamental Research Funds for the Central Universities (No. 63253110). Zou was supported by the National Key R\&D Program of China (No. 2022YFA1003703) and the National Natural Science Foundation of China (No. 12231011).

\clearpage
\appendix

\section*{Appendix}

The appendix provides proofs of all theoretical results in this article and offers additional numerical analyses.

\section{Proof of Proposition \ref{thm:mest}}\label{sec:proof_mest}

For a fixed $\balpha$, denote the random vectors $\xbf_{i}$ by
\begin{equation*}
    \xbf_{i} = g\Bigl(\zbf_i, \btheta_{I}^\circ + \frac{\balpha}{\size{I}^{\frac{1}{2}}}\Bigr) - g(\zbf_i, \btheta_{I}^\circ).
\end{equation*}
Denote $v_I = (\log n)^{\frac{1}{2}}$. By (g), uniformly for all $\norm{\balpha}_2 \le M v_I$ (with some constant $M > 0$), $\norm{\xbf_i}_{\Psi_1} \le C_{\ref*{sec:proof_mest}.3} M v_I \size{I}^{-\frac{1}{2}}$. Therefore, by applying an exponential inequality,
\begin{equation*}
    \sup_{\norm{\balpha}_2 \le M v_I} \Pbb\biggl[\Bigabs{ G_{I}(\balpha) - G_{I}(\zero) - \overline{G}_{I}(\balpha)} \ge \frac{C_u C_{\ref*{sec:proof_mest}.3} M}{c_b} \size{I}^{-1} v_I (\log n)^{\frac{1}{2}}\biggr] \le 2 \exp(-C_u \log n).
\end{equation*}
By (f),
\begin{equation*}
    \sup_{\norm{\balpha}_2 \le M v_I} \Bigabs{\frac{\Hbf_{I} \balpha}{\size{I}^{\frac{1}{2}}} - \overline{G}_{I}(\balpha)} \le C_{\ref*{sec:proof_mest}.2} M^2 v_I^2 \size{I}^{-1}.
\end{equation*}
The above two inequalities imply that
\begin{equation*}
    \sup_{\norm{\balpha}_2 \le M v_I} \Pbb\biggl[\Bigabs{ G_{I}(\balpha) - G_{I}(\zero) - \frac{\Hbf_{I} \balpha}{\size{I}^{\frac{1}{2}}}} \ge \frac{C_u C_{\ref*{sec:proof_mest}.3} M}{c_b} \size{I}^{-1} v_I (\log n)^{\frac{1}{2}}\biggr] \le 2 \exp(-C_u \log n).
\end{equation*}
By the chaining technique for convex function, that is, the $\delta$-triangulation argument used in \citet{niemiro_asymptotics_1992},
\begin{equation*}
    \Pbb\biggl[\sup_{\norm{\balpha}_2 \le M v_I} \Bigabs{G_{I}(\balpha) - G_{I}(\zero) - \frac{\Hbf_{I} \balpha}{\size{I}^{\frac{1}{2}}}} \ge C_{\ref*{sec:proof_mest}.6} \size{I}^{-1} v_I (\log n)^{\frac{1}{2}} \biggr] \le 2 \size{I}^{\frac{p}{2}} \exp(- C_u \log n).
\end{equation*}
By the sub-exponential assumption, we choose $M > 0$ such that $\Pbb[\norm{\size{I}^{\frac{1}{2}}\Hbf_{I}^{-1} G_{I}(\zero)}_2 \ge (M - 1) (\log n)^{\frac{1}{2}}] \le 2\exp(-C_u \log n)$. It implies that with high probability, $\size{I}^{\frac{1}{2}}\Hbf_{I}^{-1} G_{I}(\zero)$ is in the ball $\set{\xbf \in \Rbb^p: \norm{\xbf}_2 < (M - 1) (\log n)^{\frac{1}{2}}}$. For all $\ebf \in \Rbb^{p}$ with $\norm{\ebf}_2 = 1$, let $\balpha = -\size{I}^{\frac{1}{2}} \{\Hbf_{I}^{-1} G_{I}(\zero) + (K \log n) \size{I}^{-1} \ebf \}$ with $K = 2 C_{\ref*{sec:proof_mest}.6}/\lambda_{\min}(\Hbf_{I})$. With probability at least $1 - 2 (1 + \size{I}^{p/2}) \exp(-C_u \log n)$,
\begin{align*}
    & \ebf^\top G_{I}\bigl(\size{I}^{\frac{1}{2}}\Hbf_{I}^{-1} G_{I}(\zero) + (K \log n) \size{I}^{-\frac{1}{2}} \ebf\bigr) \\
    \ge & (K \size{I}^{-1} \log n) \cdot \ebf^\top \Hbf_{I} \ebf - C_{\ref*{sec:proof_mest}.6} \size{I}^{-1} \log n > 0.
\end{align*}
It means that $\hat{\btheta}_{I}$ is in the open ball $\set{\btheta_{I}^\circ - \Hbf_{I}^{-1} G_{I}(\zero) + (K \log n) \size{I}^{-1} \ebf: \norm{\ebf}_2 < 1}$. By taking the union bounds over the intervals $I \subset (0, n]$, uniformly with probability at least $1 - \exp(- C_{\ref*{sec:proof_mest}.7} \log n)$,
\begin{equation}\label{equ:bahadur}
    (\hat{\btheta}_{I} - \btheta_{I}^\circ) = - \Hbf_{I}^{-1} G_{I}(\zero) + \rbf_{I},
\end{equation}
where $\max_{\size{I} \subset (0, n]} \rbf_{I} \size{I} / \log n = O(1)$.

We have now obtained the uniform Bahadur representation, which holds over $I \subset (0, n]$ with high probability. To measure the difference between $\hat{\btheta}_{I}$ and $\hat{\btheta}_{R}$, we first consider the population one. Recall that $R \in \R$ is the relief interval of $I$. First of all, we study the population minimizers. By the $\rho$-strong convexity and the definition of $\btheta_{I}^\circ$ and $\btheta_{R}^\circ$,
\begin{equation*}
    0 \le \overline{\Lcal}(I, \btheta_{R}^\circ) - \overline{\Lcal}(I, \btheta_{I}^\circ) \le \nabla_{\theta} \overline{\Lcal}(I, \btheta_{R}^\circ)^\top (\btheta_{R}^\circ - \btheta_{I}^\circ) - \frac{\rho \size{I}}{2} \norm{\btheta_{R}^\circ - \btheta_{I}^\circ}_2^2,
\end{equation*}
which implies that
\begin{equation*}
    \norm{\btheta_{R}^\circ - \btheta_{I}^\circ}_2 \le \frac{2}{\rho \size{I}} \Bignorm{\sum_{i \in I \setminus R} \Ebb g(\zbf_i, \btheta_{R}^\circ)}_2 \le \frac{2 \zeta}{\rho \size{I}} \sum_{i \in I \setminus R} \norm{\btheta_{R}^\circ - \btheta_{i}^\circ}_2 = O(1-r).
\end{equation*}

Assume that the Bahadur representation Eq. (\ref{equ:bahadur}) holds thereafter. We have the following identity of the difference between $\htheta_I$ and $\htheta_R$,
\begin{equation*}
    \hat{\btheta}_{I} - \hat{\btheta}_{R} = \btheta_{I}^\circ - \btheta_{R}^\circ + \Hbf_{R}^{-1} G_{R}(\zero) - \Hbf_{I} G_{I}(\zero) + \rbf_{I} - \rbf_{R}.
\end{equation*}
For $\Hbf_{R}^{-1} G_{R}(\zero) - \Hbf_{I} G_{I}(\zero)$, further consider the following decomposition,
\begin{equation*}
    \Hbf_{R}^{-1} G_{R}(\zero) - \Hbf_{I} G_{I}(\zero) = (\Hbf_{R}^{-1} - \Hbf_{I}^{-1}) G_{R}(\zero) + \Hbf_{I}^{-1} \{G_{R}(\zero) - G_{I}(\zero)\}.
\end{equation*}
For the first part, by the sub-exponential assumption (d), with probability at least $1 - \exp(-C_{u} \log n)$,
\begin{equation*}
    \norm{(\Hbf_{R}^{-1} - \Hbf_{I}^{-1}) G_{R}(\zero)}_2 \le C_{\ref*{sec:proof_mest}.8} \norm{\btheta_{I}^\circ - \btheta_{R}^\circ}_2  \biggl[\Bigl(\frac{\log n}{\size{R}}\Bigr)^{\frac{1}{2}} + \frac{\log n}{\size{R}}\biggr].
\end{equation*}
For the second part,
\begin{equation*}
    G_{R}(\zero) - G_{I}(\zero) = \sum_{i \in R} \Bigl[\frac{1}{\size{R}} g(\zbf_i, \btheta_{R}^\circ) - \frac{1}{\size{I}} g(\zbf_i, \btheta_{I}^\circ)\Bigr] - \sum_{i \in I \setminus R} \frac{1}{\size{I}} g(\zbf_i, \btheta_{I}^\circ) \triangleq \frac{1}{\size{I}} \sum_{i \in I} \xbf_{i},
\end{equation*}
where $\xbf_{i} = [g(\zbf_i, \btheta_{R}^\circ) \size{I} / \size{R}] - g(\zbf_i, \btheta_{I}^\circ)$ for $i \in R$ and $\xbf_i = - g(\zbf_i, \btheta_{I}^\circ)$ for $i \in I \setminus R$. For any individual $i \in R$, by assumptions (d) and (g),
\begin{align*}
    \norm{\xbf_i}_{\Psi_1} &= \Bignorm{ \{g(\zbf_i, \btheta_{R}^\circ) - g(\zbf_i, \btheta_{I}^\circ)\} + \frac{(1-r)}{r} g(\zbf_i, \btheta_{R}^\circ)}_{\Psi_1} \\
    & \le \norm{\btheta_{I}^\circ - \btheta_{R}^\circ}_2 + \frac{1-r}{r} (C_{\ref*{sec:proof_mest}.3} \norm{\btheta_{R}^\circ - \btheta_{i}^\circ}_2 + C_{\ref*{sec:proof_mest}.1}) \le \norm{\btheta_{I}^\circ - \btheta_{R}^\circ}_2 + \frac{1-r}{r}C_{\ref*{sec:proof_mest}.9}
\end{align*}
For $i \in I \setminus R$,
\begin{equation*}
    \norm{\xbf_i}_{\Psi_1} \le (C_{\ref*{sec:proof_mest}.3} \norm{\btheta_{I}^\circ - \btheta_{i}^\circ}_2 + C_{\ref*{sec:proof_mest}.1}) \le C_{\ref*{sec:proof_mest}.9}.
\end{equation*}
In the above two inequalities, we make use of Condition (j), the boundness of parameters. By Bernstein's inequality (Lemma \ref{lem:bernstein}), with probability at least $1 - \exp(-C_{u} \log n)$,
\begin{equation*}
    \norm{G_{R}(\zero) - G_{I}(\zero)}_2 \le C_{\ref*{sec:proof_mest}.10}\biggl[\bigl(\norm{\btheta_{I}^\circ - \btheta_{R}^\circ}_2 + r^{-\frac{1}{2}}(1 - r)^{\frac{1}{2}}\bigr) \Bigl(\frac{\log n}{\size{I}}\Bigr)^{\frac{1}{2}} + \frac{\log n}{r \size{I}}\biggr].
\end{equation*}
Overall we obtain,
\begin{equation*}
    \norm{\hat{\btheta}_{I} - \hat{\btheta}_{R}}_2 \le O\biggl(\norm{\btheta_{I}^\circ - \btheta_{R}^\circ}_2 + (1 - r)^{\frac{1}{2}} \Bigl(\frac{\log n}{r\size{I}}\Bigr)^{\frac{1}{2}} + \frac{\log n}{r\size{I}}\biggr).
\end{equation*}

By the definition of $\hat{\btheta}_{R}$, one obtains $\nabla_{\btheta}\Lcal(I, \hat{\btheta}_{R}) = \sum_{i \in I \setminus R} g(\zbf_i, \hat{\btheta}_{R})$. Similarly, by the $\delta$-triangulation argument used in the proof of the Bahadur representation, with probability at least $1 - \exp(-C_u \log n)$, uniformly for all intervals $I$,
\begin{equation*}
    \Bignorm{\sum_{i \in I \setminus R} \Bigl\{g(\zbf_i, \hat{\btheta}_{R}) - g(\zbf_i, \btheta_{R}^\circ) - \Ebb\bigl[g(\zbf_i, \hat{\btheta}_{R}) - g(\zbf_i, \btheta_{R}^\circ)\bigr]\Bigr\}}_2 = O(\log n),
\end{equation*}
\begin{equation*}
    \Bignorm{\sum_{i \in I \setminus R} \Ebb\{g(\zbf_i, \hat{\btheta}_{R}) - g(\zbf_i, \btheta_{R}^\circ)\}}_2 \le \zeta (1 - r) \size{I} \norm{\hat{\btheta}_{R} - \btheta_{R}^\circ}_2 = O\!\biggl(\!(1 - r)\biggl\{\!\!\Bigl(\frac{\size{I} \log n}{r}\Bigr)^{\frac{1}{2}} + \frac{\log n}{r}\!\biggr\}\!\biggr),
\end{equation*}

\begin{equation*}
    \Bignorm{\sum_{i \in I \setminus R} g(\zbf_i, \btheta_{R}^\circ)}_2 = \Bignorm{\sum_{i \in I \setminus R} \Ebb g(\zbf_i, \btheta_{R}^\circ)}_2 + O(\{(1-r)\size{I} \log n\}^{\frac{1}{2}} + \log n).
\end{equation*}
Combining the above three upper bounds,
\begin{equation*}
    \nabla_{\btheta}\Lcal(I, \hat{\btheta}_{R}) = \Bignorm{\sum_{i \in I \setminus R} \Ebb g(\zbf_i, \btheta_{R}^\circ)}_2 + O\biggl( (1 - r)^{\frac{1}{2}} \Bigl(\frac{\size{I}\log n}{r}\Bigr)^{\frac{1}{2}} + \frac{\log n}{r}\biggr).
\end{equation*}
By the convexity condition (h),
\begin{align}
    & \frac{1}{\size{I}}\{\Lcal(I, \hat{\btheta}_{R}) - \Lcal(I, \hat{\btheta}_{I})\} \le \frac{1}{\size{I}}\nabla_{\btheta}\Lcal(I, \hat{\btheta}_{R})^\top (\hat{\btheta}_{R} - \hat{\btheta}_{I}) \le \frac{1}{\size{I}}\norm{\nabla_{\btheta}\Lcal(I, \hat{\btheta}_{R})}_2 \norm{\hat{\btheta}_{R} - \hat{\btheta}_{I}}_2 \nonumber\\
    = & O\biggl(\frac{1}{\rho \size{I}^2} \Bignorm{\sum_{i \in I \setminus R} \Ebb g(\zbf_i, \btheta_{R}^\circ)}_2^2 + \frac{(1 - r)\log n}{r\size{I}} + \frac{(\log n)^2}{r^2\size{I}^2}\biggr).
\end{align}
When $I = (s, e]$ contains no changepoint, or it is nearly homogeneous such that if a true changepoint $\tau \in I$, then $\min(\tau - s, e - \tau) = O(\log n)$, we have $\sum_{i \in I \setminus R} \Ebb g(\zbf_i, \btheta_R^\circ) = O(\min(\tau - s, e - \tau)) = O(\log n)$. Therefore,
\begin{equation*}
    \frac{1}{\size{I}}\{\Lcal(I, \hat{\btheta}_{R}) - \Lcal(I, \hat{\btheta}_{I})\} = O\biggl(\frac{(1 - r)\log n}{r\size{I}} + \frac{(\log n)^2}{r^2\size{I}^2}\biggr).
\end{equation*}

\section{Proof of Lemma \ref{lem:loc_err_g}}

We first introduce some notations. For a given changepoint estimation $\tau \in [n]$ and a changepoint set $\cpsset = \set{0 = \tau_0 < \tau_1 < \dots < \tau_K < \tau_{K+1} = n}$, denote
$$\L(\Tcal) \triangleq \sum_{k=1}^{K+1} \L((\tau_{k-1},\tau_{k}];\widetilde{\btheta}((\tau_{k-1},\tau_{k}]))$$
as the loss function in Eq.~\eqref{object}, $\Kcal_{+}(\tau,\cpsset) \triangleq \min_{k}\set{k : \tau_k > \tau}$ and $\Kcal_{-}(\tau,\cpsset) \triangleq \max_{k} \set{k: \tau_k < \tau}$. For simplicity, further denote $k_{\tau,+}^\ast = \Kcal_{+}(\tau, \truecps)$, $\hat{k}_{\tau,+} = \Kcal_{+}(\tau, \estcps)$, $k_{\tau,-}^{\ast} = \Kcal_{-}(\tau, \truecps)$ and $\hat{k}_{\tau,-} = \Kcal_{-}(\tau, \estcps)$. Let $\estcps = \set{\hat{\tau}_1,\dots,\hat{\tau}_{\widehat{K}}}$ be the minimizer of Eq. (\ref{object}). Denote $\delta_{\mathsf{m}} = C_{\mathsf{m}} s \log (p \vee n)$ and $\delta_k = \widetilde{C} s \log(p \vee n) \Delta_{k}^{-2}$ where $\Delta_k = \norm{\btheta_{k+1}^\ast - \btheta_k^\ast}_{\Sigma}$, and $\Hcal = \set{(\hat{\tau}_a, \hat{\tau}_{a+1}]: \exists k \in [K^\ast], \min(\tau_k^\ast - \hat{\tau}_a,\hat{\tau}_{a+1} - \tau_k^\ast) > \delta_k}$.

Assume that $\Hcal \neq \varnothing$, that is, $\exists k \in [K^\ast]$ such that $\estcps \cap [\tau_k^\ast - \delta_k, \tau_k^\ast + \delta_k] = \varnothing$. For such $h$ and $a$, without loss of generality assume that $\tau_k^\ast - \hat{\tau}_a > \delta_k$, it can be observed that $(\tau_k^\ast - \delta_k, \tau_k^\ast + \delta_k] \subset (\hat{\tau}_a, \hat{\tau}_{a+1}]$ and $\Delta_{(\hat{\tau}_a, \hat{\tau}_{a+1}]}^2 (\hat{\tau}_{a+1} - \hat{\tau}_a) \ge 2 \delta_k \Delta_{(\tau_k^\ast - \delta_k, \tau_k^\ast + \delta_k]}^2 = \delta_k \Delta_k^2 / 2 = 2^{-1}\widetilde{C} s \log(p \vee n)$.

To move further, we need the following definitions to divide $\Hcal$ into four groups.
\begin{definition}[Separability of a point]
    For a changepoint estimation $\tau$ and the true changepoint set $\truecps$, let $u = k_{\tau,-}^\ast$ and $v = k_{\tau,+}^\ast$.
    We say that $\tau$ is separable from the left if $\tau - \tau_u^\ast > \delta_u \vee \delta_{\mathsf{m}}$ and separable from the right if $\tau_v^\ast - \tau > \delta_v \vee \delta_{\mathsf{m}}$.
    Otherwise, $\tau$ is inseparable from the left (right).
\end{definition}
\begin{definition}[Separability of an interval]
    For the intervals $(\tau_l,\tau_r] \in \Hcal$, we make the following definitions,
    \begin{enumerate}[label=$\Hcal_\arabic*:\,$]
        \item $(\tau_l,\tau_r] \in (0, n]$ is separable if $\tau_l$ is separable from the right and $\tau_r$ is separable from the left.
        \item $(\tau_l,\tau_r] \in (0, n]$ is left-separable if $\tau_l$ is separable from the right and $\tau_r$ is inseparable from the left.
        \item $(\tau_l,\tau_r] \in (0, n]$ is right-separable if $\tau_l$ is inseparable from the right and $\tau_r$ is separable from the left.
        \item $(\tau_l,\tau_r] \in (0, n]$ is inseparable if $\tau_l$ is inseparable from the right and $\tau_r$ is inseparable from the left.
    \end{enumerate}
\end{definition}
Now the sub-intervals in $\Hcal$ have been classified into four groups $\Hcal = \Hcal_1 \cup \Hcal_2 \cup \Hcal_3 \cup \Hcal_4$. We will show that $\Hcal = \varnothing$ by emptying these groups.

\subsubsection*{Case 1: $\Hcal_1 = \varnothing$}
For $(\hat{\tau}_a, \hat{\tau}_{a+1}] \in \Hcal_1$, let $h = k_{\hat{\tau}_a,+}^\ast$. Denote $\cpsset_a = \set{\tau_{h}^\ast,\dots,\tau_{h+t}^\ast}=\truecps \cap (\hat{\tau}_a, \hat{\tau}_{a+1})$. Let $\widetilde{\cpsset} = \estcps \cup \cpsset_a$. Since $\gamma = C_{\gamma} s \log(p \vee n)$,
\begin{align*}
    &\L(\estcps) - \L(\widetilde{\cpsset}) = \L_{(\hat{\tau}_a, \hat{\tau}_{a+1}]} - \Bigl[\L_{(\hat{\tau}_a, \tau_h^\ast]} + \L_{(\tau_{h+t}^\ast, \hat{\tau}_{a+1}]} + \sum_{j=h}^{h+t-1} \L_{(\tau_j^\ast, \tau_{j+1}^\ast]} + (t+1)\gamma\Bigr]\\
    > & (1 - C_{\ref*{lem:loc_err_g}.3}) \Delta_{(\hat{\tau}_a, \hat{\tau}_{a+1}]}^2 (\hat{\tau}_{a+1} - \hat{\tau}_a) - (t+2)C_{\ref*{lem:loc_err_g}.1} s \log(p \vee n) - (t+1) \gamma\\
    = & (1 - C_{\ref*{lem:loc_err_g}.3}) \sum_{i \in (\hat{\tau}_a, \hat{\tau}_{a+1}]} \norm{\btheta_i^\circ - \btheta_{(\hat{\tau}_a, \hat{\tau}_{a+1}]}^\circ}_{\Sigma}^2 - [(t+2) C_{\ref*{lem:loc_err_g}.1} + (t+1) C_{\gamma}] s \log(p \vee n) \\
    \ge & \Bigl[(1 - C_{\ref*{lem:loc_err_g}.3}) (t + 1) 2^{-1} \widetilde{C} - (t+2) C_{\ref*{lem:loc_err_g}.1} - (t+1) C_{\gamma}\Bigr] s \log(p \vee n) > 0,
\end{align*}
provided that $\widetilde{C} \ge 2 (1 - C_{\ref*{lem:loc_err_g}.3})^{-1} (2 C_{\ref*{lem:loc_err_g}.1} + C_{\gamma})$. Therefore $\Hcal_1 = \varnothing$.

\subsubsection*{Case 2: $\Hcal_2 = \Hcal_3 = \varnothing$}

Without loss of generality, by the symmetry of $\Hcal_2$ and $\Hcal_3$, we only show that $\Hcal_3 = \varnothing$. If the claim does not hold, one can choose $(\hat{\tau}_a, \hat{\tau}_{a+1}] \in \Hcal_3$ to be the leftmost one. Hence $\hat{\tau}_a$ must be separable from the left by Condition~\ref{cond:change}. Since $\Hcal_1 = \varnothing$ and $(\hat{\tau}_a, \hat{\tau}_{a+1}]$ is the leftmost interval in $\Hcal_3$, one obtains $(\hat{\tau}_{a-1}, \hat{\tau}_a] \not\in \Hcal$. Denote $h = k_{\hat{\tau}_a,+}^\ast$ and $\cpsset_a = \truecps \cap (\hat{\tau}_a + \delta_{\mathsf{m}}, \hat{\tau}_{a+1}-\delta_{\mathsf{m}}) = \set{\tau_{h+1}^\ast,\dots,\tau_{h+t}^\ast}$ ($t=0$ if $\cpsset_a = \varnothing$). Let $\widetilde{\cpsset} = (\estcps \setminus {\hat{\tau}_a}) \cup {\tau_h^\ast} \cup \cpsset_a = (\estcps \setminus {\hat{\tau}_a}) \cup \set{\tau_j^\ast}_{j=h}^{h+t}$.
\begin{align}\label{equ:case2_1}
    \L(\estcps) - \L(\widetilde{\cpsset}) = & \L_{(\hat{\tau}_a, \hat{\tau}_{a+1}]} + \bigl(\L_{(\hat{\tau}_{a-1}, \hat{\tau}_a]} - \L_{(\hat{\tau}_{a-1}, \tau_h^\ast]}\bigr) - \Bigl[\sum_{j=h}^{h+t-1} \L_{(\tau_j^\ast, \tau_{j+1}^\ast]} + \L_{(\tau_{h+t}^\ast, \hat{\tau}_{a+1}]} + t \gamma \Bigr]\nonumber\\
    > & (1 - C_{\ref*{lem:loc_err_g}.3})\Delta_{(\hat{\tau}_a, \hat{\tau}_{a+1}]}^2 (\hat{\tau}_{a+1} - \hat{\tau}_a) - [(t+1) C_{\ref*{lem:loc_err_g}.1} + t C_{\gamma}] s \log(p \vee n) \nonumber\\
    + & \Bigl(\sum_{i \in (\hat{\tau}_{a},\tau_h^\ast]} \epsilon_i^2 + \L_{(\hat{\tau}_{a-1}, \hat{\tau}_a]} - \L_{(\hat{\tau}_{a-1}, \tau_h^\ast]}\Bigr).
\end{align}
Since $(\hat{\tau}_{a-1}, \hat{\tau}_a] \not\in \Hcal$ and $0 < \hat{\tau}_a - \tau_h^\ast < \delta_{\mathsf{m}}$, one must obtain that either $(\hat{\tau}_{a-1}, \hat{\tau}_a) \cap \cpsset^\ast = \varnothing$ or $0 < \tau_{h-1}^\ast - \hat{\tau}_{a-1} < \delta_{h-1} = \widetilde{C} \Delta_{h-1}^{-2} s \log(p \vee n)$.

For the first scenario, under $\mathbb{G}_1$,
\begin{equation*}
    \Bigabs{\sum_{i \in (\hat{\tau}_{a},\tau_h^\ast]} \epsilon_i^2 + \L_{(\hat{\tau}_{a-1}, \hat{\tau}_a]} - \L_{(\hat{\tau}_{a-1}, \tau_h^\ast]}} \le 2 C_{\ref*{lem:loc_err_g}.1} s \log(p \vee n).
\end{equation*}
Hence,
\begin{align*}
    \L(\estcps) - \L(\widetilde{\cpsset}) & > (1 - C_{\ref*{lem:loc_err_g}.3})\Delta_{(\hat{\tau}_{a}, \hat{\tau}_{a+1}]}^2 (\hat{\tau}_{a+1} - \hat{\tau}_{a}) - [(t+3) C_{\ref*{lem:loc_err_g}.1} + t C_{\gamma}] s \log(p \vee n)\\
    & \ge \Bigl\{(1 - C_{\ref*{lem:loc_err_g}.3})(t \vee 1)2^{-1}\widetilde{C} - (t + 3) C_{\ref*{lem:loc_err_g}.1} - t C_{\gamma} \Bigr\} s \log(p \vee n) > 0,
\end{align*}
provided that $\widetilde{C} \ge 2 (1 - C_{\ref*{lem:loc_err_g}.3})^{-1} (4 C_{\ref*{lem:loc_err_g}.1} + C_{\gamma})$.

For the second scenario, let $I_1 = (\hat{\tau}_{a-1}, \hat{\tau}_a]$ and $I_2 = (\hat{\tau}_{a-1}, \tau_h^\ast]$. Firstly, we will bound the gap $\Delta_{I_2}^2 \size{I_2} - \Delta_{I_1}^2 \size{I_1}$. Since $I_1 \subset I_2$, we have $\Delta_{I_2}^2 \size{I_2} - \Delta_{I_1}^2 \size{I_1} \ge 0$.

Denote $d_1 = \tau_{h-1}^\ast - \hat{\tau}_{a-1}$, $d_2 = \hat{\tau}_a - \tau_{h-1}^\ast$ and $d_3 = \tau_h^\ast - \hat{\tau}_a$. Recall that $\Delta_{h-1} = \norm{\btheta_h^\ast - \btheta_{h-1}^\ast}_{\Sigma}$ and the definition of $\Delta_I^2$, we have
\begin{equation*}
    \Delta_{I_2}^2 \size{I_2} = \frac{d_1 (d_2 + d_3)}{d_1 + d_2 + d_3} \Delta_{h-1}^2,\, \Delta_{I_1}^2 \size{I_1} = \frac{d_1 d_2}{d_1 + d_2} \Delta_{h-1}^2.
\end{equation*}
It follows that
\begin{equation*}
    \Delta_{I_2}^2 \size{I_2} - \Delta_{I_1}^2 \size{I_1} = \frac{d_1^2 d_3 \Delta_{h-1}^2}{(d_1 + d_2)(d_1 + d_2 + d_3)} \le \frac{\widetilde{C}^2 ( \widetilde{C} \vee C_{\mathsf{m}})}{C_{\mathsf{snr}}(C_{\mathsf{snr}} - \widetilde{C} \vee C_{\mathsf{m}})} s \log(p \vee n).
\end{equation*}
where the last inequality is from the conditions $d_1 \le \widetilde{C} \Delta_{h-1}^{-2} s \log(p \vee n)$, $d_3 \le \delta_h \vee \delta_{\mathsf{m}}$ and $d_1 + d_2 + d_3 \ge C_{\mathsf{snr}} s \log(p \vee n) [1 + \Delta_{h-1}^{-2} + \Delta_{h}^{-2}]$. Denote $C_{m,1} = \widetilde{C}^2 ( \widetilde{C} \vee C_{\mathsf{m}}) / \{ C_{\mathsf{snr}}(C_{\mathsf{snr}} - \widetilde{C} \vee C_{\mathsf{m}})\}$.

By $0 < \tau_{h-1}^\ast - \hat{\tau}_{a-1} < \delta_{h-1} = \widetilde{C} \Delta_{h-1}^{-2} s \log(p \vee n)$, $\Delta_{I_1}^2 \size{I_1} \le \Delta_{I_2}^2 \size{I_2} \le \widetilde{C} s \log(p \vee n)$. It means that $I_1 \subset I_2 \in E_2^+ \subseteq E_2^-$.
Hence by $\mathbb{G}_2^- \cap \mathbb{G}_2^+$,
\begin{equation}\label{equ:case2_low_diff}
    \sum_{i \in (\hat{\tau}_{a},\tau_h^\ast]} \epsilon_i^2 + \L_{(\hat{\tau}_{a-1}, \hat{\tau}_a]} - \L_{(\hat{\tau}_{a-1}, \tau_h^\ast]} > -(2 C_{\ref*{lem:loc_err_g}.2} + C_{m,1}) s \log(p \vee n).
\end{equation}
By Eq. (\ref{equ:case2_1}) and Eq. (\ref{equ:case2_low_diff}),
\begin{align*}
    & \L(\estcps) - \L(\widetilde{\cpsset}) \\
    >& (1 - C_{\ref*{lem:loc_err_g}.3})\Delta_{(\hat{\tau}_{a}, \hat{\tau}_{a+1}]}^2 (\hat{\tau}_{a+1} - \hat{\tau}_{a}) - [(t + 1) C_{\ref*{lem:loc_err_g}.1} + t C_{\gamma} + 2 C_{\ref*{lem:loc_err_g}.2} + C_{m,1}] s \log(p \vee n)\\
    \ge & [(1 - C_{\ref*{lem:loc_err_g}.3})(t \vee 1) 2^{-1} \widetilde{C} - (t + 1) C_{\ref*{lem:loc_err_g}.1} - t C_{\gamma} - 2 C_{\ref*{lem:loc_err_g}.2} - C_{m,1}] s \log(p \vee n) > 0,
\end{align*}
provided that $\widetilde{C} \ge 2 (1 - C_{\ref*{lem:loc_err_g}.3})^{-1} (2 C_{\ref*{lem:loc_err_g}.1} + C_{\gamma} + 2 C_{\ref*{lem:loc_err_g}.2} + C_{m,1})$. Hence $\Hcal_2 \cup \Hcal_3 = \varnothing$.

\subsubsection*{Case 3: $\Hcal_4 = \varnothing$}

Similar to Case 2, let $(\hat{\tau}_a, \hat{\tau}_{a+1}] \in \Hcal_4$, then $\hat{\tau}_a$ is separable from the left and $\hat{\tau}_{a+1}$ is separable from the right. By the fact that $\Hcal_1 \cup \Hcal_2 \cup \Hcal_3 = \varnothing$, we also obtain $(\hat{\tau}_{a-1}, \hat{\tau}_a] \not\in \Hcal$ and $(\hat{\tau}_{a+1}, \hat{\tau}_{a+2}] \not\in \Hcal$. Let $h = k_{\hat{\tau}_a,+}^\ast$ and $h + t = k_{\hat{\tau}_{a+1},-}^\ast$. Denote $\cpsset_a = \set{\tau_h^\ast,\dots,\tau_{h+t}^\ast}$ and $\widetilde{\cpsset} = (\estcps \setminus \set{\hat{\tau}_a, \hat{\tau}_{a+1}} \cup \cpsset_a$. We have
\begin{align*}
    \L(\estcps) - \L(\widetilde{\cpsset}) =& \L_{(\hat{\tau}_a, \hat{\tau}_{a+1}]} +[\L_{(\hat{\tau}_{a-1}, \hat{\tau}_a]} + \L_{(\hat{\tau}_{a+1}, \hat{\tau}_{a+2}]} - \L_{(\hat{\tau}_{a-1}, \tau_{h}^\ast]} - \L_{(\tau_{h+t}^\ast, \hat{\tau}_{a+2}]}]\\
    - & \sum_{j=h}^{h+t-1} \L_{(\tau_j^\ast, \tau_{j+1}^\ast]} - (t-1)\gamma\\
    > & (1 - C_{\ref*{lem:loc_err_g}.3})\Delta_{(\hat{\tau}_a, \hat{\tau}_{a+1}]}^2 (\hat{\tau}_{a+1} - \hat{\tau}_a) - ( t C_{\ref*{lem:loc_err_g}.1} + (t-1) C_{\gamma}) s \log(p \vee n) \\
    + & [ \sum_{i \in (\hat{\tau}_a, \tau_h^\ast] \cup (\tau_{h+1}^\ast, \hat{\tau}_{a+1}]} \epsilon_i^2 + \L_{(\hat{\tau}_{a-1}, \hat{\tau}_a]} + \L_{(\hat{\tau}_{a+1}, \hat{\tau}_{a+2}]} - \L_{(\hat{\tau}_{a-1}, \tau_{h}^\ast]} - \L_{(\tau_{h+t}^\ast, \hat{\tau}_{a+2}]}].
\end{align*}

Following the same discussion in Case 2, that is, Eq. (\ref{equ:case2_low_diff}), we have
\begin{align*}
    &\sum_{i \in (\hat{\tau}_a, \tau_h^\ast] \cup (\tau_{h+1}^\ast, \hat{\tau}_{a+1}]} \epsilon_i^2 +  \L_{(\hat{\tau}_{a-1}, \hat{\tau}_a]} + \L_{(\hat{\tau}_{a+1}, \hat{\tau}_{a+2}]} - \L_{(\hat{\tau}_{a-1}, \tau_{h}^\ast]} - \L_{(\tau_{h+t}^\ast, \hat{\tau}_{a+2}]}\\
    >& -(4 C_{\ref*{lem:loc_err_g}.2} + 2 C_{m,1}) s \log(p \vee n).
\end{align*}
Hence,
\begin{align*}
    & \L(\estcps) - \L(\widetilde{\cpsset})\\
    > & (1 - C_{\ref*{lem:loc_err_g}.3})\Delta_{(\hat{\tau}_a, \hat{\tau}_{a+1}]}^2 (\hat{\tau}_{a+1} - \hat{\tau}_a) - [t C_{\ref*{lem:loc_err_g}.1} + (t-1) C_{\gamma} + 4 C_{\ref*{lem:loc_err_g}.2} + 2 C_{m,1}] s \log(p \vee n) \\
    \ge & \Bigl\{(1 - C_{\ref*{lem:loc_err_g}.3})[(t-1) \vee 1] 2^{-1}\widetilde{C} - t C_{\ref*{lem:loc_err_g}.1} - (t-1) C_{\gamma} - 4 C_{\ref*{lem:loc_err_g}.2} - 2 C_{m,1}\Bigr\} s \log(p \vee n) \ge 0,
\end{align*}
provided that $\widetilde{C} \ge 2 (1 - C_{\ref*{lem:loc_err_g}.3})^{-1}(2 C_{\ref*{lem:loc_err_g}.1} + C_{\gamma} + 4 C_{\ref*{lem:loc_err_g}.2} + 2 C_{m,1})$.

In summary, we obtain $\Hcal = \varnothing$ provided that $\widetilde{C} \ge 2 (1 - C_{\ref*{lem:loc_err_g}.3})^{-1}(2 C_{\ref*{lem:loc_err_g}.1} + C_{\gamma} + 4 C_{\ref*{lem:loc_err_g}.2} + 2 C_{m,1})$. Hence $\max_{1 \le j \le K^\ast} \min_{1 \le k \le \widehat{K}} \Delta_j^2 |\tau_j^\ast - \hat{\tau}_k| \le \widetilde{C} s \log(p \vee n)$. It also implies that $\widehat{K} \ge K^\ast$.

It remains to show that $\hat{K} \le K^\ast$. Otherwise, assume that $\hat{K} > K^\ast$. Then there must be $j \in [0, K^\ast]$ and $k \in [1, \hat{K}]$ such that $\tau_j^\ast - \delta_j \le \hat{\tau}_{k-1} < \hat{\tau}_k < \hat{\tau}_{k + 1} \le \tau_{j+1}^\ast + \delta_{j+1}$. Similar to the decomposition of $\Hcal$, we can also divide it into four groups.
\begin{enumerate}[label=$\Gcal_\arabic*:\,$]
    \item $\tau_j^\ast \le \hat{\tau}_{k-1} < \hat{\tau}_k < \hat{\tau}_{k + 1} \le \tau_{j+1}^\ast$.
    \item $\tau_j^\ast - \delta_j \le \hat{\tau}_{k-1} < \tau_j^\ast$ and $\tau_j^\ast \le \hat{\tau}_k < \hat{\tau}_{k+1} \le \tau_{j+1}^\ast$.
    \item $\tau_j^\ast \le \hat{\tau}_{k-1} < \hat{\tau}_k \le \tau_{j+1}^\ast$ and $\tau_{j+1}^\ast < \hat{\tau}_{k+1} \le \tau_{j+1}^\ast + \delta_{j+1}$.
    \item $\tau_j^\ast - \delta_j \le \hat{\tau}_{k-1} < \tau_j^\ast \le \hat{\tau}_k \le \tau_{j+1}^\ast < \hat{\tau}_{k+1} \le \tau_{j+1}^\ast + \delta_{j+1}$.
\end{enumerate}

\subsubsection*{Case 1: $\Gcal_1 = \varnothing$}
Let $\widetilde{\Tcal} = \estcps \setminus \set{\hat{\tau}_k}$.
We have
\begin{align*}
    \L(\widetilde{\Tcal}) - \L(\estcps) &= \L_{(\hat{\tau}_{k-1}, \hat{\tau}_{k+1}]} - \L_{(\hat{\tau}_{k-1}, \hat{\tau}_k]} -  \L_{(\hat{\tau}_k, \hat{\tau}_{k+1}]} - \gamma \\
    &< (3 C_{\ref*{lem:loc_err_g}.1} - C_{\gamma}) s \log(p \vee n) \le 0,
\end{align*}
provided that $C_{\gamma} \ge 3 C_{\ref*{lem:loc_err_g}.1}$.

\subsubsection*{Case 2: $\Gcal_2 \cup \Gcal_3 = \varnothing$}
We will show that $\Gcal_2 = \varnothing$ because the proof for $\Gcal_3 = \varnothing$ is the same by symmetry. Assume that the pair $(j, k)$ is the leftmost one that satisfies $\Gcal_2$. It implies that $\hat{\tau}_{k-2} \in [\tau_{j-1}^\ast - \delta_{j-1}, \tau_{j-1}^\ast + \delta_{j-1}]$. Otherwise assume $\hat{\tau}_{k-2} > \tau_{j-1}^\ast + \delta_{j-1}$. Since $\max_{1 \le j \le K^\ast} \min_{1 \le k \le \widehat{K}} \Delta_{j}^2 \abs{\tau_j^\ast - \hat{\tau}_k} \le \widetilde{C} s \log(p \vee n)$, there must be $\hat{\tau}_{k - h} \in [\tau_{j-1}^\ast - \delta_{j-1}, \tau_{j-1}^\ast + \delta_{j-1}]$ for some $h > 2$. It contradicts the fact that $\Gcal_1 = \varnothing$ and the choice of $k$.

Let $\widetilde{\Tcal} = \set{\tau_j^\ast} \cup \estcps \setminus \set{\hat{\tau}_{k-1}, \hat{\tau}_k}$. Note that $(\hat{\tau}_{k-2}, \hat{\tau}_{k-1}], (\hat{\tau}_{k-1}, \hat{\tau}_{k}] \in E_2^-$ and $(\hat{\tau}_{k-2}, \tau_j^\ast] \in E_2^+$, we have
\begin{align*}
    \L(\widetilde{\Tcal}) - \L(\estcps) &= \L_{(\hat{\tau}_{k-2}, \tau_j^\ast]} + \L_{(\tau_j^\ast, \hat{\tau}_{k+1}]} - \Bigl[\sum_{t = k-2}^{k} \L_{(\hat{\tau}_t, \hat{\tau}_{t+1}]} + \gamma \Bigr] \\
    & = [\L_{(\hat{\tau}_{k-2}, \tau_j^\ast]} - \L_{(\hat{\tau}_{k-2}, \hat{\tau}_{k-1}]}] + \L_{(\tau_j^\ast, \hat{\tau}_{k+1}]} - \Bigl[\sum_{t = k-1}^{k} \L_{(\hat{\tau}_t, \hat{\tau}_{t+1}]} + \gamma \Bigr]\\
    & < \Bigl[ \L_{(\hat{\tau}_{k-2}, \tau_j^\ast]} - \L_{(\hat{\tau}_{k-2}, \hat{\tau}_{k-1}]} - \sum_{i \in (\hat{\tau}_{k-1}, \tau_j^\ast]} \epsilon_i^2 \Bigr] + (2 C_{\ref*{lem:loc_err_g}.1} + C_{\ref*{lem:loc_err_g}.2} - C_{\gamma}) s \log(p \vee n) \\
    &\le (2 C_{\ref*{lem:loc_err_g}.1} + 3 C_{\ref*{lem:loc_err_g}.2} + C_{m,1} - C_{\gamma}) s \log(p \vee n) \le 0,
\end{align*}
provided $C_{\gamma} \ge 2 C_{\ref*{lem:loc_err_g}.1} + 3 C_{\ref*{lem:loc_err_g}.2} + C_{m,1}$. The second last inequality is from Eq. (\ref{equ:case2_low_diff}).

\subsubsection*{Case 3: $\Gcal_4 = \varnothing$}
Now $\Gcal_1 \cup \Gcal_2 \cup \Gcal_3 = \varnothing$. Assume that $\set{\hat{\tau}_{k-1}, \hat{\tau}_k, \hat{\tau}_{k+1}}$ satisfies $\Gcal_4$. Similar to the analysis of $\Gcal_2 = \varnothing$, we have $\hat{\tau}_{k-2} \in [\tau_{j-1}^\ast - \delta_{j-1}, \tau_{j-1}^\ast + \delta_{j-1}]$ and $\hat{\tau}_{k+2} \in [\tau_{j+1}^\ast + \delta_{j+1}, \tau_{j+1}^\ast + \delta_{j+1}]$. Follow the same arguments in the proof for $\Hcal_4 = \varnothing$, we can set $\widetilde{\Tcal} = \set{\tau_j^\ast, \tau_{j+1}^\ast} \cup \estcps \setminus \set{\hat{\tau}_{k-1}, \hat{\tau}_k, \hat{\tau}_{k+1}}$.
\begin{align*}
    &\L(\widetilde{\Tcal}) - \L(\estcps) \\
    =& \L_{(\hat{\tau}_{k-2}, \tau_j^\ast]} + \L_{(\tau_j^\ast, \tau_{j+1}^\ast]} + \L_{(\tau_{j+1}^\ast, \hat{\tau}_{k+2}]} - \Bigl[\sum_{t = k-2}^{k+1} \L_{(\hat{\tau}_t, \hat{\tau}_{t+1}]} + \gamma \Bigr] \\
    =& [\L_{(\hat{\tau}_{k-2}, \tau_j^\ast]} - \L_{(\hat{\tau}_{k-2}, \hat{\tau}_{k-1}]} + \L_{(\tau_{j+1}^\ast, \hat{\tau}_{k+2}]} - \L_{(\hat{\tau}_{k+1}, \hat{\tau}_{k+2}]}] \\
    &+ \L_{(\tau_j^\ast, \tau_{j+1}^\ast]} - \Bigl[\sum_{t = k-1}^{k} \L_{(\hat{\tau}_t, \hat{\tau}_{t+1}]} + \gamma \Bigr]\\
    <& \Bigl[ \L_{(\hat{\tau}_{k-2}, \tau_j^\ast]} - \L_{(\hat{\tau}_{k-2}, \hat{\tau}_{k-1}]} + \L_{(\tau_{j+1}^\ast, \hat{\tau}_{k+2}]} - \L_{(\hat{\tau}_{k+1}, \hat{\tau}_{k+2}]} - \sum_{i \in (\hat{\tau}_{k-1}, \tau_j^\ast] \cup (\tau_{j+1}^\ast, \hat{\tau}_{k+1}]} \epsilon_i^2 \Bigr] \\
    &+ (C_{\ref*{lem:loc_err_g}.1} + 2 C_{\ref*{lem:loc_err_g}.2} - C_{\gamma}) s \log(p \vee n) \\
    \le& (C_{\ref*{lem:loc_err_g}.1} + 6 C_{\ref*{lem:loc_err_g}.2} + 2 C_{m,1} - C_{\gamma}) s \log(p \vee n) \le 0,
\end{align*}
provided $C_{\gamma} \ge C_{\ref*{lem:loc_err_g}.1} + 6 C_{\ref*{lem:loc_err_g}.2} + 2 C_{m,1}$. The second last inequality is from Eq. (\ref{equ:case2_low_diff}).

Combining the proof in the $\Hcal$ and $\Gcal$ parts, we can determine the two constants by solving the following inequalities,
\begin{equation}\label{equ:ieqs_to_solve}
    \left\{
    \begin{aligned}
        C_{\gamma} &\ge C_{\ref*{lem:loc_err_g}.1} + 6 C_{\ref*{lem:loc_err_g}.2} + 2 C_{m,1} \\
        \widetilde{C} &\ge 2 (1 - C_{\ref*{lem:loc_err_g}.3})^{-1}(2 C_{\ref*{lem:loc_err_g}.1} + C_{\gamma} + 4 C_{\ref*{lem:loc_err_g}.2} + 2 C_{m,1})
    \end{aligned}
    \right.
\end{equation}
Since $C_{\mathsf{snr}}$ and $C_{\mathsf{m}}$ are sufficiently large, we have $C_{m,1} = \widetilde{C}^2 ( \widetilde{C} \vee C_{\mathsf{m}}) / \{C_{\mathsf{snr}}(C_{\mathsf{snr}} - \widetilde{C} \vee C_{\mathsf{m}})\} = \widetilde{C}^2 C_{\mathsf{m}} / \{C_{\mathsf{snr}}(C_{\mathsf{snr}} - C_{\mathsf{m}})\}$. Let $C_{\gamma} = C_{\ref*{lem:loc_err_g}.1} + 6 C_{\ref*{lem:loc_err_g}.2} + 2 C_{m,1}$, we obtain the following inequality w.r.t. $\widetilde{C}$,
\begin{equation}\label{equ:ieq_to_solve_Ctilde}
    \frac{C_{\mathsf{m}} \widetilde{C}^2}{C_{\mathsf{snr}} (C_{\mathsf{snr}} - C_{\mathsf{m}})} - \frac{(1 - C_{\ref*{lem:loc_err_g}.3}) \widetilde{C}}{2} + 3 C_{\ref*{lem:loc_err_g}.1} + 10 C_{\ref*{lem:loc_err_g}.2} \ge 0.
\end{equation}
Treat it as a quadratic inequality w.r.t. $\widetilde{C}$, we can figure out that there exist solutions if and only if $C_{\mathsf{snr}} (C_{\mathsf{snr}} - C_{\mathsf{m}}) \ge 16 (1 - C_{\ref*{lem:loc_err_g}.3})^{-2} C_{\mathsf{m}} (3 C_{\ref*{lem:loc_err_g}.1} + 10 C_{\ref*{lem:loc_err_g}.2})$. And by solving Eq.~(\ref{equ:ieq_to_solve_Ctilde}), we have
\begin{equation}\label{equ:bound_Ctilde}
    \widetilde{C} = 2 \{a - (a^2 - b)^{\frac{1}{2}}\} \le \frac{b}{(a^2 - b)^{\frac{1}{2}}} \le \frac{2 b}{a} = 4 (1 - C_{\ref*{lem:loc_err_g}.3})^{-1} (3 C_{\ref*{lem:loc_err_g}.1} + 10 C_{\ref*{lem:loc_err_g}.2}),
\end{equation}
satisfies Eq. (\ref{equ:ieqs_to_solve}) with $a = (1 - C_{\ref*{lem:loc_err_g}.3}) C_{\mathsf{snr}} (C_{\mathsf{snr}} - C_{\mathsf{m}})/(8 C_{\mathsf{m}})$, $b = (3 C_{\ref*{lem:loc_err_g}.1} + 10 C_{\ref*{lem:loc_err_g}.2}) C_{\mathsf{snr}} (C_{\mathsf{snr}} - C_{\mathsf{m}}) / (4 C_{\mathsf{m}})$. The last inequality in Eq. (\ref{equ:bound_Ctilde}) holds provided that $C_{\mathsf{snr}}$ is sufficiently large so that $b \le 3 a^2 / 4$.

It follows that $\estcps = \widetilde{\cpsset}$ provided that Eq. (\ref{equ:ieqs_to_solve}) holds. Finally, we obtain
\begin{equation*}
    \widehat{K} = K^\ast; \max_{1 \le k \le K^\ast} \min_{1 \le j \le \widehat{K}} \Delta_k^2 |\tau_k^\ast - \hat{\tau}_j| \le \widetilde{C} s \log(p \vee n),
\end{equation*}
with any $\widetilde{C} \ge 4 (1 - C_{\ref*{lem:loc_err_g}.3})^{-1} (3 C_{\ref*{lem:loc_err_g}.1} + 10 C_{\ref*{lem:loc_err_g}.2})$.

\section{Proof of Theorem \ref{thm:localization_error}}\label{sec:proof_main_thm}

For a interval $I$, denote the sparsity constant $s_I = s \vee \size{\set{1 \le j \le p: \exists i \in I, \btheta_{i, j}^\circ \neq 0}}$. Observe that $s_I \le (1 \vee \size{\truecps \cap I}) \times s$. Define $\Delta_{I,q,\btheta} = ({\size{I}}^{-1}\sum_{i \in I} \norm{\btheta_i^\circ - \btheta}_{\Sigma}^q)^{{1}/{q}}$ and let $\Delta_{I, \btheta} = \Delta_{I,2,\btheta}$ be the root average square variation of $I$ and $\Delta_{I,\infty, \btheta} = \max_{i \in I} \norm{\btheta_i^\circ - \btheta}_{\Sigma}$ be the maximum variation of $I$.
For simplicity, denote $\Delta_{I, q} = \Delta_{I, q,\btheta_I^\circ}$, $\Delta_{I} = \Delta_{I, 2}$ and $\Delta_{I, \infty} = \Delta_{I, \infty, \btheta_I^{\circ}}$.

As stated in Lemma \ref{lem:loc_err_g}, to show that the bound of localization error in Theorem \ref{thm:localization_error} holds, we only need to certify that the event $\mathbb{G}$ holds with high probability for both the original full model-fitting approach and the Reliever approach with suitable constants. These two claims are shown in Corollary \ref{cor:in_err} and Corollary \ref{cor:mix_err}, respectively. Finally, the $L_2$ error bound of the parameter estimation follows the oracle inequality of lasso.

This section is organized as follows. In Section \ref{subsec:pre_required}, we introduce several useful non-asymptotic probability bounds, including the oracle inequality of lasso with heterogeneous data. In Section \ref{subsec:cert_full} and \ref{subsec:cert_reliever}, we show that $\mathbb{G}$ holds with high probability for the two approaches correspondingly. All the proofs are relegated to the last part.

\subsection{Deviation Bounds via the Bernstein's Inequality}\label{subsec:pre_required}

The deviation bounds in this subsection will rely on the following Bernstein's inequality.

\begin{lemma}[Bernstein's inequality]\label{lem:bernstein}
    Let $\set{X_i}_{i=1}^{n}$ be independent, mean-zero random variables with sub-exponential tails.
    For every $t > 0$, we have
    \begin{equation*}
        \Pbb\Bigset{\Bigabs{\sum_{i=1}^n X_i} > t} \le 2 \exp\Bigl[-c_b \Bigl(
            \frac{t^2}{\sum_{i=1}^n \norm{X_i}_{\Psi_1}^2} \wedge \frac{t}{\max_i \norm{X_i}_{\Psi_1}}\Bigr)\Bigr],
    \end{equation*}
    where $c_b > 0$ is an absolute constant.
    Choose
    $$t = \frac{C_{u}}{c_b} \biggl[\Bigl({\sum_{i \in [n]} \norm{X_i}_{\Psi_1}^2 A_{n, p, s}}\Bigr)^{\frac{1}{2}} \vee (\max_i \norm{X_i}_{\Psi_1} A_{n, p, s}) \biggr] $$
    with $C_{u} \ge c_b$, we have
    \begin{equation*}
        \Pbb\Bigset{\Bigabs{\sum_{i=1}^n X_i} > t} \le  2 \exp\{-C_{u} A_{n, p, s}\}.
    \end{equation*}
    Here $A_{n, p, s}$ is a diverging sequence. For instance, $A_{n, p, s} = \log(p \vee n)$ and $A_{n, p, s} = s \log(p \vee n)$.
\end{lemma}

\begin{lemma}[Uniform restricted eigenvalue condition]\label{lem:re}
    Assume that Condition~\ref{cond:dist}(a) holds.
    For any interval $I \subset (0,n]$, denote $\hat{\Sigma}_I = \size{I}^{-1} \sum_{i \in I} \xbf_i \xbf_i^\top$.
    Uniformly for all intervals $I \subset (0, n]$ such that $\size{I} \ge s_I \log(p \vee n)$, with probability at least $1 - \exp\{-C_{u,1} \log(p \vee n)\}$,
    \begin{equation*}
        \vbf^\top \hat{\Sigma}_{I} \vbf \ge \norm{\vbf}_{\Sigma}^2 - C_{u,2} C_x^2 \sigma_x^2 \Bigl\{\frac{s_I \log(p \vee n)}{\size{I}}\Bigr\}^{\frac{1}{2}} \Bigl(\norm{\vbf}_2^2 + \frac{1}{s_I}\norm{\vbf}_1^2\Bigr),\, \forall \vbf \in \Rbb^p,
    \end{equation*}
    and
    \begin{equation*}
        \vbf^\top \hat{\Sigma}_{I} \vbf \le \norm{\vbf}_{\Sigma}^2 + C_{u,2} C_x^2 \sigma_x^2 \Bigl\{\frac{s_I \log(p \vee n)}{\size{I}}\Bigr\}^{\frac{1}{2}} \Bigl(\norm{\vbf}_2^2 + \frac{1}{s_I}\norm{\vbf}_1^2\Bigr),\, \forall \vbf \in \Rbb^p,
    \end{equation*}
    where $C_{u,1}$ and $C_{u,2}$ are two universal constants.
    Furthermore, let $\size{I} \ge C_{\mathsf{re}} s_I \log(p \vee n)$ with a sufficiently large constant $C_{\mathsf{re}} \ge 1 \vee ({34 C_{u,2}C_x^2 \sigma_x^2}/{\kmin})^2$. For any support set $\Scal \in [p]$ with $\size{\Scal} \le s_I$ and $\vbf \in \Rbb^p$ such that $\norm{\vbf_{\Scal^\complement}}_1 \le 3 \norm{\vbf_{\Scal}}_1$, under the same event above,
    \begin{equation*}
        \frac{1}{2} \norm{\vbf}_{\Sigma}^2 \le (1 - C_{\ref*{lem:re}}) \norm{\vbf}_{\Sigma}^2 \le \vbf^\top \hat{\Sigma}_{I} \vbf \le (1 + C_{\ref*{lem:re}}) \norm{\vbf}_{\Sigma}^2 \le \frac{3}{2} \norm{\vbf}_{\Sigma}^2,
    \end{equation*}
    \begin{equation*}
        \frac{\kmin}{2} \norm{\vbf}_2^2 \le (1 - C_{\ref*{lem:re}}) \kmin \norm{\vbf}_{2}^2 \le \vbf^\top \hat{\Sigma}_{I} \vbf \le  (\sigma_x^2 + C_{\ref*{lem:re}} \kmin) \norm{\vbf}_{2}^2 \le \Bigl(\sigma_x^2 + \frac{\kmin}{2}\Bigr) \norm{\vbf}_2^2,
    \end{equation*}
    where $C_{\ref*{lem:re}} = 17 C_{u,2} C_x^2 \sigma_x^2 \kappa^{-1} C_{\mathsf{re}}^{-\frac{1}{2}}$.
\end{lemma}

\begin{proof}[Proof of Lemma \ref{lem:re}.]
    For sparsity level $s$, denote $\A(s) = \set{\vbf \in \Rbb^p: \norm{\vbf}_2 = 1, \size{\supp(\vbf)} \le s}$. We will first show that with high probability, $\sup_{\vbf \in \A(2s_I)} |\vbf^\top (\hat{\Sigma}_I - \Sigma) \vbf| = O(C_x^2 \{\size{I}^{-1} s_I \log(p \vee n)\}^{\frac{1}{2}})$ uniformly for all intervals $\{I\}$ such that $\sup_{\size{I} \ge s_I \log(p \vee n)}$. Then the result follows from Lemma 12 in \citet{loh_high-dimensional_2012}.

    For a fixed interval $I$, let $\Dbf = \hat{\Sigma}_I - \Sigma$.
    For any $\Ucal \subset [p]$ and $\size{\Ucal} = 2s_I$, let $\Dbf_{\Ucal} \in \Rbb^{2 s_I \times 2 s_I}$ be the sub-matrix of $\Dbf$ with $\Ucal$ being the set of row and column indices. Let $\Bcal_{\Ucal} = \set{\vbf \in \Rbb^p: \norm{\vbf}_2 = 1, \supp(\vbf) = \Ucal}$. There is a $4^{-1}$-net $\Ncal_{\Ucal} \subseteq \Bcal_{\Ucal}$ of $\Bcal_{\Ucal}$ with cardinality $\size{\Ncal_{\Ucal}} \le 9^{2s_I}$. For any $\vbf \in \Bcal_{\Ucal} - \Ncal_{\Ucal}$, there is $\ubf \in \Ncal_{\Ucal}$ such that $\norm{\vbf - \ubf}_2 \le 4^{-1}$ and $\norm{\vbf - \ubf}_2^{-1}(\vbf - \ubf) \in \Scal_{\Ucal}$. Therefore,
    \begin{equation*}
        \abs{\vbf^\top \Dbf \vbf - \ubf^\top \Dbf \ubf} = \abs{\vbf^\top \Dbf (\vbf - \ubf) + \ubf^\top \Dbf (\vbf - \ubf)} \le 2 \norm{\Dbf_{\Ucal}}_{\mathsf{op}} \norm{\vbf - \ubf}_2 \le \frac{1}{2} \norm{\Dbf_{\Ucal}}_{\mathsf{op}}.
    \end{equation*}
    By the definition of $\Dbf_{\Ucal}$, we have $\norm{\Dbf_{\Ucal}}_{\mathsf{op}} = \sup_{\vbf \in \Bcal_{\Ucal}} \abs{\vbf^\top \Dbf \vbf}$. Hence
    \begin{equation*}
        \sup_{\vbf \in \Bcal_{\Ucal}} \abs{\vbf^\top \Dbf \vbf} \le 2 \sup_{\vbf \in \Ncal_{\Ucal}} \abs{\vbf^\top \Dbf \vbf}.
    \end{equation*}
    Let $\Ncal = \cup_{\size{\Ucal} = 2s_I} \Ncal_{\Ucal}$. We have $\size{\Ncal} \le \binom{p}{2s_I} 9^{2s_I} \le (9p)^{2s_I}$ and $\Ncal$ is the $4^{-1}$-net of $\A(2s_I)$ because $\A(2s_I) = \cup_{\size{\Ucal}=2s_I} \Bcal_{\Ucal}$. Also,
    \begin{equation*}
        \sup_{\vbf \in \A(2s_I)} \abs{\vbf^\top \Dbf \vbf} \le 2 \sup_{\vbf \in \Ncal} \abs{\vbf^\top \Dbf \vbf}.
    \end{equation*}

    For a fixed $\vbf \in \A(2s_I)$, by the Bernstein's inequality (Lemma \ref{lem:bernstein}),
    \begin{equation*}
        \Pbb\biggl[ \abs{\vbf^\top \Dbf \vbf} > \frac{t}{\size{I}} \biggr] \le 2 \exp\biggl[-c_b \Bigl(\frac{t^2}{C_x^4 \sigma_x^4 \size{I}} \wedge \frac{t}{C_x^2 \sigma_x^2}\Bigr)\biggr].
    \end{equation*}
    Set $t = c_b^{-1} C_u C_x^2 \sigma_x^2 \{\size{I} s_I \log(p \vee n)\}^{\frac{1}{2}}$ with $C_u \ge c_b$ be a sufficiently large constant. With probability at least $1 - \exp\{- C_u s_I \log(p \vee n)\}$,
    \begin{equation*}
        \abs{\vbf^\top \Dbf \vbf} \le c_b^{-1} C_u C_x^2 \sigma_x^2 \Bigl\{\frac{s_I\log(p \vee n)}{\size{I}}\Bigr\}^{\frac{1}{2}}.
    \end{equation*}
    Note that $s_I \ge s$ by its definition. By taking the union bound over $\vbf \in \Ncal$ and $\set{I:\size{I} \ge s_I \log(p \vee n)}$, with probability at least $1 - n^2 (9p)^{2s} \exp\{-C_u s \log(p \vee n)\} \ge 1 - \exp\{- C_{u,1} \log(p \vee n)\}$ for some $C_{u,1} > 0$, uniformly for all $I$ such that $\size{I} \ge s_I \log(p \vee n)$,
    \begin{equation*}
        \sup_{\vbf \in \A(2s_I)} \abs{\vbf^\top (\hat{\Sigma}_I - \Sigma) \vbf} \le 2\sup_{\vbf \in \Ncal} \abs{\vbf^\top (\hat{\Sigma}_I - \Sigma) \vbf} \le 2 c_b^{-1} C_u C_x^2 \sigma_x^2 \Bigl\{\frac{s_I\log(p \vee n)}{\size{I}}\Bigr\}^{\frac{1}{2}}.
    \end{equation*}
    By Lemma 12 in \citet{loh_high-dimensional_2012}, under the above event,
    \begin{equation*}
        \abs{\vbf^\top (\hat{\Sigma}_I - \Sigma) \vbf} \le 54 c_b^{-1} C_u C_x^2 \sigma_x^2 \Bigl\{\frac{s_I\log(p \vee n)}{\size{I}}\Bigr\}^{\frac{1}{2}} (\norm{\vbf}_2^2 + \frac{1}{s_I} \norm{\vbf}_1^2),
    \end{equation*}
    for all $\vbf \in \Rbb^p$ and all intervals in $\set{I: \size{I} \ge s_I \log(p \vee n)}$. Let $C_{u,2}=54 c_b^{-1} C_u$. With probability at least $1 - \exp\{-C_{u,1} \log(p \vee n)\}$, for all $\vbf \in \Rbb^p$ and all $I$ such that $\size{I} \ge s_I \log(p \vee n)$,
    \begin{equation}\label{equ:rec_1}
        \vbf^\top \hat{\Sigma}_I \vbf \ge \norm{\vbf}_{\Sigma}^2 - C_{u,2} C_x^2 \sigma_x^2 \Bigl\{\frac{s_I\log(p \vee n)}{\size{I}}\Bigr\}^{\frac{1}{2}} \Bigl(\norm{\vbf}_2^2 + \frac{1}{s_I} \norm{\vbf}_1^2\Bigr),
    \end{equation}
    and
    \begin{equation}\label{equ:rec_1_upper}
        \vbf^\top \hat{\Sigma}_{I} \vbf \le \norm{\vbf}_{\Sigma}^2 + C_{u,2} C_x^2 \sigma_x^2 \Bigl\{\frac{s_I \log(p \vee n)}{\size{I}}\Bigr\}^{\frac{1}{2}} \Bigl(\norm{\vbf}_2^2 + \frac{1}{s_I}\norm{\vbf}_1^2\Bigr).
    \end{equation}
    If there exists a support set $\Scal \in [p]$ with $\size{\Scal} \le s_I$ so that $\norm{\vbf_{\Scal^\complement}}_1 \le 3 \norm{\vbf_{\Scal}}_1$, we have $\norm{\vbf}_1 \le 4 \norm{\vbf_{\Scal}}_1 \le 4 s_I^{\frac{1}{2}} \norm{\vbf_{\Scal}}_2 \le 4 s_I^{\frac{1}{2}} \norm{\vbf}_2$. By Eq. (\ref{equ:rec_1})--(\ref{equ:rec_1_upper}) and the inequality that $\frac{1}{s_I}\norm{\vbf}_1^2 \le 16 \norm{\vbf}_2^2$, we have
    \begin{equation*}
        \vbf^\top \hat{\Sigma}_I \vbf \ge \norm{\vbf}_\Sigma^2 - 17 C_{u,2} C_x^2 \sigma_x^2 \Bigl\{\frac{s_I\log(p \vee n)}{\size{I}}\Bigr\}^{\frac{1}{2}} \norm{\vbf}_2^2,
    \end{equation*}
    and
    \begin{equation*}
        \vbf^\top \hat{\Sigma}_I \vbf \le \norm{\vbf}_\Sigma^2 + 17 C_{u,2} C_x^2 \sigma_x^2 \Bigl\{\frac{s_I\log(p \vee n)}{\size{I}}\Bigr\}^{\frac{1}{2}} \norm{\vbf}_2^2.
    \end{equation*}
    Because $\size{I} \ge C_{\mathsf{re}} s_I \log(p \vee n)$ and $C_{\mathsf{re}} \ge 1 \vee (34 \kmin^{-1} C_{u,2}C_x^2 \sigma_x^2)^2$, we have
    $$17 C_{u,2} C_x^2 \sigma_x^2 \size{I}^{-\frac{1}{2}} \{s_I\log(p \vee n)\}^{\frac{1}{2}} \le \frac{\kmin}{2}.$$
    The last two results in the lemma are due to the inequality $\norm{\vbf}_2^2 \le \kmin^{-1} \norm{\vbf}_{\Sigma}^2$.
    \end{proof}

\begin{lemma}\label{lem:xxbeinf}
    Assume that Condition~\ref{cond:dist} holds.
    For interval $I \subset (0, n]$ and a fixed $\btheta$, with probability at least $1 - n^{-2}\exp\{-C_{u,1} \log(p \vee n)\}$,
    \begin{align*}
        & \Bignorm{\sum_{i \in I}\{ (\xbf_i \xbf_i^\top - \Sigma) (\btheta_i^\circ - \btheta) + \epsilon_i \xbf_i\}}_{\infty}\\
        \le & C_{u,2} C_x \sigma_x \biggl\{(C_x^2\Delta_{I,\btheta}^2 + C_{\epsilon}^2) \vee \frac{(C_x^2\Delta_{I, \infty,\btheta}^2 + C_{\epsilon}^2) \log(p \vee n)}{\size{I}}\biggr\}^{\frac{1}{2}} \{\size{I} \log(p \vee n)\}^{\frac{1}{2}}
    \end{align*}
    where $C_{u,1}$ and $C_{u,2}$ are two universal constants.
\end{lemma}

\begin{proof}[Proof of Lemma \ref{lem:xxbeinf}.]
    By condition~\ref{cond:dist}, $\Ebb[\sum_{i \in I}\{ (\xbf_i \xbf_i^\top - \Sigma) (\btheta_i^\circ - \btheta) + \epsilon_i \xbf_i\}] = \zero$. By Condition~\ref{cond:dist}, $\xbf_i^\top(\btheta_i^\circ - \btheta)$ is sub-Gaussian with mean zero and $\Psi_2$-norm $C_x \norm{\btheta_i^\circ - \btheta}_{\Sigma}$ and $\epsilon_i$ is sub-Gaussian with mean zero and $\Psi_2$-norm $\norm{\epsilon}_{\Psi_2} = C_{\epsilon}$. Hence $\xbf_i^\top(\btheta_i^\circ - \btheta) + \epsilon_i$ is sub-Gaussian with mean zero and
    \[\norm{\xbf_i^\top(\btheta_i^\circ - \btheta) + \epsilon_i}_{\Psi_2} \le (C_x^2 \norm{\btheta_i^\circ - \btheta}_{\Sigma}^2 + C_{\epsilon}^2)^{\frac{1}{2}}.\]
    Then $\xbf_i \{\xbf_i^\top (\btheta_i^\circ - \btheta) + \epsilon_i\} - \Sigma (\btheta_i^\circ - \btheta)$ is sub-exponential with $\Psi_1$-norm:
    \[\norm{\xbf_i \{\xbf_i^\top (\btheta_i^\circ - \btheta) + \epsilon_i\} - \Sigma (\btheta_i^\circ - \btheta)}_{\Psi_1} \le \norm{\xbf_i \{\xbf_i^\top (\btheta_i^\circ - \btheta) + \epsilon_i\}}_{\Psi_1} \le C_x \sigma_x (C_x^2 \norm{\btheta_i^\circ - \btheta}_{\Sigma}^2 + C_{\epsilon}^2)^{\frac{1}{2}}.\]
    By the Bernstein's inequality (Lemma \ref{lem:bernstein}), for any given $\vbf \in \Sbb^{p-1}$,
    \begin{align*}
        &\Pbb\Bigset{\Bigabs{\sum_{i \in I} \vbf^\top \{ (\xbf_i \xbf_i^\top - \Sigma) (\btheta_i^\circ - \btheta) + \epsilon_i \xbf_i\} } > t} \\
        \le & 2 \exp\biggl(-\frac{c_b t^2}{C_x^2 \sigma_x^2 (C_x^2\Delta_{I,\btheta}^2 + C_{\epsilon}^2)\size{I}} \wedge \frac{c_b t}{C_x \sigma_x (C_x^2\Delta_{I,\infty,\btheta}^2 + C_{\epsilon}^2)^{\frac{1}{2}}}\biggr).
    \end{align*}
    Set $t = c_b^{-1} (C_{u,1}+3) C_x \sigma_x [\{(C_x^2\Delta_{I, \btheta}^2 + C_{\epsilon}^2) \size{I} \log(p \vee n)\}^{\frac{1}{2}} \vee \{(C_x^2\Delta_{I, \infty, \btheta}^2 + C_{\epsilon}^2) \log^2(p \vee n)\}^{\frac{1}{2}}]$ with any constant $C_{u,1} \ge c_b$.
    With probability at least $1 - p \exp\{-(C_{u,1} + 3) \log(p \vee n)\} \ge 1 - n^{-2}\exp\{-C_{u,1} \log(p \vee n)\}$,
    \begin{align*}
        & \Bignorm{\sum_{i \in I}\{ (\xbf_i \xbf_i^\top - \Sigma) (\btheta_i^\circ - \btheta) + \epsilon_i \xbf_i\}}_{\infty}\\
        \le & C_{u,2} C_x \sigma_x \Bigl[\{(C_x^2\Delta_{I, \btheta}^2 + C_{\epsilon}^2) \size{I} \log(p \vee n)\}^{\frac{1}{2}} \vee \{(C_x^2\Delta_{I, \infty, \btheta}^2 + C_{\epsilon}^2) \log^2(p \vee n)\}^{\frac{1}{2}}\Bigr],
    \end{align*}
    where $C_{u,2} = c_b^{-1} (C_{u,1}+3)$.
    \end{proof}

\begin{lemma}[Oracle inequalities for the parametric estimates]\label{lem:oracle}
    Assume that Condition~\ref{cond:parameter}(a) and Condition~\ref{cond:dist} hold.
    For any interval $I \subset (0, n]$, recall that $D_{I} = [(C_x^2\Delta_I^2 + C_{\epsilon}^2) \vee \{\size{I}^{-1}(C_x^2\Delta_{I, \infty}^2 + C_{\epsilon}^2) \log(p \vee n)\}]^{\frac{1}{2}}$.
    We have with probability at least $1 - 2 \exp\{-C_{u,1}\log(p \vee n)\}$, uniformly for any interval $I \subset (0, n]$ with $\size{I} \ge C_{\mathsf{re}} s_I \log(p \vee n)$, provided that $\lambda_I = 4 C_{u,2} C_x \sigma_x D_{I} \{\size{I} \log(p \vee n)\}^{\frac{1}{2}}$, the solution $\htheta_I$ satisfies that
    \begin{equation*}
        \norm{\htheta_I - \btheta_I^\circ}_2 \le \kmin^{-\frac{1}{2}}\norm{\htheta_I - \btheta_I^\circ}_\Sigma \le C_{\ref*{lem:oracle}} D_{I} \Bigl\{\frac{s_I \log (p \vee n)}{\size{I}}\Bigr\}^{\frac{1}{2}},
    \end{equation*}
    \begin{equation*}
        \norm{\htheta_I - \btheta_I^\circ}_1 \le C_{\ref*{lem:oracle}} D_{I} s_I \Bigl\{\frac{ \log (p \vee n)}{\size{I}}\Bigr\}^{\frac{1}{2}},
    \end{equation*}
    where the model-based constant $C_{\ref*{lem:oracle}} = 12 \kmin^{-1} C_{u,2} C_x \sigma_x$. Furthermore, let $\Scal_I$ be the support set of $\btheta_I^\circ$, we have $\norm{\htheta_{I,\Scal_I^\complement} - \btheta_{I,\Scal_I^\complement}^\circ}_1 \le 3 \norm{\htheta_{I,\Scal_I} - \btheta_{I,\Scal_I}^\circ}_1$.
\end{lemma}

\begin{proof}[Proof of Lemma \ref{lem:oracle}.]
    (Oracle inequality for the mixture of distributions.)

    In the following proof, we assume that the inequalities in Lemmas \ref{lem:re}--\ref{lem:xxbeinf} hold for all intervals $I$ such that $\size{I} \ge C_{\mathsf{re}} s_I \log(p \vee n)$ and $\btheta = \btheta_I^\circ$. The claim holds with a probability lower bound $1 - 2 \exp\{-C_{u,1} \log(p \vee n)\}$.

    By the definition of $\htheta_I$,
    \begin{align*}
        &\sum_{i \in I} (y_i - \xbf_i^\top \htheta_I)^2 + \lambda_I \norm{\htheta_I}_1 = \sum_{i \in I} \{y_i - \xbf_i^\top \btheta_i^\circ + \xbf^\top(\btheta_i^\circ - \btheta_{I}^\circ) + \xbf_i^\top (\btheta_I^\circ - \htheta_I)\}^2 + \lambda_I \norm{\htheta_I}_1\\
        =& \sum_{i \in I} \epsilon_i^2 + \{\xbf_i^\top (\btheta_i^\circ - \btheta_I^\circ)\}^2 + \{\xbf_i^\top (\btheta_I^\circ - \htheta_I)\}^2 + \lambda_I \norm{\htheta_I}_1 \\
        +& 2 \sum_{i \in I} \{\epsilon_i \xbf_i^\top(\btheta_i^\circ - \btheta_I^\circ) + \epsilon_i \xbf_i^\top(\btheta_I^\circ - \htheta_I)\} + 2(\btheta_I^\circ - \htheta_I)^\top \sum_{i \in I} \xbf_i \xbf_i^\top (\btheta_i^\circ - \btheta_I^\circ)\\
        \le& \sum_{i \in I} (y_i - \xbf_i^\top\btheta_I^\circ)^2 + \lambda_I \norm{\btheta_I^\circ}_1 = \sum_{i \in I} \{y_i - \xbf_i^\top \btheta_i^\circ + \xbf^\top(\btheta_i^\circ - \btheta_{I}^\circ)\}^2 + \lambda_I \norm{\btheta_I^\circ}_1\\
        =& \sum_{i \in I} \epsilon_i^2 + \{\xbf_i^\top(\btheta_i^\circ - \btheta_I^\circ)\}^2 + 2 \sum_{i \in I}\epsilon_i \xbf_i^\top(\btheta_i^\circ - \btheta_I^\circ) + \lambda_I \norm{\btheta_I^\circ}_1.
    \end{align*}
    Hence,
    \begin{align*}
        &\sum_{i \in I} \{\xbf_i^\top(\htheta_I - \btheta_I^\circ)\}^2 + \lambda_I \norm{\htheta_I}_1\\
        \le& 2 (\htheta_I - \btheta_I^\circ)^\top \sum_{i \in I}\{ \epsilon_i \xbf_i + \xbf_i \xbf_i^\top (\btheta_i^\circ - \btheta_I^\circ)\} + \lambda_I \norm{\btheta_I^\circ}_1 \le \lambda_{I,1} \norm{\htheta_I - \btheta_I^\circ}_1 + \lambda_I \norm{\btheta_I^\circ}_1,
    \end{align*}
    where {$\lambda_{I,1} = 2\norm{\sum_{i \in I}\{ \epsilon_i \xbf_i + \xbf_i \xbf_i^\top (\btheta_i^\circ - \btheta_I^\circ)\}}_{\infty}$}.
    By Lemma \ref{lem:xxbeinf},
    \begin{equation*}
        \lambda_{I,1} \le 2 C_{u,2} C_x \sigma_x D_I \{\size{I} \log(p \vee n)\}^{\frac{1}{2}}.
    \end{equation*}
    where $D_I = [(C_x^2\Delta_I^2 + C_{\epsilon}^2) \vee \{ \size{I}^{-1}(C_x^2\Delta_{I, \infty}^2 + C_{\epsilon}^2) \log(p \vee n)\}]^{\frac{1}{2}}$ for easing the notation.

    Since $\sum_{i \in I}\{\xbf_i^\top(\htheta_I - \btheta_I^\circ)\}^2 \ge 0$, $(\lambda_I - \lambda_{I,1}) \norm{\htheta_{I,\Scal_I^\complement} - \btheta_{I,\Scal_I^\complement}^\circ}_1 \le (\lambda_I + \lambda_{I,1}) \norm{\htheta_{I,\Scal_I} - \btheta_{I,\Scal_I}^\circ}_1$. Choosing $\lambda_I = 2 \lambda_{I,1}$, we have $\norm{\htheta_{I,\Scal_I^\complement} - \btheta_{I,\Scal_I^\complement}^\circ}_1 \le 3 \norm{\htheta_{I,\Scal_I} - \btheta_{I,\Scal_I}^\circ}_1$.

    Apply Lemma \ref{lem:re}, the uniform restricted eigenvalue condition holds for any interval $I$ with $\size{I} \ge C_{\mathsf{re}} s_I \log(p \vee n)$.
    Hence $2^{-1} \size{I} \norm{\htheta_I - \btheta_I^\circ}_{\Sigma}^2 \le \sum_{i \in I} \{\xbf_i^\top (\htheta_I - \btheta_I^\circ)\}^2 \le \lambda_{I,1}\norm{\htheta_I - \btheta_I^\circ}_1 + \lambda_I \norm{\btheta_I^\circ}_1 - \lambda_I \norm{\htheta_I}_1 \le (\lambda_I + \lambda_{I,1}) \norm{\htheta_{I,\Scal_I} - \btheta_{I,\Scal_I}^\circ}_1 - \lambda_{I,1} \norm{\htheta_{I,\Scal_I^{\complement}}}_1 \le 3 \lambda_{I,1} s_I^{\frac{1}{2}}\norm{\htheta_I - \btheta_I^\circ}_2 \le 3 \lambda_{I,1} s_I^{\frac{1}{2}} \kmin^{-\frac{1}{2}}\norm{\htheta_I - \btheta_I^\circ}_\Sigma$. Therefore,
    \begin{equation*}
        \norm{\htheta_I - \btheta_I^\circ}_\Sigma \le \frac{3 \lambda_{I, 1} \kmin^{-\frac{1}{2}} s_I^\frac{1}{2} }{2^{-1} \size{I}} \le \frac{12 C_{u,2} C_x \sigma_x D_I}{\kmin^{\frac{1}{2}}} \Bigl\{\frac{s_I\log(p \vee n)}{\size{I}}\Bigr\}^{\frac{1}{2}}.
    \end{equation*}
    Similarly, we have
    \begin{equation*}
        \norm{\htheta_I - \btheta_I^\circ}_2 \le \frac{3\lambda_{I,1}s_I^{\frac{1}{2}}}{2^{-1} \kmin \size{I}} \le \frac{12 C_{u,2} C_x \sigma_x D_{I}}{\kmin} \Bigl\{\frac{s_I\log(p \vee n)}{\size{I}}\Bigr\}^{\frac{1}{2}},
    \end{equation*}
    and
    \begin{equation*}
        \norm{\htheta_I - \btheta_I^\circ}_1 \le \frac{3\lambda_{I,1} s_I}{2^{-1} \kmin \size{I}} \le \frac{12 C_{u,2} C_x \sigma_x D_{I} s_I}{\kmin} \Bigl\{\frac{\log(p \vee n)}{\size{I}}\Bigr\}^{\frac{1}{2}}.
    \end{equation*}
\end{proof}

\begin{lemma}\label{lem:xb2}
    Assume that Condition~\ref{cond:dist} holds.
    For interval $I \subset (0, n]$ and a fixed $\btheta$,
    with probability at least $1 - n^{-2}\exp\{-C_{u,1} \log(p \vee n)\}$, uniformly for any sub-interval $I \subset (0, n]$,
    \begin{equation*}
        \Bigabs{\sum_{i \in I} \{\xbf_i^\top(\btheta_i^\circ - \btheta)\}^2 - \Delta_{I, \btheta}^2 \size{I}} \le C_{u,2} C_x^2 \biggl\{\Delta_{I,4,\btheta}^4 \vee \frac{\Delta_{I,\infty,\btheta}^4\log(p \vee n)}{\size{I}}\biggr\}^{\frac{1}{2}} \{\size{I} \log(p \vee n)\}^{\frac{1}{2}},
    \end{equation*}
    where $C_{u,1}$ and $C_{u,2}$ are two universal constants.
\end{lemma}

\begin{lemma}\label{lem:xbe}
    Assume that Condition~\ref{cond:dist} holds.
    For interval $I \subset (0, n]$ and a fixed $\btheta$,
    with probability at least $1 - n^{-2}\exp\{-C_{u,1} \log(p \vee n)\}$,
    \begin{equation*}
        \Bigabs{\sum_{i \in I} \xbf_i^\top(\btheta_i^\circ - \btheta) \epsilon_i} \le C_{u,2} C_x C_{\epsilon} \biggl\{\Delta_{I, \btheta}^2 \vee \frac{\Delta_{I,\infty, \btheta}^2 \log(p \vee n)}{\size{I}}\biggr\}^{\frac{1}{2}} \{\size{I} \log(p \vee n)\}^{\frac{1}{2}},
    \end{equation*}
    where $C_{u,1}$ and $C_{u,2}$ are two universal constants.
\end{lemma}

\begin{proof}[Proof of Lemma \ref{lem:xb2}--\ref{lem:xbe}.]
    They both follow from Bernstein's inequality with similar arguments as in the proof of Lemma \ref{lem:xxbeinf}. \end{proof}

\subsection{Certifying $\mathbb{G}$ for the Full Model-fitting}\label{subsec:cert_full}

\begin{lemma}[In-sample error]\label{lem:in_err}
    Assume Condition~\ref{cond:parameter}\,(a) and Condition~\ref{cond:dist} hold.
    Under the setting in Lemma \ref{lem:oracle},
    with probability at least $1 - 4 \exp\{-C_{u,1} \log(p \vee n)\}$, for any interval $I = (\tau_l, \tau_r]$ such that $\size{I} \ge C_{\mathsf{re}} s_I \log(p \vee n)$,
    \begin{align*}
        & \Bigabs{\L_I - \sum_{i \in I} \epsilon_i^2 - \Delta_I^2 \size{I}} \le \frac{48 C_{u,2}^2 C_x^2 \sigma_x^2 D_{I}^2 s_I \log(p \vee n)}{\kmin}\nonumber\\
        + & C_{u,2} C_x (C_x \Delta_{I, \infty} + 2 C_{\epsilon}) [(\Delta_{I}^2\size{I}) \vee \{\Delta_{I, \infty}^2\log(p \vee n)\}]^{\frac{1}{2}} \{\log(p \vee n)\}^{\frac{1}{2}}.
    \end{align*}
    Additionally if Condition~\ref{cond:parameter}\,(b) holds,
    \begin{align*}
        & \Bigabs{\L_I - \sum_{i \in I} \epsilon_i^2 - \Delta_I^2 \size{I}} \le \frac{48 C_{u,2}^2 C_x^2 \sigma_x^2 D_{I}^2 s_I \log(p \vee n)}{\kmin}\nonumber\\
        + & C_{u,2} C_x (C_x \sigma_x C_{\theta} + 2 s_I^{-\frac{1}{2}} C_{\epsilon}) [(\Delta_{I}^2\size{I}) \vee \{\sigma_x^2 C_{\theta}^2 s_I\log(p \vee n)\}]^{\frac{1}{2}} \{s_I \log(p \vee n)\}^{\frac{1}{2}}.
    \end{align*}
\end{lemma}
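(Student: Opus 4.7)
The plan is to reduce the statement to a mechanical combination of the four supporting lemmas already established (Lemmas \ref{lem:re}--\ref{lem:oracle}), after an algebraic decomposition of $\L_I - \sum_{i\in I}\epsilon_i^2 - \Delta_I^2\size{I}$. Write $\vbf_i = \bbeta_i^\circ - \bbeta_I^\circ$ and $\ubf = \bbeta_I^\circ - \hb_I$, and observe that since $\bbeta_I^\circ=\size{I}^{-1}\sum_{i\in I}\bbeta_i^\circ$ is the population minimizer, $\sum_{i\in I}\vbf_i=\mathbf{0}$. Then
\[
y_i - \xbf_i^\top \hb_I \;=\; \xbf_i^\top \vbf_i + \xbf_i^\top \ubf + \epsilon_i,
\]
and squaring and summing over $i\in I$ gives the decomposition
\[
\L_I - \sum_{i\in I}\epsilon_i^2 - \Delta_I^2\size{I}
\;=\; T_1 + T_2 + T_3 + T_4,
\]
with $T_1 = \sum_{i\in I}(\xbf_i^\top\vbf_i)^2 - \Delta_I^2\size{I}$, $T_2 = \ubf^\top \hat\Sigma_I \ubf\cdot\size{I}$, $T_3 = 2\ubf^\top\sum_{i\in I}\xbf_i(\xbf_i^\top\vbf_i+\epsilon_i)$, and $T_4 = 2\sum_{i\in I}(\xbf_i^\top\vbf_i)\epsilon_i$.

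Next I would bound each term. For $T_1$, apply Lemma \ref{lem:xb2} together with the elementary estimate $\Delta_{I,4}^4\le\Delta_I^2\Delta_{I,\infty}^2$; this yields the $C_x\Delta_{I,\infty}$ component of the second summand in the stated bound. For $T_4$, apply Lemma \ref{lem:xbe} directly; this yields the $2C_\epsilon$ component. For $T_3$, use H\"older's inequality to split as $\norm{\ubf}_1$ times $\norm{\sum_{i\in I}\xbf_i(\xbf_i^\top\vbf_i+\epsilon_i)}_\infty$; the first factor is controlled by Lemma \ref{lem:oracle} (the $\ell_1$ oracle rate), and the second by Lemma \ref{lem:xxbeinf}, and their product is of order $C_x\sigma_x D_I^2 s_I\log(p\vee n)$, which is absorbed into the first summand in the stated bound. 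For $T_2$, note that the basic inequality for the LASSO forces $\ubf$ to lie in the cone $\set{\vbf: \norm{\vbf_{\Scal^c}}_1\le 3\norm{\vbf_\Scal}_1}$ for some $\Scal$ with $\size{\Scal}\le s_I$; then a two-sided version of Lemma \ref{lem:re}, combined with $\size{I}\ge C_{\mathsf{re}} s_I\log(p\vee n)$, bounds $\ubf^\top\hat\Sigma_I\ubf\le 2\sigma_x^2\norm{\ubf}_2^2$, and the $\ell_2$ oracle rate from Lemma \ref{lem:oracle} then yields $T_2 = O(\sigma_x^2 D_I^2 s_I\log(p\vee n)/\kmin)$, matching the first summand of the target bound with the stated constant.

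The second form under Condition \ref{cond:dist}(b) follows by substituting the crude bound $\Delta_{I,\infty}^2\lesssim\sigma_x^2 C_\beta^2 s_I$, which comes from $\norm{\bbeta_i^\circ-\bbeta_I^\circ}_2^2\le 4C_\beta^2 s_I$ since the relevant support has size at most $s_I$, together with $\norm{\mathbf{v}}_\Sigma\le\sigma_x\norm{\mathbf{v}}_2$. Uniformity over $I\subset(0,n]$ with $\size{I}\ge C_{\mathsf{re}} s_I\log(p\vee n)$ is inherited from the corresponding uniformity in the supporting lemmas; a union bound over the four events absorbs the final probability into $4\exp\{-C_{u,1}\log(p\vee n)\}$.

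The main obstacle I anticipate is the treatment of $T_2$: although the argument is standard in a fixed-design LASSO setting, here the ``target'' $\bbeta_I^\circ$ is the population minimizer of a \emph{non-homogeneous} sum and thus is not itself the generative parameter, so I need to verify that the cone condition for $\ubf$ still arises from the basic inequality with the tuning level $\lambda_I$ chosen in Lemma \ref{lem:oracle}, and that the upper direction of the restricted eigenvalue (not explicitly stated in Lemma \ref{lem:re}, only the lower direction is quoted) can be extracted from the same concentration inequality. A secondary bookkeeping issue is to confirm that all the lower-order $\sqrt{\size{I}\log(p\vee n)}$-type cross-term bounds are dominated by $D_I^2 s_I\log(p\vee n)$ once $\size{I}\ge C_{\mathsf{re}} s_I\log(p\vee n)$, so that they can be absorbed into the first summand without inflating constants beyond $48 C_{u,2}^2 C_x^2\sigma_x^2/\kmin$.
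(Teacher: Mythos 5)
Your decomposition $\L_I-\sum\epsilon_i^2-\Delta_I^2\size{I}=T_1+T_2+T_3+T_4$ is correct, and the treatment of $T_1$ and $T_4$ via Lemmas \ref{lem:xb2} and \ref{lem:xbe} matches the paper exactly. The genuine divergence is in how $T_2+T_3$ is handled. The paper does not bound $T_2=\sum_{i\in I}\{\xbf_i^\top(\bbeta_I^\circ-\hb_I)\}^2$ and $T_3$ separately; instead it sandwiches $\L_I-\sum_{i\in I}(y_i-\xbf_i^\top\bbeta_I^\circ)^2$ between $\pm\lambda_I\norm{\bbeta_I^\circ-\hb_I}_1$ by (i) for the lower bound, \emph{dropping} the non-negative $T_2$ and bounding $T_3$ from below via H\"older and Lemma \ref{lem:xxbeinf}, and (ii) for the upper bound, invoking the \emph{optimality} of $\hb_I$: the basic inequality $\L_I+\lambda_I\norm{\hb_I}_1\le\sum(y_i-\xbf_i^\top\bbeta_I^\circ)^2+\lambda_I\norm{\bbeta_I^\circ}_1$ directly gives $\L_I-\sum(y_i-\xbf_i^\top\bbeta_I^\circ)^2\le\lambda_I\norm{\bbeta_I^\circ-\hb_I}_1$ without ever touching $T_2$. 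Combined with the $\ell_1$ rate in Lemma \ref{lem:oracle} and $\lambda_I=2\lambda_{I,1}$, this immediately yields $12\lambda_{I,1}^2 s_I/(\kmin\size{I})\le 48C_{u,2}^2C_x^2\sigma_x^2D_I^2 s_I\log(p\vee n)/\kmin$.

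The obstacle you anticipate — that Lemma \ref{lem:re} as stated gives only the lower RE direction, so bounding $T_2$ requires extracting a two-sided version from its proof — is real, and the paper's route simply sidesteps it. Your approach is workable (the concentration step inside the proof of Lemma \ref{lem:re} is already two-sided, since it controls $\abs{\vbf^\top(\hat\Sigma_I-\Sigma)\vbf}$), but it costs you: a direct bound $T_2\lesssim\sigma_x^2\norm{\ubf}_2^2\size{I}$ together with the $\ell_2$ oracle rate introduces a factor proportional to $\sigma_x^2/\kmin$ beyond what the paper's argument produces, so you would not recover the explicit constant $48C_{u,2}^2C_x^2\sigma_x^2/\kmin$ but a strictly worse one of order $\sigma_x^4/\kmin^2$. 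This is immaterial for the downstream theory (any fixed constant suffices in Corollary \ref{cor:in_err}), but it means your proof establishes the lemma only up to a different constant, not the one in the statement. Your secondary concern — whether the cone condition for $\ubf$ holds given that $\bbeta_I^\circ$ is not the generative parameter — is a non-issue: the proof of Lemma \ref{lem:oracle} already verifies the cone containment for this exact $\ubf$ under the non-homogeneous design, so you can cite that. Everything else (the substitution $\Delta_{I,\infty}^2\lesssim\sigma_x^2C_\beta^2 s_I$, the absorption of $\Delta_{I,4}^4\le\Delta_I^2\Delta_{I,\infty}^2$, the union bound giving $4\exp\{-C_{u,1}\log(p\vee n)\}$) agrees with the paper.
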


\begin{proof}[Proof of Lemma \ref{lem:in_err}.]
    Assume that the inequalities in Lemmas \ref{lem:re}--\ref{lem:xbe} hold for all interval $I$ such that $\size{I} \ge C_{\mathsf{re}} s_I \log(p \vee n)$ and $\btheta = \btheta_I^\circ$. It holds with a probability lower bound $1 - 4 \exp\{-C_{u,1} \log(p \vee n)\}$.

    For any interval $I = (c,d]$, we will analyze the cost $\L_I$. By the definition of the cost $\L_I$,
    \begin{align*}
        \L_I =& \sum_{i \in I} (y_i - \xbf_i^\top \htheta_I)^2 = \sum_{i \in I} \{y_i - \xbf_i^\top \btheta_I^\circ + \xbf_i^\top(\btheta_I^\circ - \htheta_I)\}^2 \\
        =& \sum_{i \in I} \{(y_i - \xbf_i^\top \btheta_I^\circ)^2 + \{\xbf_i^\top(\btheta_I^\circ - \htheta_I)\}^2\} + 2 \sum_{i \in I} \{\xbf_i \epsilon_i + \xbf_i \xbf_i^\top (\btheta_i^\circ - \btheta_I^\circ)\}^\top (\btheta_I^\circ - \htheta_I) \\
        \ge & \sum_{i \in I} (y_i - \xbf_i^\top \btheta_I^\circ)^2 - \lambda_{I,1} \norm{\btheta_I^\circ - \htheta_I}_1 \ge \sum_{i \in I} (y_i - \xbf_i^\top \btheta_I^\circ)^2 - \lambda_I \norm{\btheta_I^\circ - \htheta_I}_1,
    \end{align*}
    where the second last inequality follows from Lemma \ref{lem:xxbeinf} and the last one is from $\lambda_I = 2 \lambda_{I,1} > 0$. By the definition of $\htheta_I$,
    \begin{equation*}
        \L_I - \sum_{i \in I}(y_i - \xbf_i^\top \btheta_I^\circ)^2 \le \lambda_I (\norm{\btheta_I^\circ}_1 - \norm{\htheta_I}_1) \le \lambda_I \norm{\btheta_I^\circ - \htheta_I}_1.
    \end{equation*}
    By combining the result in Lemma \ref{lem:oracle},
    \begin{equation}\label{equ:in_err_ieq1}
        \Bigabs{\L_I - \sum_{i \in I} (y_i - \xbf_i^\top \btheta_I^\circ)^2} \le \lambda_I \norm{\btheta_I^\circ - \htheta_I}_1 \le \frac{12 \lambda_{I,1}^2 s_I}{\kmin\size{I}} \le \frac{48 C_{u,2}^2 C_x^2 \sigma_x^2 D_I^2 s_I \log(p \vee n)}{\kmin}.
    \end{equation}
    By Lemma \ref{lem:xb2} and Lemma \ref{lem:xbe},

    \begin{equation}\label{equ:in_err_ieq2}
    \begin{aligned}
        &\Bigabs{\sum_{i \in I} [(y_i - \xbf_i^\top \btheta_I^\circ)^2 - \epsilon_i^2] - \Delta_I^2 \size{I}} \\
        = & \Bigabs{\sum_{i \in I}\{\xbf_i^\top(\btheta_i^\circ - \btheta_I^\circ)\}^2 - \Delta_I^2 \size{I} + \sum_{i \in I} 2 \epsilon_i \xbf_i^\top (\btheta_i^\circ - \btheta_I^\circ) } \\
        \le & C_{u,2} C_x^2 \Bigl\{\Delta_{I,4}^4 \vee \frac{\Delta_{I,\infty
        }^4 \log(p \vee n)}{\size{I}}\Bigr\}^{\frac{1}{2}} \{\size{I} \log(p \vee n)\}^{\frac{1}{2}} \\
        +& 2 C_{u,2} C_x C_{\epsilon} \Bigl\{\Delta_I^2 \vee \frac{\Delta_{I,\infty
        }^2 \log(p \vee n)}{\size{I}}\Bigr\}^{\frac{1}{2}} \{\size{I} \log(p \vee n)\}^{\frac{1}{2}}.
    \end{aligned}
    \end{equation}

    From Eq. (\ref{equ:in_err_ieq1}) and Eq. (\ref{equ:in_err_ieq2}),
    \begin{align*}
        & \Bigabs{\L_I - \sum_{i \in I} \epsilon_i^2 - \Delta_I^2 \size{I}} \le \frac{48 C_{u,2}^2 C_x^2 \sigma_x^2 D_I^2 s_I \log(p \vee n)}{\kmin}\nonumber\\
        + & C_{u,2} C_x^2 \Bigl\{\Delta_{I,4}^4 \vee \frac{\Delta_{I,\infty
        }^4 \log(p \vee n)}{\size{I}}\Bigr\}^{\frac{1}{2}} \{\size{I} \log(p \vee n)\}^{\frac{1}{2}} \nonumber\\
        +& 2 C_{u,2} C_x C_{\epsilon} \Bigl\{\Delta_I^2 \vee \frac{\Delta_{I,\infty
        }^2 \log(p \vee n)}{\size{I}}\Bigr\}^{\frac{1}{2}} \{\size{I} \log(p \vee n)\}^{\frac{1}{2}}.
    \end{align*}

    By Condition~\ref{cond:parameter} (b), $\Delta_{I, \infty} \le C_{\theta} \sigma_x s_I^{\frac{1}{2}}$. Hence we have
    \[
        \Bigl\{\Delta_I^2 \vee \frac{\Delta_{I,\infty
        }^2 \log(p \vee n)}{\size{I}}\Bigr\}^{\frac{1}{2}} \{\size{I} \log(p \vee n)\}^{\frac{1}{2}} \le s_I^{-\frac{1}{2}} \Bigl[\{C_{\theta} \sigma_x s_I \log(p \vee n)\} \vee \{\Delta_I^2 \size{I} s_I \log(p \vee n)\}^{\frac{1}{2}}\Bigr].
    \]
    Since $\Delta_{I,4}^2 \le \Delta_{I,\infty} \Delta_{I,2}$, it also holds that
    \begin{align*}
        & \Bigl\{\Delta_{I,4}^4 \vee \frac{\Delta_{I,\infty
        }^4 \log(p \vee n)}{\size{I}}\Bigr\}^{\frac{1}{2}} \{\size{I} \log(p \vee n)\}^{\frac{1}{2}} \\
        \le & C_{\theta} \sigma_x \Bigl[\{C_{\theta} \sigma_x s_I \log(p \vee n)\} \vee \{\Delta_I^2 \size{I} s_I \log(p \vee n)\}^{\frac{1}{2}}\Bigr].
    \end{align*}
    Finally, we have
    \begin{align*}
        & \Bigabs{\L_I - \sum_{i \in I} \epsilon_i^2 - \Delta_I^2 \size{I}} \le \frac{48 C_{u,2}^2 C_x^2 \sigma_x^2 D_{I}^2 s_I \log(p \vee n)}{\kmin}\nonumber\\
        + & C_{u,2} C_x (C_x \sigma_x C_{\theta} + 2 C_{\epsilon}s_I^{-\frac{1}{2}}) [(\Delta_{I}^2\size{I}) \vee \{\sigma_x^2 C_{\theta}^2 s_I\log(p \vee n)\}]^{\frac{1}{2}} \{s_I\log(p \vee n)\}^{\frac{1}{2}}.
    \end{align*}
    \end{proof}

\begin{corollary}\label{cor:in_err}
    Assume Condition~\ref{cond:change}, Condition~\ref{cond:parameter}, and Condition~\ref{cond:dist} hold.
    Under the same probability event in Lemma \ref{lem:in_err} and with sufficiently large $C_{\mathsf{m}} \ge C_{\mathsf{re}}$, we have the following conclusions.
    \begin{enumerate}[label=(\alph*)]
        \item For $I$ such that $\Delta_I = 0$ and $\size{I} \ge C_{\mathsf{m}} s \log(p \vee n)$,
        \begin{equation*}
            \Bigabs{\L_I - \sum_{i \in I} \epsilon_i^2} \le \frac{48 C_{u,2}^2 C_x^2 \sigma_{x}^2 C_{\epsilon
            }^2 s \log(p \vee n)}{\kmin} \triangleq C_{\ref*{cor:in_err}.1} s \log(p \vee n).
        \end{equation*}
        \item For $I = (a, b]$ such that $I \cap \truecps = \{\tau_k^\ast\}$, $\min(\tau_k^\ast - a, b - \tau_k^\ast) \le \widetilde{C} \Delta_k^{-2} s \log(p \vee n)$ and $\size{I} \ge C_{\mathsf{m}} s \log(p \vee n)$,
        \begin{equation*}
            \Bigabs{\L_I - \sum_{i \in I} \epsilon_i^2 - \Delta_I^2 \size{I}} \le C_{\ref*{cor:in_err}.2} s \log(p \vee n),
        \end{equation*}
        where $C_{\ref*{cor:in_err}.2} = 2 C_{\ref*{cor:in_err}.1} + (C_{\mathsf{m}} \kmin)^{-1}{48 C_{u,2}^2 C_x^4 \sigma_x^2 \widetilde{C}} + C_{u,2} C_x (C_x \sigma_x C_{\theta} + 2  C_{\epsilon}) \{(2C_{\theta}\sigma_x) \vee (\widetilde{C}^{\frac{1}{2}})\}$.
        \item For $I$ such that $\size{I} \ge C_{\mathsf{m}} s \log(p \vee n)$ and $\Delta_I^2 \size{I} \ge 2^{-1} \widetilde{C} s \log(p \vee n)$ for some sufficiently large $\widetilde{C} \ge 6 C_{\theta}^2 \sigma_x^2$,
        \begin{equation*}
            \L_I - \sum_{i \in I} \epsilon_i^2 \ge (1 - C_{\ref*{cor:in_err}.3})\Delta_I^2 \size{I},
        \end{equation*}
        where $
            C_{\ref*{cor:in_err}.3} = (C_{\mathsf{m}} \kmin)^{-1}(96 C_{u,2}^2 C_x^4 \sigma_x^2) + 6^{\frac{1}{2}} C_{u,2} C_x \Bigl(C_x \sigma_x C_{\theta} + 2 C_{\epsilon}\Bigr) \widetilde{C}^{-\frac{1}{2}} + 6 \widetilde{C}^{-1} C_{\ref*{cor:in_err}.1}
        $.
    \end{enumerate}
\end{corollary}

\begin{proof}[Proof of Corollary \ref{cor:in_err}.]
    All of the results in the three parts of Corollary~\ref{cor:in_err} follow from the proof of Lemma \ref{lem:in_err} and the conditions about the change signals, that is, the variations $\Delta_{I}^2 \size{I}$.

    \noindent \textit{Part\,(a)}. If $\Delta_I = 0$, we have $D_{I}^2 = C_{\epsilon}^2$ and $\Delta_{I, \infty} = 0$.
        Lemma \ref{lem:in_err} reduces to
        \begin{equation*}
            \Bigabs{\L_I - \sum_{i \in I} \epsilon_i^2 - \Delta_I^2 \size{I}} \le \frac{48 C_{u,2}^2 C_x^2 \sigma_x^2 C_{\epsilon}^2}{\kmin} s \log(p \vee n) = C_{\ref*{cor:in_err}.1} s \log(p \vee n).
        \end{equation*}

    \noindent \textit{Part\,(b)}. Since $\size{I \cap \Tcal^\ast} = 1$, we have $s_I \le 2 s$ and Lemma \ref{lem:in_err} holds for $\size{I} \ge C_{\mathsf{m}} s \log(p \vee n)\ge 2^{-1} C_{\mathsf{m}} s_I \log(p \vee n) \ge C_{\mathsf{re}} s_I \log(p \vee n)$ with sufficiently large $C_{\mathsf{m}} \ge 2 C_{\mathsf{re}}$. By $\size{I} \ge C_{\mathsf{m}} s \log(p \vee n)$,
        we have $\size{I}^{-1}(C_x^2 \Delta_{I, \infty}^2 + C_{\epsilon}^2) \log(p \vee n) \le 2 (C_{\mathsf{m}}s)^{-1} C_x^2 \sigma_x^2 C_{\theta}^2 s + C_{\epsilon}^2 \le C_{\epsilon}^2$ provided that $C_{\mathsf{m}}$ is sufficiently large.
        Hence $D_{I}^2 = C_x^2 \Delta_I^2 + C_{\epsilon}^2$.
        Combining $\Delta_I^2 \size{I} \le 2^{-1} \widetilde{C} s \log(p \vee n)$ and $\size{I} \ge C_{\mathsf{m}} s \log(p \vee n)$, $\Delta_I^2 \le 2^{-1} C_{\mathsf{m}}^{-1} \widetilde{C}$.
        Therefore by Lemma \ref{lem:in_err},
        \begin{equation*}
            \Bigabs{\L_I - \sum_{i \in I} \epsilon_i^2 - \Delta_I^2 \size{I}} \le C_{\ref*{cor:in_err}.2} s \log(p \vee n),
        \end{equation*}
        where $$C_{\ref*{cor:in_err}.2} = \frac{96 C_{u,2}^2 C_x^2 \sigma_x^2 C_{\epsilon}^2}{\kmin} + \frac{48 C_{u,2}^2 C_x^4 \sigma_x^2 \widetilde{C}}{C_{\mathsf{m}} \kmin} + C_{u,2} C_x (C_x \sigma_x C_{\theta} + 2 C_{\epsilon}) \{(2C_{\theta}\sigma_x) \vee (\widetilde{C}^{\frac{1}{2}})\}.$$

    \noindent \textit{Part\,(c)}. When $\size{I \cap \Tcal^\ast} \le 1$, the discussion in (b) is still valid. We have $\size{I} \ge C_{\mathsf{m}} s \log(p \vee n) \ge 2^{-1} C_{\mathsf{m}} s_I \log(p \vee n)$ and $\Delta_I^2 \size{I} \ge 4^{-1} \widetilde{C} s_I \log(p \vee n)$. Otherwise when $\size{I \cap \Tcal^\ast} \ge 2$, by Condition~\ref{cond:change}, we can obtain $\size{I} \ge 3^{-1} C_{\mathsf{snr}} s_I \log(p \vee n) \ge 2^{-1} C_{\mathsf{m}} s_I \log(p \vee n)$ and $\Delta_I^2 \size{I} \ge 6^{-1}\widetilde{C} s_I \log(p \vee n)$ since $C_{\mathsf{snr}}$ is sufficiently large.
        The above claim becomes trivial when $\size{I \cap \truecps} \ge 3$. When $\size{I \cap \truecps} = 2$, the result follows from the fact that $s \ge 3^{-1} s_I$.
        Recall that $\widetilde{C} \ge 6 C_{\theta}^2 \sigma_x^2$.
        We have $\Delta_I^2 \ge (6 \size{I})^{-1} \widetilde{C} s_I \log(p \vee n) \ge \size{I}^{-1} C_{\theta}^2 \sigma_x^2 s_I \log(p \vee n) \ge \size{I}^{-1} \Delta_{I,\infty}^2 \log(p \vee n)$.
        It implies that $D_{I}^2 = C_x^2 \Delta_{I}^2 + C_{\epsilon}^2$.
        By Lemma \ref{lem:in_err},
        \begin{align}
            & \L_I - \sum_{i \in I} \epsilon_i^2 \ge \Delta_{I}^2 \size{I} - \frac{48 C_{u,2}^2 C_x^2 \sigma_x^2 (C_x^2 \Delta_I^2 + C_{\epsilon}^2) s_I \log(p \vee n)}{\kmin} \nonumber\\
            -& C_{u,2} C_x (C_x \sigma_x C_{\theta} + 2 s_I^{-\frac{1}{2}} C_{\epsilon}) \{\Delta_{I}^2 \size{I} s_I \log(p \vee n)\}^{\frac{1}{2}} \ge (1 - C_{\ref*{cor:in_err}.3})\Delta_I^2 \size{I},
        \end{align}
        where
        \begin{equation*}
            C_{\ref*{cor:in_err}.3} = \frac{96 C_{u,2}^2 C_x^4 \sigma_x^2}{\kmin C_{\mathsf{m}}} + \frac{6 C_{\ref*{cor:in_err}.1}}{\widetilde{C}} + \frac{6^{\frac{1}{2}} C_{u,2} C_x \bigl(C_x \sigma_x C_{\theta} + 2 C_{\epsilon} \bigr)}{\widetilde{C}^{\frac{1}{2}}}.
        \end{equation*}
    \end{proof}

\subsection{Certifying $\mathbb{G}$ for Reliever}\label{subsec:cert_reliever}

To proceed, we first introduce some notations used in the theoretical justification for Reliever.
Let $R$ be the surrogate interval w.r.t. $I$ and $J = I \setminus R$ be the complement. Denote oracle change variation for Reliever by $\overline{\Delta}_{I}^2 = \Delta_{I, \btheta_R^\circ}^2$.
To distinguish the losses for Reliever from these for full model-fitting, we denote the losses of interval $I$ under the relief model estimate $\hat{\btheta}_R$, named relief losses, by $\widetilde{\L}_{I} = \sum_{i \in I} (y_i - \xbf_i^\top \htheta_{R})^2$.

\begin{lemma}[Relief error]\label{lem:rf_err}
    Assume Condition~\ref{cond:parameter} and Condition~\ref{cond:dist} hold.
    Under the setting in Lemma \ref{lem:oracle},
        with probability at least $1 - 4 \exp\{-C_{u,1} \log(p \vee n)\}$, for any interval $I = (\tau_l, \tau_r]$ such that $\size{I} \ge r^{-1} C_{\mathsf{re}} s_I \log(p \vee n)$ so that its relief interval $R$ satisfies $\size{R} \ge C_{\mathsf{re}} s_I \log(p \vee n)$, the relief loss $\widetilde{\L}_I$ satisfies that:
\begin{align*}
    & \bigabs{\widetilde{\L}_I - \sum_{i\in I} \epsilon_i^2 - \Delta_{I,\btheta_R^\circ}^2 \size{I}} \\
    \le & (1 + C_{\ref*{lem:re}}) r^{-1} \kmin C_{\ref*{lem:oracle}}^2 D_{R}^2 s_R \log (p \vee n) \\
    &+ 6^{-1} r^{-\frac{1}{2}} \kmin C_{\ref*{lem:oracle}}^2 D_{I, \btheta_{R}^\circ} D_{R} s_R \log(p \vee n) \\
    &+ 2 r^{-\frac{1}{2}} (1-r)^{\frac{1}{2}} \kmin^{\frac{1}{2}} C_{\ref*{lem:oracle}} D_{R} \{\Delta_{J,\btheta_R^\circ}^2 \size{J}\}^{\frac{1}{2}} \{s_R \log(p \vee n)\}^{\frac{1}{2}}.\\
    &+ C_{u,2} C_x (C_x \sigma_x C_{\theta} + 2 C_{\epsilon}s_I^{-\frac{1}{2}}) \Bigl[\{C_{\theta} \sigma_x s_I \log(p \vee n)\} \vee \{\Delta_{I,\btheta_R^\circ}^2 \size{I} s_I \log(p \vee n)\}^{\frac{1}{2}}\Bigr].
\end{align*}
\end{lemma}

\begin{proof}[Proof of Lemma~\ref{lem:rf_err}]
Similarly, we begin with the control of the difference $\widetilde{\L}_I - \widetilde{\L}_I^\circ$.
Observe that,
\begin{align*}
    &\widetilde{\L}_I - \widetilde{\L}_I^\circ \\
    =& \size{I} \norm{\btheta_R^\circ - \htheta_R}_{\widehat{\Sigma}_I}^2 + 2 \sum_{i \in I} (y_i - \xbf_i^\top \btheta_R^\circ) \xbf_i^\top (\btheta_R^\circ - \htheta_R) \\
    =& \size{I} \norm{\btheta_R^\circ - \htheta_R}_{\widehat{\Sigma}_I}^2 + 2 \sum_{i \in I} \{ (\xbf_i \xbf_i^\top - \Sigma) (\btheta_i^\circ - \btheta_R^\circ) + \epsilon_i \xbf_i\}^\top (\btheta_R^\circ - \htheta_R) \\
    &+ 2 \sum_{i \in I \setminus R} (\btheta_i^\circ - \btheta_R^\circ)^\top \Sigma (\btheta_R^\circ - \htheta_R).
\end{align*}
We will derive the upper bounds for the absolute values of the three terms in the above decomposition.

For the first term, assuming the probability events in Lemma~\ref{lem:re} and Lemma~\ref{lem:oracle} hold, we have:
\begin{equation}\label{equ:rf_part1}
    \size{I} \norm{\btheta_R^\circ - \htheta_R}_{\widehat{\Sigma}_I}^2 \le (1 + C_{\ref*{lem:re}}) r^{-1} \kmin C_{\ref*{lem:oracle}}^2 D_{R}^2 s_R \log (p \vee n).
\end{equation}

For the second term, assuming the probability events in Lemma~\ref{lem:xxbeinf} and Lemma~\ref{lem:oracle} hold, we have:
\begin{align*}
    & \Bigabs{\sum_{i \in I} \{ (\xbf_i \xbf_i^\top - \Sigma) (\btheta_i^\circ - \btheta_R^\circ) + \epsilon_i \xbf_i\}^\top (\btheta_R^\circ - \htheta_R)} \\
    \le & \Bignorm{\sum_{i \in I} \{ (\xbf_i \xbf_i^\top - \Sigma) (\btheta_i^\circ - \btheta_R^\circ) + \epsilon_i \xbf_i\}}_\infty \bignorm{\btheta_R^\circ - \htheta_R}_1 \\
    \le & C_{u,2} C_x \sigma_x D_{I, \btheta_{R}^\circ} \{\size{I} \log(p \vee n)\}^{\frac{1}{2}} \times C_{\ref*{lem:oracle}} D_{R} s_R \Bigl\{\frac{ \log (p \vee n)}{\size{I} r}\Bigr\}^{\frac{1}{2}} \\
    = & 12^{-1} r^{-\frac{1}{2}} \kmin C_{\ref*{lem:oracle}}^2 D_{I, \btheta_{R}^\circ} D_{R} s_R \log(p \vee n).
\end{align*}
where $D_{I, \btheta} = [(C_x^2\Delta_{I,\btheta}^2 + C_{\epsilon}^2) \vee \{ \size{I}^{-1}(C_x^2\Delta_{I, \infty, \btheta}^2 + C_{\epsilon}^2) \log(p \vee n)\}]^{\frac{1}{2}}$.

For the third term, assuming the probability event in Lemma~\ref{lem:oracle} holds, we have:
\begin{equation}\label{equ:diff_Delta_cross_1}
    \begin{aligned}
    & \Bigabs{\sum_{i \in J} (\btheta_i^\circ - \btheta_{R}^\circ)^\top \Sigma (\btheta_{R}^\circ - \htheta_{R})} \\
    \le & \sum_{i \in J} \norm{\btheta_i^\circ - \btheta_{R}^\circ}_{\Sigma} \norm{\btheta_{R}^\circ - \htheta_{R}}_{\Sigma} \\
    \le & \Delta_{J,\btheta_R^\circ} \size{J} \times \kmin^{\frac{1}{2}} C_{\ref*{lem:oracle}} D_{R} \Bigl\{\frac{s_R \log(p \vee n)}{\size{R}}\Bigr\}^{\frac{1}{2}} \\
    \le & \kmin^{\frac{1}{2}} C_{\ref*{lem:oracle}} D_{R} \{\Delta_{J,\btheta_R^\circ}^2 \size{J}\}^{\frac{1}{2}} \Bigl\{\frac{(1-r) s_R \log(p \vee n)}{r}\Bigr\}^{\frac{1}{2}}.
\end{aligned}
\end{equation}

Therefore,
\begin{equation}\label{equ:rf_err_ieq1}
    \begin{aligned}
    \abs{\widetilde{\L}_I - \widetilde{\L}_I^\circ} \le & (1 + C_{\ref*{lem:re}}) r^{-1} \kmin C_{\ref*{lem:oracle}}^2 D_{R}^2 s_R \log (p \vee n) \\
    &+ 6^{-1} r^{-\frac{1}{2}} \kmin C_{\ref*{lem:oracle}}^2 D_{I, \btheta_{R}^\circ} D_{R} s_R \log(p \vee n) \\
    &+ 2 r^{-\frac{1}{2}} (1-r)^{\frac{1}{2}} \kmin^{\frac{1}{2}} C_{\ref*{lem:oracle}} D_{R} \{\Delta_{J,\btheta_R^\circ}^2 \size{J}\}^{\frac{1}{2}} \{s_R \log(p \vee n)\}^{\frac{1}{2}}.
\end{aligned}
\end{equation}

We now turn to deriving the upper bound of $\abs{\widetilde{\L}_I^\circ - \sum_{i\in I} \epsilon_i^2 - \Delta_{I,\btheta_R^\circ}^2 \size{I}}$.
By Lemma \ref{lem:xb2} and Lemma \ref{lem:xbe},
\begin{align*}
    &\bigabs{\widetilde{\L}_I^\circ - \sum_{i\in I} \epsilon_i^2 - \Delta_{I,\btheta_R^\circ}^2 \size{I}}\\
    = & \Bigabs{\Bigl[\sum_{i \in I}\{\xbf_i^\top(\btheta_i^\circ - \btheta_R^\circ)\}^2\Bigr] - \Delta_{I, \btheta_R^\circ}^2 \size{I} + \sum_{i \in I} 2 \epsilon_i \xbf_i^\top (\btheta_i^\circ - \btheta_I^\circ) }\\
    \le & C_{u,2} C_x^2 \Bigl\{\Delta_{I,4, \btheta_R^\circ}^4 \vee \frac{\Delta_{I,\infty, \btheta_R^\circ}^4 \log(p \vee n)}{\size{I}}\Bigr\}^{\frac{1}{2}} \{\size{I} \log(p \vee n)\}^{\frac{1}{2}}\\
    +& 2 C_{u,2} C_x C_{\epsilon} \Bigl\{\Delta_{I, \btheta_R^\circ}^2 \vee \frac{\Delta_{I,\infty, \btheta_R^\circ}^2 \log(p \vee n)}{\size{I}}\Bigr\}^{\frac{1}{2}} \{\size{I} \log(p \vee n)\}^{\frac{1}{2}}.
\end{align*}
By Condition~\ref{cond:parameter} (b), $\Delta_{I, \infty,\btheta_R^\circ} \le C_{\theta} \sigma_x s_I^{\frac{1}{2}}$. Hence we have
\begin{align*}
    & \Bigl\{\Delta_{I, \btheta_R^\circ}^2 \vee \frac{\Delta_{I,\infty, \btheta_R^\circ}^2 \log(p \vee n)}{\size{I}}\Bigr\}^{\frac{1}{2}} \{\size{I} \log(p \vee n)\}^{\frac{1}{2}} \\
    \le & s_I^{-\frac{1}{2}} \Bigl[\{C_{\theta} \sigma_x s_I \log(p \vee n)\} \vee \{\Delta_{I, \btheta_R^\circ}^2 \size{I} s_I \log(p \vee n)\}^{\frac{1}{2}}\Bigr].
\end{align*}
Since $\Delta_{I,4,\btheta_R^\circ}^2 \le \Delta_{I,\infty,\btheta_R^\circ} \Delta_{I,\btheta_{R}^\circ}$, it also holds that
\begin{align*}
    & \Bigl\{\Delta_{I,4,\btheta_R^\circ}^4 \vee \frac{\Delta_{I,\infty
    ,\btheta_R^\circ}^4 \log(p \vee n)}{\size{I}}\Bigr\}^{\frac{1}{2}} \{\size{I} \log(p \vee n)\}^{\frac{1}{2}} \\
    \le & C_{\theta} \sigma_x \Bigl[\{C_{\theta} \sigma_x s_I \log(p \vee n)\} \vee \{\Delta_{I,\btheta_R^\circ}^2 \size{I} s_I \log(p \vee n)\}^{\frac{1}{2}}\Bigr].
\end{align*}
Therefore,
\begin{align}\label{equ:rf_err_ieq2}
    &\bigabs{\widetilde{\L}_I^\circ - \sum_{i\in I} \epsilon_i^2 - \Delta_{I,\btheta_R^\circ}^2 \size{I}} \nonumber\\
    \le & C_{u,2} C_x (C_x \sigma_x C_{\theta} + 2 C_{\epsilon}s_I^{-\frac{1}{2}}) \Bigl[\{C_{\theta} \sigma_x s_I \log(p \vee n)\} \vee \{\Delta_{I,\btheta_R^\circ}^2 \size{I} s_I \log(p \vee n)\}^{\frac{1}{2}}\Bigr].
\end{align}

By Eq.(\ref{equ:rf_err_ieq1}) and Eq.(\ref{equ:rf_err_ieq2}),
\begin{align*}
    & \bigabs{\widetilde{\L}_I - \sum_{i\in I} \epsilon_i^2 - \Delta_{I,\btheta_R^\circ}^2 \size{I}} \\
    \le & (1 + C_{\ref*{lem:re}}) r^{-1} \kmin C_{\ref*{lem:oracle}}^2 D_{R}^2 s_R \log (p \vee n) \\
    &+ 6^{-1} r^{-\frac{1}{2}} \kmin C_{\ref*{lem:oracle}}^2 D_{I, \btheta_{R}^\circ} D_{R} s_R \log(p \vee n) \\
    &+ 2 r^{-\frac{1}{2}} (1-r)^{\frac{1}{2}} \kmin^{\frac{1}{2}} C_{\ref*{lem:oracle}} D_{R} \{\Delta_{J,\btheta_R^\circ}^2 \size{J}\}^{\frac{1}{2}} \{s_R \log(p \vee n)\}^{\frac{1}{2}}.\\
    &+ C_{u,2} C_x (C_x \sigma_x C_{\theta} + 2 C_{\epsilon}s_I^{-\frac{1}{2}}) \Bigl[\{C_{\theta} \sigma_x s_I \log(p \vee n)\} \vee \{\Delta_{I,\btheta_R^\circ}^2 \size{I} s_I \log(p \vee n)\}^{\frac{1}{2}}\Bigr].
\end{align*}

\end{proof}

We now proceed to certify that $\mathbb{G} = \mathbb{G}_1 \cap \mathbb{G}_2^- \cap \mathbb{G}_2^+ \cap \mathbb{G}_3$ holds with high probability.

\begin{corollary}\label{cor:mix_err}
    Assume Condition~\ref{cond:change}, Condition~\ref{cond:parameter}, and Condition~\ref{cond:dist} hold.
    Under the same probability event in Lemma \ref{lem:rf_err} and with sufficiently large $C_{\mathsf{m}} \ge C_{\mathsf{re}}$, we have the following conclusions.
        \begin{enumerate}[label=(\alph*)]
        \item For $I$ such that $\Delta_I = 0$ and $\size{I} \ge C_{\mathsf{m}} s \log(p \vee n)$,
        \begin{equation*}
            \Bigabs{\widetilde{\L}_I - \sum_{i \in I} \epsilon_i^2} \le \frac{48 C_{u,2}^2 C_x^2 \sigma_{x}^2 C_{\epsilon
            }^2 s \log(p \vee n)}{\kmin} \triangleq C_{\ref*{cor:mix_err}.1} s \log(p \vee n),
        \end{equation*}
        where $C_{\ref*{cor:mix_err}.1} = \frac{6 + 6 C_{\ref*{lem:re}} + r^{\frac{1}{2}}}{6r} \kmin C_{\ref*{lem:oracle}}^2 C_\epsilon^2 + C_{u,2} C_x \sigma_x (C_x \sigma_x C_{\theta} + 2 C_{\epsilon}s^{-\frac{1}{2}}) C_{\theta}$.
        \item For $I = (a, b] \in E_{2}^{-}$ such that $I \cap \truecps = \{\tau_k^\ast\}$, $\min(\tau_k^\ast - a, b - \tau_k^\ast) \le \widetilde{C} \Delta_k^{-2} s \log(p \vee n)$ and $\size{I} \ge C_{\mathsf{m}} s \log(p \vee n)$,
        \begin{equation*}
            \widetilde{\L}_I - \sum_{i \in I} \epsilon_i^2 - \Delta_{I}^2 \size{I} \ge - C_{\ref*{cor:mix_err}.2} s \log(p \vee n),
        \end{equation*}
        and for $I \in E_2^{+} \subset E_2^{-}$ such that $I \cap \truecps = \{\tau_k^\ast\}$, $\min(\tau_k^\ast - a, b - \tau_k^\ast) \le \widetilde{C} \Delta_k^{-2} s \log(p \vee n)$ and $\size{I} \ge (\Delta_k^{2} \vee 1) C_{\mathsf{snr}} s \log(p \vee n)$,
        \begin{equation*}
            \widetilde{\L}_I - \sum_{i \in I} \epsilon_i^2 - \Delta_{I}^2 \size{I} \le C_{\ref*{cor:mix_err}.2} s \log(p \vee n),
        \end{equation*}
        where
        \begin{align*}
            C_{\ref*{cor:mix_err}.2} =& 2 \{(1 + C_{\ref*{lem:re}}) (r^{-1} C_{\epsilon}^2 + r^{-2} C_x^2 C_{m}^{-1} \widetilde{C}) + 6^{-1} (r^{-\frac{1}{2}} C_{\epsilon}^2 + r^{-\frac{3}{2}} C_x^2 C_{m}^{-1} \widetilde{C})\} \kmin C_{\ref*{lem:oracle}}^2 \\
            &+ 2^{\frac{3}{2}} (r^{-\frac{1}{2}} C_{\epsilon} \widetilde{C}^{\frac{1}{2}} + r^{-1} C_x C_{\mathsf{m}}^{-1} \widetilde{C}) (1 - r)^{\frac{1}{2}} \kmin^{\frac{1}{2}} C_{\ref*{lem:oracle}}\\
            &+ 2 C_{u,2} C_x (C_x \sigma_x C_{\theta} + 2 C_{\epsilon}s_I^{-\frac{1}{2}}) (C_{\theta} \sigma_x + 2^{-\frac{1}{2}} r^{-\frac{1}{2}} \widetilde{C}^{\frac{1}{2}} ) + r^{-1} C_{\mathsf{snr}}^{-1} \widetilde{C}.
        \end{align*}

        \item For $I$ such that $\size{I} \ge C_{\mathsf{m}} s \log(p \vee n)$ and $\Delta_I^2 \size{I} \ge 2^{-1} \widetilde{C} s \log(p \vee n)$ for some sufficiently large $\widetilde{C} \ge 6 C_{\theta}^2 \sigma_x^2$,
        \begin{equation*}
            \widetilde{\L}_I - \sum_{i \in I} \epsilon_i^2 \ge (1 - C_{\ref*{cor:in_err}.3})\Delta_I^2 \size{I},
        \end{equation*}
        where
        \begin{align*}
            C_{\ref*{cor:mix_err}.3} =& (1 + C_{\ref*{lem:re}}) \kmin C_{\ref*{lem:oracle}}^2 \{ 6 C_{\epsilon}^2 \widetilde{C}^{-1} r^{-1} + 2 (C_x^2 + C_{\ref*{cor:mix_err}.3}' \widetilde{C}^{-1}) C_{\mathsf{m}}^{-1} r^{-2} \} \\
            & + \kmin C_{\ref*{lem:oracle}}^2 \{ C_{\epsilon}^2 \widetilde{C}^{-1} r^{-\frac{1}{2}} + 6^{-1} (C_x^2 + C_{\ref*{cor:mix_err}.3}' \widetilde{C}^{-1}) C_{\mathsf{m}}^{-1} (r^{-\frac{3}{2}} + r^{-\frac{1}{2}}) \} \\
&+ (1-r)^{\frac{1}{2}} \kmin^{\frac{1}{2}} C_{\ref*{lem:oracle}} \{12^{\frac{1}{2}} C_{\epsilon} \widetilde{C}^{-\frac{1}{2}} r^{-\frac{1}{2}} + 2 (C_x^2 + C_{\ref*{cor:mix_err}.3}' \widetilde{C}^{-1})^{\frac{1}{2}} C_{\mathsf{m}}^{-\frac{1}{2}} r^{-1} \} \\
&+ C_{u,2} C_x (C_x \sigma_x C_{\theta} + 2 C_{\epsilon}) (6 C_\theta \sigma_x \widetilde{C}^{-1} + 6^{\frac{1}{2}} \widetilde{C}^{-\frac{1}{2}}) .
        \end{align*}
    \end{enumerate}
\end{corollary}

\begin{proof}[Proof of Corollary~\ref{cor:mix_err}]
    All of the results in the three parts of Corollary~\ref{cor:mix_err} follow from the proof of Lemma \ref{lem:rf_err} and the conditions about the change signals, that is, the variations $\Delta_{I}^2 \size{I}$. In the proof, we will also need to analyze the relationship of $\Delta_{I}^2 \size{I}$ and $\Delta_{I, \btheta_R^\circ}^2 \size{I}$.

    \noindent\textit{Part\,(a).} For $I$ such that $\Delta_{I} = 0$ and $\size{I} \ge \delta_{\mathsf{m}}$, there is not changepoint in $I$. So $s_I = s_R = s$, $\btheta_R^\circ = \btheta_I^\circ$, $\Delta_{I, \btheta_R^\circ} = \Delta_{J, \btheta_R^\circ} = 0$. Since $C_{\mathsf{m}}$ is sufficiently large and $r \in (0, 1]$ is a fixed constant, $D_R = D_{I, \btheta_R^\circ} = C_\epsilon$. Therefore,
\begin{align*}
    & \bigabs{\widetilde{\L} - \sum_{i \in I} \epsilon_i^2} \\
    \le & \Bigl\{\frac{6 + 6 C_{\ref*{lem:re}} + r^{\frac{1}{2}}}{6r} \kmin C_{\ref*{lem:oracle}}^2 C_\epsilon^2 + C_{u,2} C_x \sigma_x (C_x \sigma_x C_{\theta} + 2 C_{\epsilon}s^{-\frac{1}{2}}) C_{\theta}\Bigr\} s \log(p \vee n)\\
    = & C_{\ref*{cor:mix_err}.1} s \log(p \vee n),
\end{align*}

\noindent\textit{Part\,(b).} We study the interval $I$ from the set
$$
E_2^-=\{I = (a, b]: \size{I} \ge \delta_{\mathsf{m}}, I \cap \truecps = \{\tau_k^\ast\}, \min(\tau_k^\ast - a, b - \tau_k^\ast) \le \widetilde{C} \Delta_k^{-2} s \log(p \vee n)\}.
$$

Since $I$ contains only one changepoint, we have $s_R \le s_I \le 2 s$, $\Delta_{R, \infty}^2 \le \Delta_{I, \infty, \btheta_R^\circ}^2 \le 2 \sigma_x^2 C_{\theta}^2 s$.

For the average variation terms, by definition, we have $\Delta_{R}^2 \size{R} \le \Delta_{I}^2 \size{I} \le \widetilde{C} s \log(p \vee n)$. It means that $\Delta_R^2 \le r^{-1} \Delta_I^2 \le r^{-1} C_{\mathsf{m}}^{-1} \widetilde{C}$. Similarly, with some calculation, we can also have $\Delta_{J, \btheta_R^\circ}^2 \size{J} \le \Delta_{I, \btheta_R^\circ}^2 \size{I} \le r^{-1} \Delta_I^2 \size{I} \le r^{-1} \widetilde{C} s \log(p \vee n)$ and $\Delta_{I, \btheta_R^\circ}^2 \le r^{-1} C_{\mathsf{m}}^{-1} \widetilde{C}$.

We now discuss the terms $D_R^2$ and $D_{I,\btheta_R^\circ}^2$.
Since $\size{I} \ge \delta_{\mathsf{m}} \ge C_{\mathsf{m}} s \log(p \vee n)$ and $C_{\mathsf{m}}$ is sufficiently large, we have
\begin{align*}
& \size{I}^{-1} (C_x^2 \Delta_{I, \infty}^2 + C_{\epsilon}^2) \log(p \vee n) \\
\le & \size{I}^{-1} (2 C_x^2 \sigma_x^2 C_{\theta}^2 s + C_{\epsilon}^2) \log(p \vee n) \\
\le & (2 C_x^2 \sigma_x^2 C_{\theta}^2 + C_{\epsilon}^2 s^{-1}) C_{\mathsf{m}}^{-1} \le C_{\epsilon}^2.
\end{align*}
Therefore $D_{I, \btheta_R^\circ}^2 = C_x^2 \Delta_{I, \btheta_R^\circ}^2 + C_{\epsilon}^2$ and $D_{R}^2 = C_x^2 \Delta_{R}^2 + C_{\epsilon}^2$.

Combining Lemma \ref{lem:rf_err}, we have
\begin{align*}
    & \bigabs{\widetilde{\L}_I - \sum_{i\in I} \epsilon_i^2 - \Delta_{I,\btheta_R^\circ}^2 \size{I}} \\
    \le & (1 + C_{\ref*{lem:re}}) r^{-1} \kmin C_{\ref*{lem:oracle}}^2 C_{\epsilon}^2 s_R \log (p \vee n) +  (1 + C_{\ref*{lem:re}}) r^{-1} \kmin C_{\ref*{lem:oracle}}^2 C_x^2 \Delta_{R}^2 s_R \log (p \vee n) \\
    &+ 6^{-1} r^{-\frac{1}{2}} \kmin C_{\ref*{lem:oracle}}^2 C_{\epsilon}^2 s_R \log(p \vee n) + 12^{-1} r^{-\frac{1}{2}} \kmin C_{\ref*{lem:oracle}}^2 C_x^2 (\Delta_{R}^2 + \Delta_{I, \btheta_R^\circ}^2) s_R \log(p \vee n)  \\
    &+ 2 r^{-\frac{1}{2}} (1-r)^{\frac{1}{2}} \kmin^{\frac{1}{2}} C_{\ref*{lem:oracle}} (C_x \Delta_{R} + C_{\epsilon}) \{\Delta_{J,\btheta_R^\circ}^2 \size{J}\}^{\frac{1}{2}} \{s_R \log(p \vee n)\}^{\frac{1}{2}}.\\
    &+ C_{u,2} C_x (C_x \sigma_x C_{\theta} + 2 C_{\epsilon}s_I^{-\frac{1}{2}}) \Bigl[\{C_{\theta} \sigma_x s_I \log(p \vee n)\} \vee \{\Delta_{I,\btheta_R^\circ}^2 \size{I} s_I \log(p \vee n)\}^{\frac{1}{2}}\Bigr] \\
    \le & 2 \{(1 + C_{\ref*{lem:re}}) (r^{-1} C_{\epsilon}^2 + r^{-2} C_x^2 C_{m}^{-1} \widetilde{C}) + 6^{-1} (r^{-\frac{1}{2}} C_{\epsilon}^2 + r^{-\frac{3}{2}} C_x^2 C_{m}^{-1} \widetilde{C})\} \kmin C_{\ref*{lem:oracle}}^2 s \log (p \vee n) \\
    & + 2^{\frac{3}{2}} (r^{-\frac{1}{2}} C_{\epsilon} \widetilde{C}^{\frac{1}{2}} + r^{-1} C_x C_{\mathsf{m}}^{-1} \widetilde{C}) (1 - r)^{\frac{1}{2}} \kmin^{\frac{1}{2}} C_{\ref*{lem:oracle}} s \log(p \vee n) \\
    & + 2 C_{u,2} C_x (C_x \sigma_x C_{\theta} + 2 C_{\epsilon}s_I^{-\frac{1}{2}}) (C_{\theta} \sigma_x + 2^{-\frac{1}{2}} r^{-\frac{1}{2}} \widetilde{C}^{\frac{1}{2}} ) s \log(p \vee n) \\
    \triangleq & C_{\ref*{cor:mix_err}.2}' s \log(p \vee n).
\end{align*}

For $I = (a, b] \in E_2^+ \subseteq E_2^-$, since $\size{I} \ge (\Delta_k^2 \vee 1) C_{\mathsf{snr}} s \log(p \vee n)$ and $C_{\mathsf{snr}}$ is sufficiently large, we have $0 \le \Delta_{I, \btheta}^2 \size{I} - \Delta_{I}^2 \size{I} \le r^{-1} C_{\mathsf{snr}}^{-1} \widetilde{C} s \log(p \vee n)$. Therefore, for $I \in E_2^+$,
\[
    \widetilde{\L}_I - \sum_{i \in I} \epsilon_i^2 - \Delta_I^2 \size{I} \le C_{\ref*{cor:mix_err}.2} s \log(p \vee n),
\]
with $C_{\ref*{cor:mix_err}.2} = C_{\ref*{cor:mix_err}.2}' + r^{-1} C_{\mathsf{snr}}^{-1} \widetilde{C}$.

For $I \in E_2^-$, similarly,
\[
    \widetilde{\L}_I - \sum_{i \in I} \epsilon_i^2 - \Delta_I^2 \size{I} \ge \widetilde{\L}_I - \sum_{i \in I} \epsilon_i^2 - \Delta_{I, \btheta_R^\circ}^2 \size{I} \ge - C_{\ref*{cor:mix_err}.2}' s \log(p \vee n) \ge - C_{\ref*{cor:mix_err}.2} s \log(p \vee n).
\]

\noindent\textit{Part\,(c).}
 For $I$ such that $\size{I} \ge C_{\mathsf{m}} s \log(p \vee n)$ and $\Delta_I^2 \size{I} \ge 2^{-1}\widetilde{C} s \log(p \vee n)$, it also holds that $\Delta_{I, \btheta_R^\circ}^2 \size{I} \ge \Delta_{I}^2 \size{I} \ge 2^{-1}\widetilde{C} s \log(p \vee n)$.

 If $I$ contains only one changepoint, we have $s_I \le 2 s$ and $s_R \le 2 s$. Therefore
it holds that $\size{I} \ge 2^{-1} C_{\mathsf{m}} s_{I} \log(p \vee n)$ and $\size{R} \ge 2^{-1} C_{\mathsf{m}} r s_{R} \log(p \vee n)$. If $I$ contains more changepoints, by Conditions \ref{cond:change} and \ref{cond:parameter} and provided that $C_{\mathsf{snr}}$ is sufficiently large, it still holds that $\size{I} \ge 2^{-1} C_{\mathsf{m}} s_{I} \log(p \vee n)$ and $\size{R} \ge 2^{-1} C_{\mathsf{m}} r s_{R} \log(p \vee n)$.
Following similar arguments in the proof of Corollary~\ref{cor:in_err}, we can obtain that $\Delta_{I}^2 \size{I} \ge 6^{-1} \widetilde{C} s_I \log(p \vee n)\ge 6^{-1}\widetilde{C} s_R \log(p \vee n)$.
Therefore we have $s_R \log(p \vee n) \le \frac{2\size{R}}{C_{\mathsf{m}} r}$ and $s_R \log(p \vee n) \le \frac{2\size{R}}{C_{\mathsf{m}} r}$ and $s_R \log(p \vee n) \le s_I \log(p \vee n) \le 6 \widetilde{C}^{-1} \Delta_{I}^2 \size{I} \le 6 \widetilde{C}^{-1} \Delta_{I, \btheta_R^\circ}^2 \size{I}$.

Then we consider the two terms $D_{I, \btheta_R^\circ}^2$ and $D_R^2$.
By the above justifications, we can obtain that
\begin{align*}
    & \size{I}^{-1}(C_x^2\Delta_{I, \infty, \btheta_R^\circ}^2 + C_{\epsilon}^2) \log(p \vee n)\\
    \le & \size{I}^{-1}(C_x^2 \sigma_x^2 C_{\theta}^2 s_I + C_{\epsilon}^2) \log(p \vee n)\\
    \le & 6 \widetilde{C}^{-1} (C_x^2 \sigma_x^2 C_{\theta}^2 + C_{\epsilon}^2 s_I^{-1}) \Delta_{I, \btheta_R^\circ}^2.
\end{align*}
It follows that
$$D_{I, \btheta_R^\circ}^2 \le C_{\epsilon}^2 + \{C_x^2 + 6 \widetilde{C}^{-1} (C_x^2 \sigma_x^2 C_{\theta}^2 + C_{\epsilon}^2 s_I^{-1})\} \Delta_{I, \btheta_R^\circ}^2.$$
Similarly,
$$D_{R}^2 \le C_{\epsilon}^2 + \{C_x^2 + 6 \widetilde{C}^{-1} (C_x^2 \sigma_x^2 C_{\theta}^2 + C_{\epsilon}^2 s_R^{-1})\} \Delta_{R}^2.$$

Denote $C_{\ref*{cor:mix_err}.3}' = 6(C_x^2 \sigma_x^2 C_{\theta}^2 + C_{\epsilon}^2)$. By the above inequalities for the change variation terms, we have:
\begin{align*}
&\bigabs{\widetilde{\L}_I - \sum_{i\in I} \epsilon_i^2 - \Delta_{I,\btheta_R^\circ}^2 \size{I}}\\
\le & \Delta_{I, \btheta_R^\circ}^2 \size{I} \Big[ (1 + C_{\ref*{lem:re}}) \kmin C_{\ref*{lem:oracle}}^2 \{ 6 C_{\epsilon}^2 \widetilde{C}^{-1} r^{-1} + 2 (C_x^2 + C_{\ref*{cor:mix_err}.3}' \widetilde{C}^{-1}) C_{\mathsf{m}}^{-1} r^{-2} \} \\
& ~~~~~~~~~~~~+ \kmin C_{\ref*{lem:oracle}}^2 \{ C_{\epsilon}^2 \widetilde{C}^{-1} r^{-\frac{1}{2}} + 6^{-1} (C_x^2 + C_{\ref*{cor:mix_err}.3}' \widetilde{C}^{-1}) C_{\mathsf{m}}^{-1} (r^{-\frac{3}{2}} + r^{-\frac{1}{2}}) \} \\
& ~~~~~~~~~~~~+ (1-r)^{\frac{1}{2}} \kmin^{\frac{1}{2}} C_{\ref*{lem:oracle}} \{12^{\frac{1}{2}} C_{\epsilon} \widetilde{C}^{-\frac{1}{2}} r^{-\frac{1}{2}} + 2 (C_x^2 + C_{\ref*{cor:mix_err}.3}' \widetilde{C}^{-1})^{\frac{1}{2}} C_{\mathsf{m}}^{-\frac{1}{2}} r^{-1} \} \\
& ~~~~~~~~~~~~+ C_{u,2} C_x (C_x \sigma_x C_{\theta} + 2 C_{\epsilon}) (6 C_\theta \sigma_x \widetilde{C}^{-1} + 6^{\frac{1}{2}} \widetilde{C}^{-\frac{1}{2}}) \Big]\\
= & C_{\ref*{cor:mix_err}.3} \Delta_{I, \btheta_R^\circ}^2 \size{I}.
\end{align*}
Note that $C_{\mathsf{m}}$ is sufficiently large, we can set $\widetilde{C}$ large enough to make $C_{\ref*{cor:mix_err}.3}$ is sufficiently small such that $C_{\ref*{cor:mix_err}.3} < 1$.
Finally, we have
$$\widetilde{\L}_{I} - \sum_{i \in I} \epsilon_i^2 \ge (1 - C_{\ref*{cor:mix_err}.3}) \Delta_{I, \btheta_R^\circ}^2 \size{I} \ge (1 - C_{\ref*{cor:mix_err}.3}) \Delta_{I}^2 \size{I}.$$
\end{proof}

\subsection{Final Result}

Here, we will finally prove Theorem~\ref{thm:localization_error}.
Recall that in the proof of Lemma~\ref{lem:loc_err_g}, we identify the constant $\widetilde{C}$ by solving the inequality (\ref{equ:ieq_to_solve_Ctilde}). It concludes that any $\widetilde{C}$ with $\widetilde{C} \ge 4 (1 - C_{\ref*{lem:loc_err_g}.3})^{-1} (3 C_{\ref*{lem:loc_err_g}.1} + 10 C_{\ref*{lem:loc_err_g}.2})$ suffices for Lemma~\ref{lem:loc_err_g}.
Since $C_{\mathsf{m}}$ can be sufficiently large and $\widetilde{C} C_{\mathsf{m}}^{-1}$ can be sufficiently small, by the results in \ref{cor:in_err} and Corollary \ref{cor:mix_err}, we can verify that the feasible $\widetilde{C}$ exists for the inequality $\widetilde{C} \ge 4 (1 - C_{\ref*{lem:loc_err_g}.3})^{-1} (3 C_{\ref*{lem:loc_err_g}.1} + 10 C_{\ref*{lem:loc_err_g}.2})$, both for the full model-fitting (Corollary~\ref{cor:in_err}) and the Reliever (Corollary~\ref{cor:mix_err}).
Furthermore, the $\widetilde{C}$ will only depends on the constants in the expressions of $\{C_{\ref*{cor:in_err}.j}\}_{j=1}^3$ and $\{C_{\ref*{cor:mix_err}.j}\}_{j=1}^3$.
Thus it is independent of $(n, p, s, K^\ast)$.

In summary, the localization error bound of $\{\hat{\tau}_k\}$ in Theorem \ref{thm:localization_error} follows from Lemma \ref{lem:loc_err_g}, Corollary \ref{cor:in_err} and Corollary \ref{cor:mix_err}. And provided the localization error bound, the error bound of the parameter estimation follows from Lemma \ref{lem:oracle}.

\section{Proof of Corollary~\ref{thm:localization_error_temporal}}\label{sec:proof_under_dependence}

Firstly, note that under Condition~\ref{cond:change_temporal}, Lemma~\ref{lem:loc_err_g} still holds if we replace $\delta_{\mathrm{m}} = C_{\mathrm{m}} s \log(p \vee n)$ by $\delta_{\mathrm{m}} = C_{\mathrm{m}} \{s \log(p \vee n)\}^{2/\gamma - 1}$.
The proof will mainly rely on the replacement of Bernstein's inequality for independent data (Lemma~\ref{lem:bernstein}) by the functionally dependent one. The following lemma is a rearranged result of the functionally dependent Bernstein's inequality (Theorem 33 in \cite{xu2022change}) using the notations of this current manuscript. The proof can be found in the Supplementary Material of \cite{xu2022change}.

\begin{lemma}[Bernstein's inequality under dependence]\label{lem:bernstein_temporal}
    Let $\set{X_i}_{i=1}^n$ be a consecutive subsequence of an infinite functional dependent sequence $\set{X_t}_{t \in \Zbb}$ with dependence function $\set{g_t^X}$. Assume that the sequence is with zero means, exponentially decayed cumulative function dependence measures:
    \begin{equation*}
        \sup_{m \ge 0} \exp(c m^{\zeta_1}) \Delta_{m, 2}^X \le C_1,
    \end{equation*}
    and exponentially decayed tails $\sup_{1 \le i \le m} \norm{X_i}_{\Psi_{\zeta_2}} \le C_2$ for some positive constants $C_1$ and $C_2$ and $\zeta = (\zeta_1^{-1} + \zeta_2^{-1})^{-1} \in (0, 1]$. Then we have for $t > \sqrt{n}$,
    \begin{equation*}
        \Pbb\Bigset{\Bigabs{\sum_{i=1}^n X_i} > t} \le 2 \exp\Bigl\{-c_b \Bigl(t^{\zeta} \wedge \frac{t^2}{n}\Bigr)\Bigr\}.
    \end{equation*}
    By choosing $t = c_b^{-1} C_u \{(n A_{n, p, s})^{\frac{1}{2}} \vee (A_{n, p, s})^{\frac{1}{\zeta}} \}$ with $C_u \ge c_b$, we have
    \begin{equation*}
        \Pbb\Bigset{\Bigabs{\sum_{i=1}^n X_i} > t} \le  2 \exp\{-C_{u} A_{n, p, s}\}.
    \end{equation*}
    When $n \ge A_{n, p, s}^{\frac{2}{\zeta} - 1}$, we have
    \begin{equation*}
        \Pbb\Bigset{\Bigabs{\sum_{i=1}^n X_i} > c_b^{-1} C_u (n A_{n, p, s})^{\frac{1}{2}}} \le  2 \exp\{-C_{u} A_{n, p, s}\}.
    \end{equation*}
    Here $A_{n, p, s}$ is a diverging sequence. For instance, $A_{n, p, s} = \log(p \vee n)$ and $A_{n, p, s} = s \log(p \vee n)$.
\end{lemma}

By applying Lemma~\ref{lem:bernstein_temporal}, we can obtain parallel results to those in Section~\ref{subsec:pre_required}.
For example, under Condition~\ref{cond:dist_temporal}\,(a), we have the restricted eigenvalue conditions in Lemma~\ref{lem:re} hold uniformly for intervals $I$ such that $\size{I} \gtrsim \{s_I \log(p \vee n)\}^{\frac{2}{\zeta} - 1}$.

\begin{lemma}[Uniform restricted eigenvalue condition under dependence]\label{lem:re_temporal}
    Assume Condition~\ref{cond:dist_temporal}\,(a) holds.
    For any interval $I \subset (0,n]$, denote $\hat{\Sigma}_I = \size{I}^{-1} \sum_{i \in I} \xbf_i \xbf_i^\top$.
    Uniformly for all intervals $I \subset (0, n]$ such that $\size{I} \ge \{s_I \log(p \vee n)\}^{\frac{2}{\zeta} - 1}$, with probability at least $1 - \exp\{-C_{u,1} \log(p \vee n)\}$,
    \begin{equation*}
        \vbf^\top \hat{\Sigma}_{I} \vbf \ge \norm{\vbf}_{\Sigma}^2 - C_{u,2} C_x^2 \sigma_x^2 \Bigl\{\frac{s_I \log(p \vee n)}{\size{I}}\Bigr\}^{\frac{1}{2}} \Bigl(\norm{\vbf}_2^2 + \frac{1}{s_I}\norm{\vbf}_1^2\Bigr),\, \forall \vbf \in \Rbb^p,
    \end{equation*}
    and
    \begin{equation*}
        \vbf^\top \hat{\Sigma}_{I} \vbf \le \norm{\vbf}_{\Sigma}^2 + C_{u,2} C_x^2 \sigma_x^2 \Bigl\{\frac{s_I \log(p \vee n)}{\size{I}}\Bigr\}^{\frac{1}{2}} \Bigl(\norm{\vbf}_2^2 + \frac{1}{s_I}\norm{\vbf}_1^2\Bigr),\, \forall \vbf \in \Rbb^p,
    \end{equation*}
    where $C_{u,1}$ and $C_{u,2}$ are two universal constants.
    If additionally $\size{I} \ge C_{\mathsf{re}} s_I \log(p \vee n)$ with a sufficiently large constant $C_{\mathsf{re}} \ge 1 \vee ({34 C_{u,2}C_x^2 \sigma_x^2}/{\kmin})^2$, then for any support set $\Scal \in [p]$ with $\size{\Scal} \le s_I$ and $\vbf \in \Rbb^p$ such that $\norm{\vbf_{\Scal^\complement}}_1 \le 3 \norm{\vbf_{\Scal}}_1$, under the same event above,
    \begin{equation*}
        \frac{1}{2} \norm{\vbf}_{\Sigma}^2 \le (1 - C_{\ref*{lem:re_temporal}}) \norm{\vbf}_{\Sigma}^2 \le \vbf^\top \hat{\Sigma}_{I} \vbf \le (1 + C_{\ref*{lem:re_temporal}}) \norm{\vbf}_{\Sigma}^2 \le \frac{3}{2} \norm{\vbf}_{\Sigma}^2,
    \end{equation*}
    \begin{equation*}
        \frac{\kmin}{2} \norm{\vbf}_2^2 \le (1 - C_{\ref*{lem:re_temporal}}) \kmin \norm{\vbf}_{2}^2 \le \vbf^\top \hat{\Sigma}_{I} \vbf \le  (\sigma_x^2 + C_{\ref*{lem:re_temporal}} \kmin) \norm{\vbf}_{2}^2 \le \Bigl(\sigma_x^2 + \frac{\kmin}{2}\Bigr) \norm{\vbf}_2^2,
    \end{equation*}
    where $C_{\ref*{lem:re_temporal}} = 17 C_{u,2} C_x^2 \sigma_x^2 \kappa^{-1} C_{\mathsf{re}}^{-\frac{1}{2}}$.
\end{lemma}

The proof of Lemma~\ref{lem:re_temporal} is the same as Lemma~\ref{lem:re}'s by replacing the Bernstein's inequality with the functional dependence version.

Note that under Condition~\ref{cond:change_temporal}\,(b), we have for any $I, R \in \I$, $\{\Delta_{I, q}, \Delta_{I, q, \btheta_{R}^\circ}\}_{q=2,4,\infty}$ are all bounded as $O(1)$. The boundness ensures that we can directly apply Lemma~\ref{lem:bernstein_temporal} in the following lemma.

\begin{lemma}\label{lem:xxbeinf_temporal}
    Assume that Conditions~\ref{cond:change_temporal}\,(b), \ref{cond:dist_temporal} hold.
    For intervals $I, R \subset (0, n]$ with $\size{I} \ge \{\log(p \vee n)\}^{\frac{2}{\zeta} - 1}$ and a fixed $\btheta=\btheta_{R}^\circ$, with probability at least $1 - n^{-2}\exp\{-C_{u,1} \log(p \vee n)\}$,
    \begin{align*}
        & \Bignorm{\sum_{i \in I}\{ (\xbf_i \xbf_i^\top - \Sigma) (\btheta_i^\circ - \btheta) + \epsilon_i \xbf_i\}}_{\infty} \le C_{\ref*{lem:xxbeinf_temporal}} \{ \size{I} \log(p \vee n)\}^{\frac{1}{2}},
    \end{align*}
    for some positive constant $C_{\ref*{lem:xxbeinf_temporal}} > 0$.
\end{lemma}

By replacing Lemmas~\ref{lem:re}--\ref{lem:xxbeinf} with Lemmas~\ref{lem:re_temporal}--\ref{lem:xxbeinf_temporal}, we have the following oracle inequalities for lasso with functionally dependent data.

\begin{lemma}[Oracle inequalities for the parametric estimates]\label{lem:oracle_temporal}
    Assume that Conditions~\ref{cond:change_temporal}, \ref{cond:parameter_temporal} and \ref{cond:dist_temporal} hold.
    We have with probability at least $1 - 2 \exp\{-C_{u,1}\log(p \vee n)\}$, uniformly for any interval $I \subset (0, n]$ with $\size{I} \ge \{C_{\mathsf{re}} s_I \log(p \vee n)\} \vee [\{s_I \log(p \vee n)\}^{\frac{2}{\zeta} - 1}]$, provided that $\lambda_I = C_{\lambda} \{\size{I} \log(p \vee n)\}^{\frac{1}{2}}$ for some constant $C_{\lambda} > 0$, the solution $\htheta_I$ satisfies that
    \begin{equation*}
        \norm{\htheta_I - \btheta_I^\circ}_2 \le \kmin^{-\frac{1}{2}}\norm{\htheta_I - \btheta_I^\circ}_\Sigma \le C_{\ref*{lem:oracle_temporal}} \Bigl\{\frac{s_I \log (p \vee n)}{\size{I}}\Bigr\}^{\frac{1}{2}},
    \end{equation*}
    \begin{equation*}
        \norm{\htheta_I - \btheta_I^\circ}_1 \le C_{\ref*{lem:oracle_temporal}} s_I \Bigl\{\frac{ \log (p \vee n)}{\size{I}}\Bigr\}^{\frac{1}{2}},
    \end{equation*}
    for some constant $C_{\ref*{lem:oracle_temporal}} > 0$. Furthermore, let $\Scal_I$ be the support set of $\btheta_I^\circ$, we have $\norm{\htheta_{I,\Scal_I^\complement} - \btheta_{I,\Scal_I^\complement}^\circ}_1 \le 3 \norm{\htheta_{I,\Scal_I} - \btheta_{I,\Scal_I}^\circ}_1$.
\end{lemma}

Lastly, we present lemmas that parallel Lemmas~\ref{lem:xb2}--\ref{lem:xbe}.

\begin{lemma}\label{lem:xb2_temporal}
    Assume that Condition~\ref{cond:dist_temporal} and the condition that $K = O(1)$ in Condition~\ref{cond:change_temporal} hold.
    For interval $I \subset (0, n]$ and a fixed $\btheta$,
    with probability at least $1 - n^{-2}\exp\{-C_{u,1} \log(p \vee n)\}$, uniformly for any sub-interval $I \subset (0, n]$,
    \begin{align*}
        \Bigabs{\sum_{i \in I} \{\xbf_i^\top(\btheta_i^\circ - \btheta)\}^2 - \Delta_{I, \btheta}^2 \size{I}} \le &  C_{u,2} C_x^2 \biggl[\Delta_{I,4,\btheta}^4 \vee \frac{\Delta_{I,\infty,\btheta}^4 \{\log(p \vee n)\}^{\frac{2}{\zeta} - 1}}{\size{I}}\biggr]^{\frac{1}{2}} \{\size{I} \log(p \vee n)\}^{\frac{1}{2}},
    \end{align*}
    where $C_{u,1}$ and $C_{u,2}$ are two universal constants.
\end{lemma}

\begin{lemma}\label{lem:xbe_temporal}
    Assume that Condition~\ref{cond:dist_temporal} and the condition that $K = O(1)$ in Condition~\ref{cond:change_temporal} hold.
    For interval $I \subset (0, n]$ and a fixed $\btheta$,
    with probability at least $1 - n^{-2}\exp\{-C_{u,1} \log(p \vee n)\}$,
    \begin{equation*}
        \Bigabs{\sum_{i \in I} \xbf_i^\top(\btheta_i^\circ - \btheta) \epsilon_i} \le C_{u,2} C_x C_{\epsilon} \biggl[\Delta_{I, \btheta}^2 \vee \frac{\Delta_{I,\infty, \btheta}^2 \{\log(p \vee n)\}^{\frac{2}{\zeta} - 1}}{\size{I}}\biggr]^{\frac{1}{2}} \{\size{I} \log(p \vee n)\}^{\frac{1}{2}},
    \end{equation*}
    where $C_{u,1}$ and $C_{u,2}$ are two universal constants.
\end{lemma}

For Lemmas~\ref{lem:re_temporal}--\ref{lem:oracle_temporal}, only the Bernstein's inequality (Lemma~\ref{lem:bernstein}) needs to be replaced by Lemma~\ref{lem:bernstein_temporal} in the proofs. However, for Lemmas~\ref{lem:xb2}--\ref{lem:xbe}, directly applying Lemma~\ref{lem:bernstein_temporal} will induce upper bound with order $\{\size{I} \log(p \vee n)\}^{\frac{1}{2}}$, which drops the multiplier $\Delta_{I, 4, \btheta}^4$ and $\Delta_{I,\btheta}^2$ in the upper bound. It is because in Lemma~\ref{lem:bernstein_temporal}, the tail bound is scaled with $n^{-1}$ in the term $\frac{t^2}{n}$ rather than $\frac{t^2}{\Var(\sum_{i =1}^n X_i)}$ or $\frac{t^2}{\sum_{i=1}^n \norm{X_i}_{\Psi_{\zeta_2}}^2}$.

To solve this issue, we need the condition that $K = O(1)$ which is also assumed in \cite{xu2022change} for the functional dependence setting. For an interval $I$, let $\set{I_j}_{j=1}^{a}$ be the sub-segments of $I$ divided by the true changepoints $\truecps \cap I$. Then we can apply Lemma~\ref{lem:bernstein_temporal} to $\{ \sum_{i \in I_j} \{\xbf_i^\top(\btheta_{I_j}^\circ - \btheta)\}^2 - \Delta_{I_j, \btheta}^2 \size{I_j} \}_{j=1}^a$ and $\{ \sum_{i \in I_j} \xbf_i^\top(\btheta_{I_j}^\circ - \btheta) \epsilon_i \}_{j=1}^a$ and then taking the union bound will lead to the deviation bounds in Lemmas~\ref{lem:xb2}--\ref{lem:xbe}.

Finally, by replacing the deviation bounds in Section~\ref{subsec:pre_required} with the above functionally dependent bounds, we can obtain the results in Corollary~\ref{thm:localization_error_temporal}.

\section{Additional Numerical Results}\label{sec:extra_simulation}

\subsection{Single-Changepoint Scenario (Section \ref{sec:comp_ts})}

The data in the single changepoint scenario in Section \ref{sec:comp_ts} are generated from the following model,
\begin{equation*}
    y_i = \xbf_i^\top \btheta_1 \id\{i \le \tau^\ast\} + \xbf_i^\top \btheta_2 \id\{i > \tau^\ast\} + \epsilon_i,\, i = 1,\dots, n,
\end{equation*}
where $\{\epsilon_i\}$ and $\{\xbf_i\}$ are drawn independently satisfying $\epsilon_i \sim \Ncal(0, 1)$ and $\xbf_i \sim \Ncal_p(0, \Sigma)$. Here $\Sigma$ is a $p \times p$ matrix with elements $\Sigma_{ij} = 1/2^{|i- j|}$. The regression parameters of the model are set to be
$$\btheta_1 = (1/3,1/3,1/3,1/3,0,\dots,0)_{p \times 1}^\top$$
and
$$\btheta_2 = (\bm{0}_{1 \times 4}, 1/3, 1/3, 1/3, 1/3,0,\dots,0)_{p \times 1}^\top.$$
We set $n = 1200$ and the true changepoint $\tau^\ast = 120$.

\subsection{Extended Comparison with the Two-Step Method}

In Figures~\ref{fig:ts_lasso}--\ref{fig:ts_nmcd}, we provide the complementary numerical results of the multiple changepoint scenarios in Section \ref{sec:comp_ts} with $n$ varying from $n=300$ to $n=1200$ under both the high-dimensional linear model and the univariate nonparametric model. In the \Relief{} method, we set $r=0.9$ as recommended. The \Relief{} provides almost comparable performance with the original algorithm in all the cases.

\begin{figure}[!ht]
    \centering
    \includegraphics[width=0.99\linewidth]{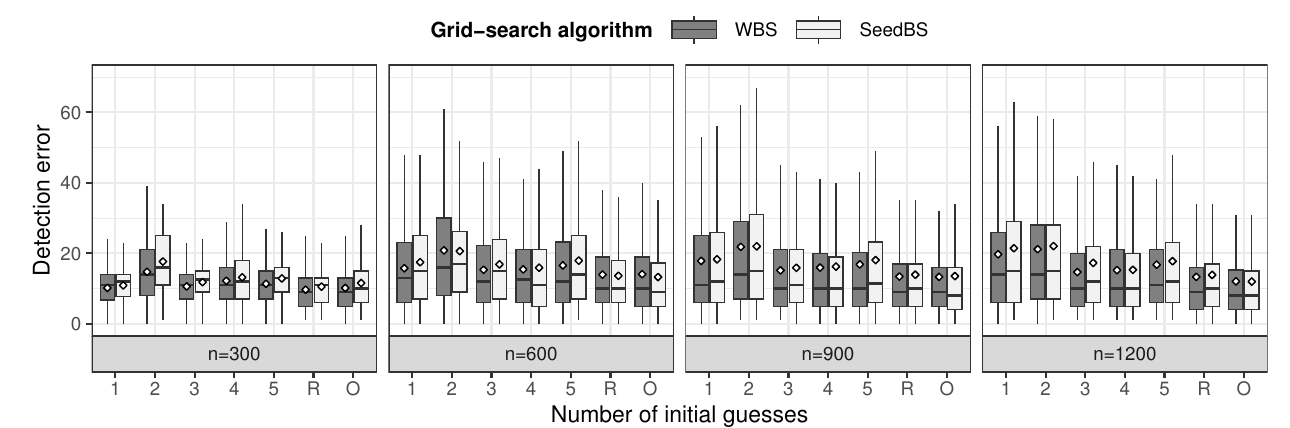}
\caption{Comparison of the modified two-step approach with multiple initial guesses ($1$-$5$), the \Relief{} method (R) and the original full model-fitting (O), under the setting in Section \ref{subsec:hdlinear}.}
\label{fig:ts_lasso}
\end{figure}

\begin{figure}[!ht]
    \centering
    \includegraphics[width=0.99\linewidth]{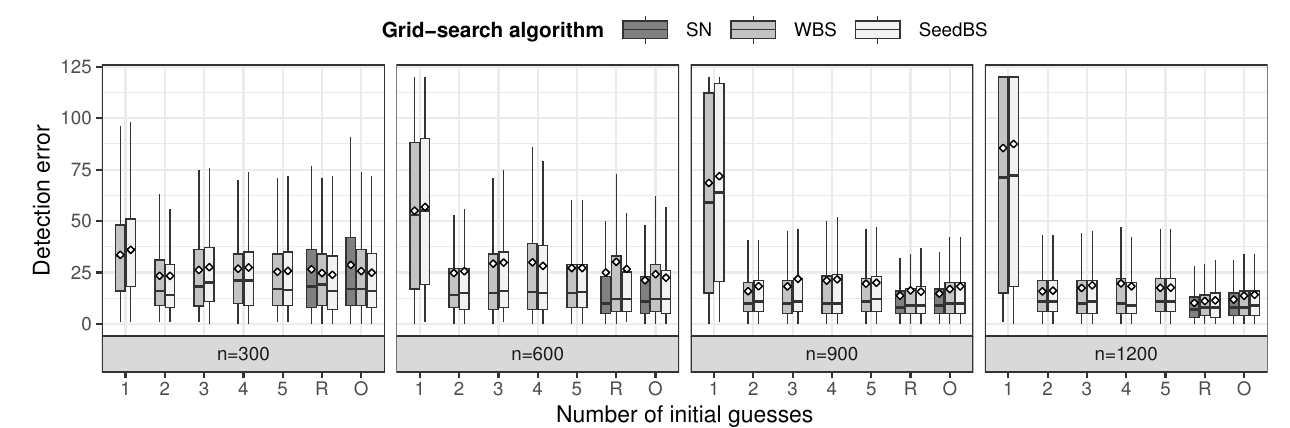}
\caption{Comparison of the modified two-step approach with multiple initial guesses ($1$-$5$), the \Relief{} method (R) and the original full model-fitting (O), under the setting in Section \ref{subsec:nmcd}.}
\label{fig:ts_nmcd}
\end{figure}

\subsection{Impact of a Small Minimal-Spacing Parameter}\label{sec:small_delta_m}

For numerical stability, we choose $\delta_{\mathrm{m}} = 20$ in all simulations in the main text. Here we rerun experiments with much smaller values. In high-dimensional linear models, setting $\delta_{\mathrm{m}} = 3$ (the smallest value accepted by \textsf{glmnet} without runtime errors) still works effectively---see Figure \ref{fig:lasso_small_dm_combined}. The results closely match those with larger $\delta_{\mathrm{m}}$, confirming that Reliever is robust to this choice.

\begin{figure}[!ht]
    \centering
    \begin{subfigure}{1\textwidth}
        \centering
        \includegraphics[width=0.99\linewidth]{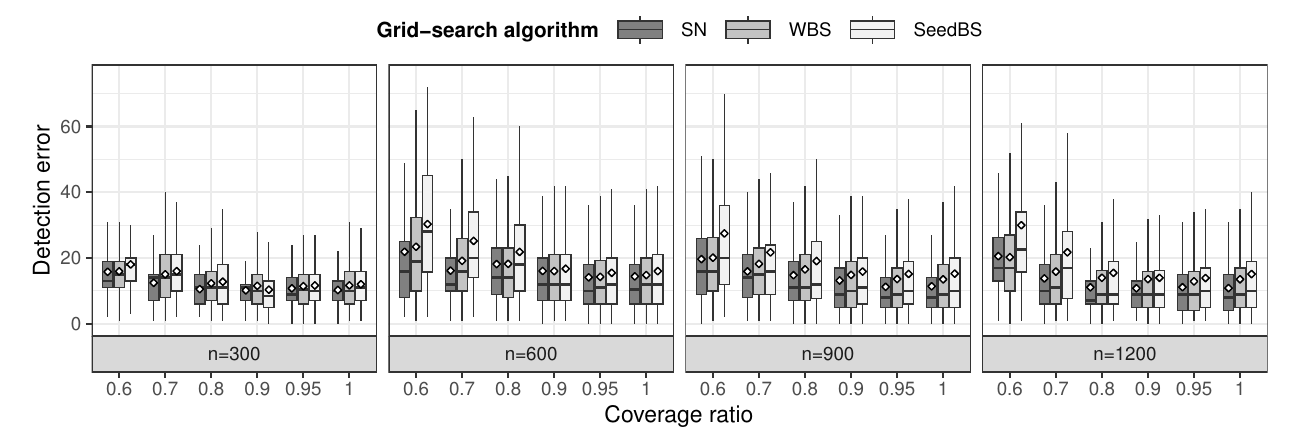}
    \end{subfigure}
    \begin{subfigure}{1\textwidth}
        \centering
        \includegraphics[width=0.99\linewidth]{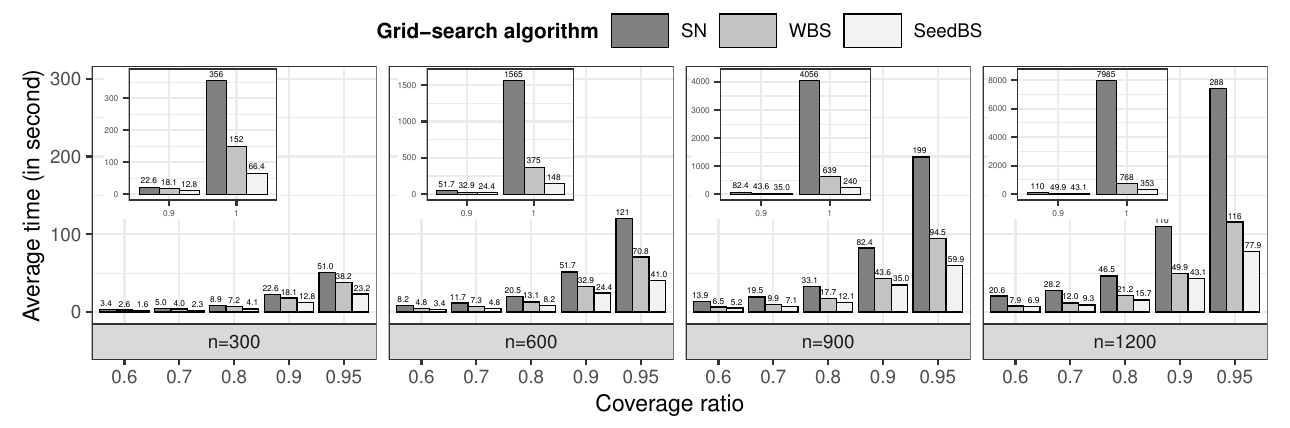}
    \end{subfigure}
    \caption{Performance of grid-search algorithms with Reliever under a high-dimensional linear model using a small minimum spacing $\delta_{\mathrm{m}} = 3$.}
    \label{fig:lasso_small_dm_combined}
\end{figure}

\subsection{High-Dimensional Linear Model with Temporal Dependence}\label{sec:exp_hd_temporal}

We conduct simulations for the high-dimensional linear setting of Section~\ref{subsec:hdlinear} under an AR(1) dependence ($\rho=0.3$) in both covariates and noise, following \citet{xu2022change}: $$\mathbf{x}_i = \rho \mathbf{x}_{i-1} + \sqrt{1 - \rho^2} \widetilde{\mathbf{x}}_{i},\quad\epsilon_i = \rho \epsilon_{i-1} + \sqrt{1 - \rho^2} \widetilde{\epsilon_i},$$ with $\widetilde{\mathbf{x}}_i \stackrel{i.i.d.}{\sim} \mathcal{N}(0, \mathbf{I}_p)$ and $\{\widetilde{\epsilon}_i\} \stackrel{i.i.d.}{\sim} \mathcal{N}(0, 1)$. Figure \ref{fig:lasso_temporal} shows that, even under dependence, Reliever matches the accuracy of the original algorithms while greatly reducing runtime.

\begin{figure}[!ht]
    \centering
    \begin{subfigure}{1\textwidth}
        \centering
        \includegraphics[width=0.99\linewidth]{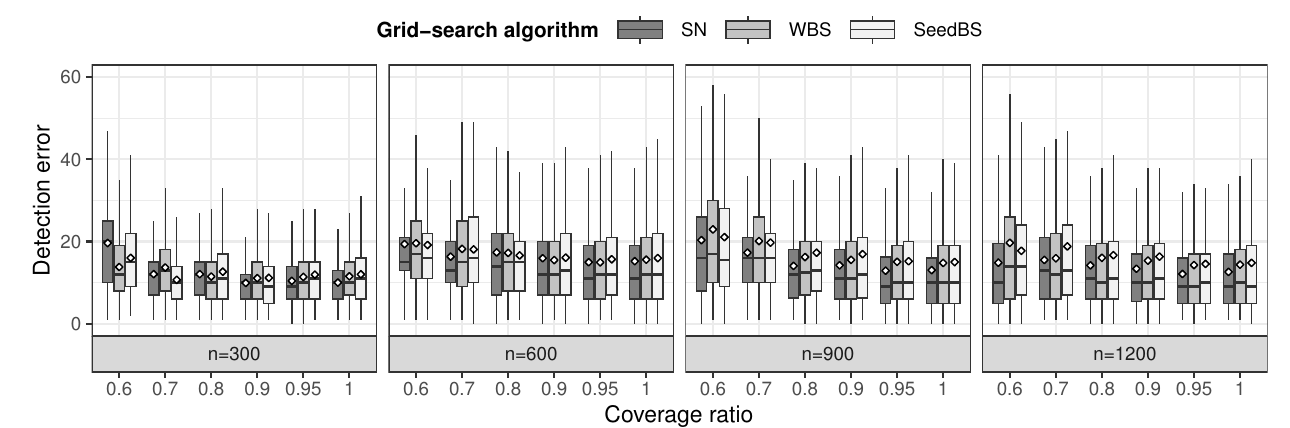}
    \end{subfigure}
    \begin{subfigure}{1\textwidth}
        \centering
        \includegraphics[width=0.99\linewidth]{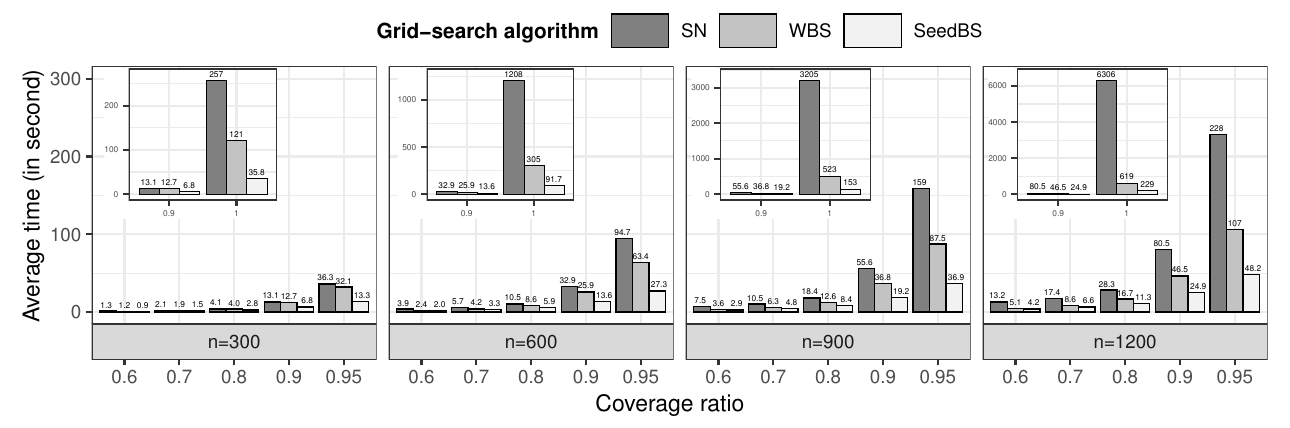}
    \end{subfigure}
    \caption{Performance under temporal dependence in high-dimensional linear models.}
    \label{fig:lasso_temporal}
\end{figure}

\subsection{A Priori Estimation of Model-Fitting Time}\label{sec:estimated_time}

In settings where the grid-search intervals can be anticipated, total fitting time for Reliever can be approximated as follows:
\begin{itemize}[leftmargin=0em]
    \item \textit{Build relief layers.} Follow Definition~\ref{itv_construction}, form the layers $\mathcal{R}_k$ so that $\mathcal{R} = \bigcup_{k=0}^{\lfloor \log_{b} \{(1 + w) n/\delta_{\mathrm{m}}\} \rfloor} \mathcal{R}_k$.
    \item \textit{Sample a few intervals.} From each layer $\mathcal{R}_k$, draw $c$ intervals (with replacement if $|\mathcal{R}_k| < c$).
    \item \textit{Measure single-fit times.} Fit the model on those samples to obtain $\hat{\textsf{time}}_k$, the average time per fit in layer $k$.
    \item \textit{Count expected visits.} Estimate $\textsf{num}_k$, the number of layer-$k$ intervals the grid-search algorithm will actually visit.
    \vspace{-1em}
    \begin{itemize}
        \item For OP or SN, $\textsf{num}_k$ follows directly from their deterministic schedules.
        \item For WBS or SeedBS, generate the wild/seeded intervals $\mathcal{W}$ and set $\textsf{num}_k$ as the size of
        $\{R \in \mathcal{R}_k: \exists (a, b] \in \mathcal{W} \,\&\, t \in (a, b], R \mbox{ is the relief interval of } (a, t] \mbox{ or } (t, b]\}$.
        \item For PELT, use the OP count as an upper bound when pruning is inactive.
    \end{itemize}
    \item \textit{Total time estimate.} Compute $\sum_{k} \hat{\textsf{time}}_k \times \textsf{num}_k$.
\end{itemize}

\begin{table}[htb]
\setlength\tabcolsep{0pt}
\centering
\begin{threeparttable}
\caption{\small Estimated and realized model-fitting times (in seconds) for the changepoint detection in the high-dimensional linear model in Section~4.1 with the grid search methods including SN, WBS and SeedBS, under a single run with $n = 1200$.}
\label{tab:time_est}
\tabcolsep=0.44em
\begin{tabular*}{.97\linewidth}{cc@{\extracolsep{\fill}}*{10}{r}}
\toprule
Method & $r$ & 0.5 & 0.6 & 0.7 & 0.8 & 0.9 & 0.95 & 0.97 & 0.98 & 1.0 \\
\midrule
\multirow{2}{*}{SN} & Estimate & 3.1 & 6.8 & 11.2 & 25.3 & 94.1 & 276.6 & 591.1 & 1000.0 & 7011.1 \\
& Realization  & 3.0 & 6.1 & 9.6 & 22.3 & 80.7 & 267.2 & 600.9 & 979.6 & 7617.3 \\
\midrule
\multirow{2}{*}{WBS} & Estimate & 2.7 & 5.0 & 10.3 & 18.0 & 55.9 & 131.4 & 240.5 & 313.4 & 889.7 \\
& Realization  & 2.9 & 5.1 & 9.1 & 19 & 56.4 & 131.5 & 241.8 & 299.4 & 885.8 \\
\midrule
\multirow{2}{*}{SeedBS} & Estimate & 2.0 & 4.2 & 6.9 & 10.2 & 23.5 & 48.1 & 74.6 & 96.8 & 221.2 \\
& Realization & 2.6 & 3.9 & 6.5 & 11.9 & 27.6 & 52.5 & 76.9 & 108.3 & 235.5 \\
\bottomrule
\end{tabular*}
\end{threeparttable}
\end{table}

Using $c=1$, we applied this recipe to SN, WBS, and SeedBS on a high-dimensional linear model; Table~\ref{tab:time_est} shows the estimated versus actual fitting times, which match closely. Users can therefore select $r$ to match a target runtime with reasonable confidence.

\bibliography{ref}

\end{document}